\newtheorem{remark}{Remark}
\newtheorem{coro}{\textbf{Corollary}}
\newtheorem{theorem}{\textbf{Theorem}}
\newcommand{\expect}[1]{\mathbb{E}\big\{#1\big\}}
\DeclareMathOperator*{\argmax}{arg\,\max}
\DeclareMathOperator*{\argmin}{arg\,\min}
\DeclareMathOperator*{\Ex}{\mathbb{E}}
\begin{document}
%

\title{Timely-Throughput Optimal Scheduling with Prediction}

\author{\IEEEauthorblockN{Kun Chen and Longbo Huang}\\
\IEEEauthorblockA{IIIS, Tsinghua University\\ 
chenkun14@mails.tsinghua.edu.cn, longbohuang@tsinghua.edu.cn}
\thanks{Kun Chen and Longbo Huang are with the Institute for Interdisciplinary Information Sciences. 
This work was supported in part by the National Natural Science Foundation of China Grants 61672316, 61303195, the Tsinghua Initiative Research Grant, and the China Youth 1000-talent grant.}
}


%


\maketitle 

\begin{abstract} 
Motivated by the increasing importance of providing delay-guaranteed services in general computing and communication systems, and the recent wide adoption of learning and prediction in network control, in this work, we consider a general stochastic single-server multi-user system and investigate the fundamental benefit of predictive scheduling in improving timely-throughput, being the rate of packets that are delivered to destinations before their deadlines. 
%
%
By adopting an error rate-based prediction model, we first derive a Markov decision process (MDP) solution to optimize the timely-throughput objective subject to an average resource consumption constraint. 
Based on a packet-level decomposition of the MDP, we explicitly characterize the optimal scheduling policy and rigorously quantify the timely-throughput improvement due to predictive-service, which scales as $\Theta(p\left[C_{1}\frac{(a-a_{\max}q)}{p-q}\rho^{\tau}+C_{2}(1-\frac{1}{p})\right](1-\rho^{D}))$, where $a, a_{\max}, \rho\in(0, 1), C_1>0, C_2\ge0$ are constants, $p$ is the true-positive rate in prediction, $q$ is the false-negative rate, $\tau$ is the packet deadline and $D$ is the prediction window size. 
We also conduct extensive simulations to validate our theoretical findings. 
Our results provide novel insights into how prediction and system parameters impact  performance and provide useful guidelines for designing  predictive low-latency control algorithms.


\end{abstract} 


%
\IEEEpeerreviewmaketitle

\section{Introduction} 
How to provide low-latency packet delivery has long been an important problem in  network optimization research, particularly due to the increasingly more stringent user delay requirements in a wide range of applications. For instance, low delay is critical for video traffic in mobile networks, which has already accounted for $60\%$ of total mobile data in $2016$ and will account for more than $78\%$ by $2021$ according to a recent Cisco report \cite{cisco-17}. Other areas such as online gaming, online health care and supply chain also have rigid delay requirements. 
Indeed, user requirements are so strong, that it has been reported that for companies like Amazon and Google, if their service latency increases by $500$ms, they will lose $1.2\%$ of their customers and millions of dollars revenue \cite{delay-report-09}. 
As a result, the problem of guaranteeing low-latency  has received much attention in the last decade, and many scheduling algorithms have been designed based on various mathematical techniques, e.g., \cite{tassiulas93}, \cite{neelysuperfast}, \cite{buisrikant_infocom09}, \cite{ying_wmshortest_infocom09}, \cite{huangneely_dr_tac}, \cite{hou-delay-10}, \cite{singh2016throughput}. 

On the other hand, driven by the availability of large amount user behavior data and the rapid development of data mining and machine learning tools, it has become common in practical systems to \emph{predict} user demand and to \emph{proactively} serve customer requests. 
For example, Amazon tries to predict what customers may purchase and pre-ships products to distribution centers close to them, in order to reduce shipping time\cite{amazon-pre-ship-14}. 
Netflix, on the other hand, tries to predict what customers may want and preload videos onto user devices to improve quality-of-experience \cite{netflix-16}.  
Another example is brunch prediction in computer architecture, where prediction is used to decide how to pre-execute certain parts of the workload, so as to reduce computing time \cite{branch-predict}.  
Despite the continuing success of this prediction-based approach in practice,  it has not received much attention in theoretical study. Therefore, it remains largely unknown how prediction can fundamentally improve delay-guaranteed services.  

In this paper, we aim to fill this gap and investigate \emph{the impact of prediction on  timely-throughput}. Specifically, we consider a single-server multi-user system where the server delivers  packets to users. Each packet has a user-dependent deadline before which it needs to reach the user. 
The service channel for each user is time-varying and the transmission success probability depends on the resource spent sending a packet. 
%
The server gets access to an \emph{imperfect} prediction window, in which forecasts about future arrivals are available, and can \emph{pre-serve} packets before they actually enter the system. 
The overall objective of the system is to maximize a weighted sum of timely-throughputs of users, being the rates of packets delivered before their deadlines. 
This formulation is general and models various important practical applications, e.g., video streaming, sending time-critical control information, and grocery delivery.

There has been an increasing set of recent results investigating the impact of prediction in networked system control.  \cite{huang2016backpressure} and \cite{haoran-predictive-16} consider utility optimal scheduling in downlink systems based on perfect user prediction.  \cite{zhang-predict-ton} shows that proactive scheduling can effectively reduce queueing delay in stochastic single-queue systems.  \cite{plc-mobihoc-17} considers how network state prediction can be incorporated into algorithm design.  \cite{proactive-caching-uncertainties} and \cite{proactive-download-shaping-ton15} focus on understanding the cost saving aspect of proactive scheduling based on demand prediction.   \cite{ooc-sig-15}, \cite{ooc-sig-16} and \cite{rchase-eenergy16} also investigate the benefit of prediction from an online algorithm design perspective. However, we note that the aforementioned works all focus on understanding the utility improvement aspect of prediction and proactive service, and delay saving often comes as a by-product of the resulting predictive control algorithms. Thus, the results are not applicable to delay-constrained problems, where meeting the latency guarantee is an explicit requirement. 

Our formulation is closest to recent works \cite{hou-delay-10}, \cite{hou-rtss}, \cite{huang2016backpressure}, and  \cite{singh2016throughput}, which focus on delay-constrained traffic scheduling. Our work is different as follows. \cite{hou-delay-10} focuses on the setting where traffic is generated and delivered within synchronized frames for all users and \cite{hou-rtss} focuses on periodic traffic, while our work allows heterogeneous deadlines for user packets and random arrivals. \cite{huang2016backpressure} focuses on optimizing system utility subject to stability constraint, while we work explicitly with delay constraints. 
Lastly, while our work builds upon the novel results in \cite{singh2016throughput}, we focus on quantifying the impact of prediction and proactive service in a Markov system, whereas \cite{singh2016throughput} considers a causal system with an i.i.d. setting.  
The extension to incorporate prediction significantly complicates both the solution and analysis. Our results also offer novel insights into the fundamental benefits of prediction in delay-constrained network control. 

The main contributions of our paper are summarized as follows. 

(i) We propose a novel framework for studying timely-throughput optimization with imperfect prediction and proactive scheduling. Our model captures key features of practical delay-constrained problems and facilitates analysis. 

(ii) We derive the exact optimal solution to the prediction-based timely-throughput optimization problem using Markov decision process (MDP). We  rigorously quantify that prediction improves timely-throughput by $\Theta(p\left[C_{1}\frac{(a-a_{\max}q)}{p-q}\rho^{\tau}+C_{2}(1-\frac{1}{p})\right](1-\rho^{D}))$, where $a, a_{\max}, C_1>0, C_2\ge0, \rho\in(0, 1)$ are constants, $p$ is the true-positive rate in prediction, $q$ is the false-negative rate, $\tau$ is the packet deadline and $D$ is the prediction window size. This concise and explicit characterization is novel and provides insights into how different parameters impact system performance. 

(iii) We conduct extensive simulations to validate our theoretical findings. Our results show that prediction-based system control can significantly boost timely-throughput. 

The rest of the paper is organized as follows. In Section \ref{section:model}, we present the system model. The MDP-based solution is presented in Section \ref{section:mdp}. Structural properties of the optimal solution and exact timely-throughput improvement for a static setting are derived in Section \ref{sec:simple-case}. The general scenario is considered in Section \ref{sec:general}. Simulation results are presented in Section \ref{sec:simulation} and conclusion comes in Section \ref{sec:conclusion}. 

\section{System Model}\label{section:model}
Consider a general single-server  system with $N$ users as shown in Fig. \ref{fig:system}. The server can simultaneously transmit multiple packets to different users with cost due to resource expenditure, e.g.,  energy consumption. The channels are unreliable and transmissions may fail. Each packet has a hard deadline within which it must be delivered successfully. Otherwise, it becomes outdated and will be useless for the user (and discarded). We assume that time is discrete, i.e., $t\in\{0,1,\dots\}$, and a packet transmission to any user takes one time-slot.

\begin{figure}[ht]
\centering
\includegraphics[width=2.5in]{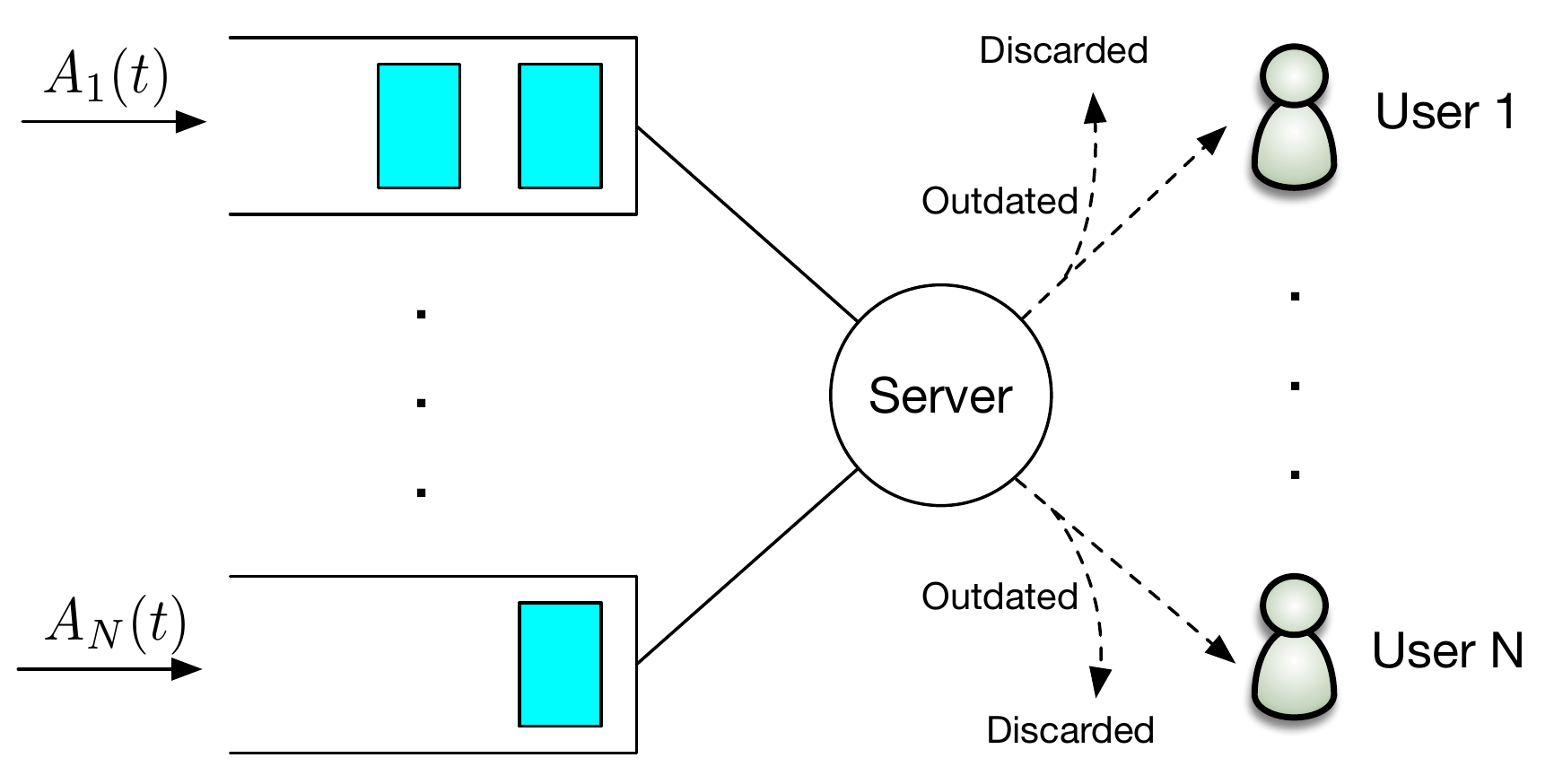}
\caption{A single-server multi-user system where packets have hard deadlines.}
\label{fig:system} 
\vspace{-.1in}
\end{figure}

\subsection{The Delay-Constrained Traffic Model}
The number of packet arrivals destined for user $n$ at time $t$ is denoted by $A_{n}(t)$. We assume that $A_{n}(t)$ is i.i.d across time and independent for different users, with an average rate $\Ex\{A_{n}(t)\}=a_{n}$. We also assume the number of packet arrivals is bounded for all time and for all users, i.e., $0\le A_{n}(t)\le A_{\max}, \forall\, n, t$. 
For each user $n$, there is a hard deadline or sustainable delay for his packets, denoted by $\tau_n$. This means that for any  packet in $A_{n}(t)$, it should be successfully delivered by time $t+\tau_n$.  Otherwise, it becomes useless and will be discarded from the system at time $t+\tau_n$. We further assume $\tau_n\le\Gamma,\forall n$, for some finite constant $\Gamma$. 
%

\subsection{The Service Model} \label{sec:service-model}
The system serves user packets by transmitting them over service channels, at the expense of resource consumption, e.g., energy. 
To model system dynamics, we assume the success of a packet transmission for user $n$ is a random event and its probability is determined by the instantaneous condition of the service  channel, denoted by the channel state $S_{n}(t)$, which is modeled by an ergodic finite-state Markov chain with state space $\mathcal{S}\triangleq\{s_{1},\dots,s_{K}\},\forall n$. The transition matrix and the stationary distribution are denoted as $(P_{n}^{i,j})_{K\times K}$ and $\bm{\eta}_{n}=(\eta_{n}^{1},\dots,\eta_{n}^{K})$, respectively.   


At every time $t$, the server needs to decide the resource consumption level for transmitting each present packet, which is chosen from a bounded set of consumption levels $\mathcal{E}$.
If at time $t$, the channel state is $s_{i}$ and the resource level is $e$, then the probability of a successful packet transmission for user $n$ is $\zeta_{n}(i,e)$. 
Here $e=0$ means that a packet will not be transmitted in the current time-slot and $\zeta_{n}(i, 0)=0$. Also, $\zeta_n(i, e)>0$  for all $e>0$. 
We further assume that $\zeta_{n}(i,e)$ is a concave and strictly increasing function of $e$. 
%
%

We assume without loss of generality that  there is a total order on set $\mathcal{S}$ based on $\zeta_{n}(i, e)$, i.e., for each pair of $i, j $, either $\zeta_{n}(i, e)\ge \zeta_{n}(j,e),\, \forall e$ or $\zeta_{n}(i, e)\le \zeta_{n}(j, e), \forall e $. 
We also assume that there is no hard capacity constraint for the server, i.e., it can transmit an arbitrary number of packets every time, although it has to maintain an average resource consumption guarantee (the setting with hard capacity constraint will be considered in Section \ref{sec:general}). This assumption is made to facilitate analysis and was also adopted in \cite{singh2016throughput}. 

\subsection{The Predictive Service Model}  
Different from  prior results in the literature that often only consider \emph{causal} systems, we are interested in understanding how prediction  and predictive-service fundamentally impact system performance.  
%
Thus, we assume that the system gets access to a \emph{prediction} window $\mathcal{D}_{n}(t) = \{A_{n}(t+1), \dots, A_{n}(t+D_{n})\}$ for each user. 
Moreover, the system implements \emph{predictive service}, i.e., it tries to pre-serve future arrivals in  $\mathcal{D}_{n}(t)$ in the current time slot. 
Such scenario is common in practice. For instance, Amazon  predicts user behavior and pre-ships goods to distribution centers closest to users \cite{amazon-pre-ship-14}.

In this work, we focus on  two prediction models. 
\subsubsection{Perfect prediction}
In this case, the predicted arrivals in $\mathcal{D}_{n}(t)$ are exact. 
This  is an idealized case and results in  this case will serve as an upper bound for the benefit of predictive scheduling. Such a perfect prediction model has been used in the literature, eg., \cite{huang2016backpressure} and \cite{haoran-predictive-16}.  

\subsubsection{Imperfect prediction}  
In the second model,  prediction made by the system can contain error. Specifically, we adopt the imperfect prediction model parameterized by the true-positive and false-negative rates as follows. 
Each predicted arrival in the prediction window is correct with probability $p_{n}$, and every actual packet arrival will be missed with probability $q_{n}$, i.e., a packet will arrive unexpectedly with probability $q_{n}$, as illustrated in Fig. \ref{fig:prediction-model}. Thus, the true-positive rate is $p_{n}$ and the false-negative rate is $q_{n}$. These two rates are decided by the learning methods used to forecast future arrivals and our analysis holds for general $p_{n}$ and $q_{n}$.
Without loss of generality, we  assume $p_{n}>q_{n}$.\footnote{Otherwise, one can inverse $p_{n}$ and $q_{n}$ to ensure that a positive prediction is more likely to become a true arrival than a negative prediction.} 
Note that the perfect prediction case corresponds to $p_n=1$ and $q_n=0$. 
This model was previous adopted in \cite{huang2016backpressure} and \cite{zhang-predict-ton} for studying the delay reduction due to prediction. 

\begin{figure}[ht]
\centering
\vspace{-.15in}
\includegraphics[width=3.2in]{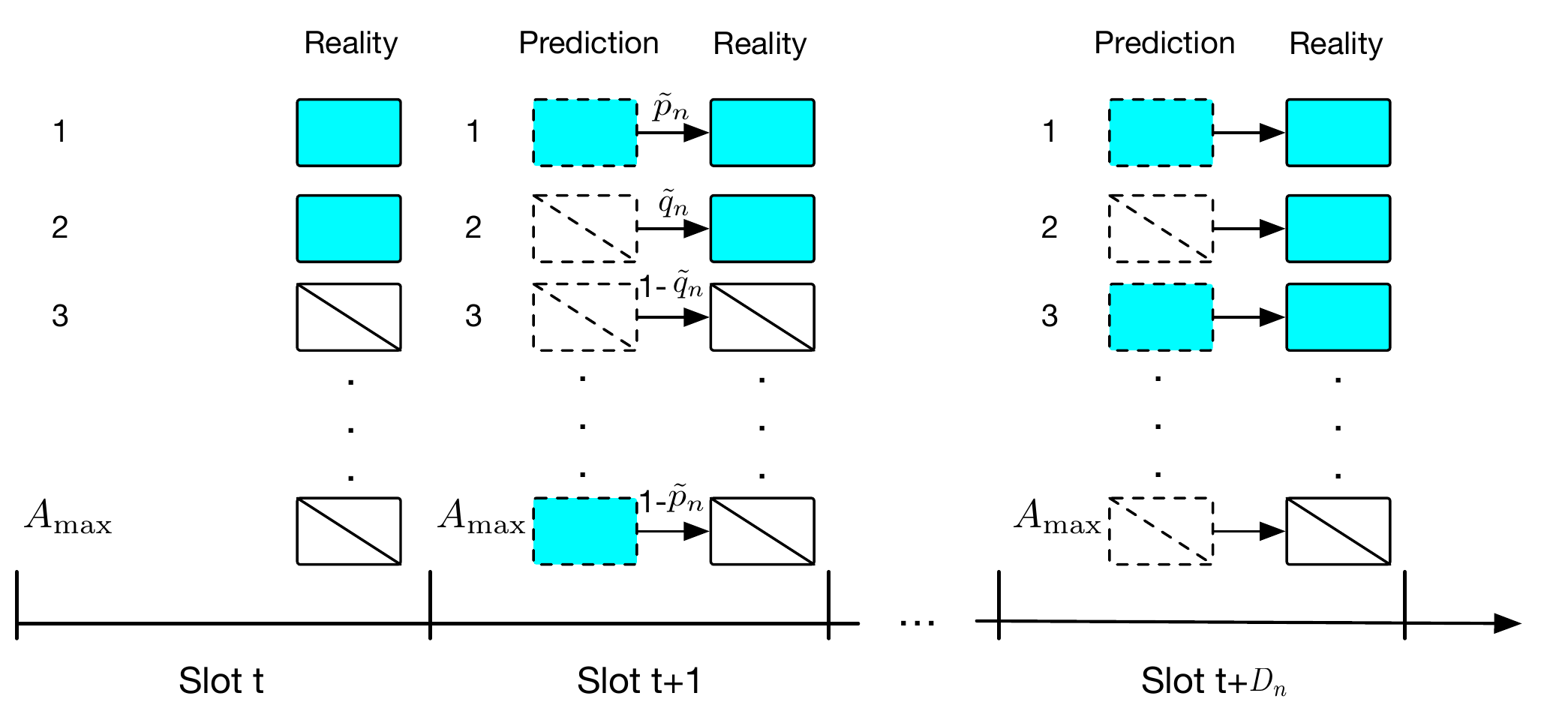}
\vspace{-.1in}
\caption{Imperfect predictions for user $n$ with prediction window $\mathcal{D}_{n}(t)$. The dashed packets are predictions and the solid packets are the actual outcome.  
There are $A_{\max}$ possible arrivals in each time-slot and the server  makes a prediction for each packet. The (dashed) blue box means that the server predicts an arrival (positive prediction) and the correct probability is $p_{n}$. 
The (dashed) crossed white box means that the server predicts no arrival (negative prediction), but a packet may come with probability $q_{n}$. 
At slot $t$,  predicted arrivals will enter the system and the server sees the actual  realizations. 
} 
\label{fig:prediction-model}
\end{figure}

 \vspace{-.15in}
\subsection{System Objective}  
We define the \emph{timely-throughput} as the average number of packets delivered successfully before their deadlines, i.e.,\footnote{In this paper, we assume all limits in consideration exist with probability $1$. The more general case can be tackled with $\limsup$ or $\liminf$ arguments.} 
\begin{equation}
x_{n}=\lim_{T\to \infty}\frac{1}{T}\expect{\sum_{t=1}^{T}X_{n}(t)},
\end{equation} 
where $X_{n}(t)$ denotes the number of packets  that timely reach their destinations for user $n$ at time $t$. 
We also define the average resource expenditure as: 
\begin{equation}
E_{av}=\lim_{T\to\infty}\frac{1}{T}\expect{\sum_{t=1}^{T}\sum_{n=1}^{N}E_{n}(t)}, 
\end{equation} 
where $E_{n}(t)$ is the resource consumed by transmissions of packets for user $n$ at time $t$. 

Denote $\bm{x}=(x_{1},\dots,x_{N})$. Given a weight vector $\bm{\beta}=(\beta_{1},\dots,\beta_{N})$ with $\beta_{n}\ge0,\forall\, 1\le n\le N$, the \emph{weighted timely-throughput} $\phi$ is defined as $\phi\triangleq\bm{\beta}^{\intercal}\bm{x}$. 
In this paper, we focus on the problem of maximizing the weighted timely-throughput, subject to an average resource constraint $B$, i.e., 
\begin{eqnarray}
\phi^*\triangleq\max &&\bm{\beta}^{\intercal}\bm{x}, \label{eq:problem}\\
\text{s.t.}&&E_{av}\le B. \label{eq:constraint}
\end{eqnarray}
This formulation is general and models many delay-constrained applications, e.g., video streaming and supply chain optimization.

 \vspace{-.1in}
\subsection{Model Discussion}
Our work builds upon the novel results in recent work  \cite{singh2016throughput}. However, our model and results are different as follows. (i) We consider a Markov model for the system while \cite{singh2016throughput} focuses on an i.i.d. setting. 
(ii) We focus on prediction and predictive scheduling and quantify their fundamental impact on timely-throughput, while previous delay-constrained results, e.g., \cite{singh2016throughput} and \cite{hou-delay-10},  mostly consider causal systems. 
Analysis for predictive systems is significantly complicated by potential errors in prediction and requires different arguments compared to those for causal systems. 

Understanding how prediction impacts delay-constrained services is critical for future intelligent communication and computing systems,  as providing delay-guaranteed services has long been an important problem in various applications and predictive scheduling has been successfully utilized in different delay-sensitive scenarios, such as video streaming \cite{zhang2009intelligent} and supply chain optimization \cite{aviv2001effect}.  

\section{Scheduling by Packet-Level Decomposition}\label{section:mdp}
Problem \eqref{eq:problem} can be formulated as a constrained Markov Decision Process (MDP). However, it is known that the number of system states can grow exponentially large, making it complicated to obtain efficient algorithms. 
To tackle this issue, we adopt the packet-level decomposition approach  in \cite{singh2016throughput} and extend it to handle prediction in our setting. 
Specifically, for every individual packet still in the system at time $t$, its state is  described by the user it belongs to and a triple $(r,\tau,i)$. Here $r$ denotes the reception status of the packet, i.e., $r=0$ means that the packet is at the source and $r=1$ means that the packet has reached the destination. 
$\tau$ is the time duration before reaching its  deadline, and $i$ is the channel state index. 
Then, the state of the system at time $t$ can be described by the state of all packets. 
Note that the number of arrivals at any time-slot from any user is bounded by $A_{\max}$,  each packet can stay in the system for no more than $\Gamma$ slots, and the number of channel states is finite, so the number of  system states is finite (though can be exponentially large).  

A (possibly randomized) scheduling policy $\pi$ decides at each system state,  which packets to transmit and at what resource levels. Since the distribution of system state at time $t+1$ is decided by the state and the scheduling decision at time $t$, problem \eqref{eq:problem} is a constrained MDP with finite states, which can be solved with algorithms such as value iteration or policy iteration \cite{bertsekasdptbook}.


\subsection{Packet-Level Decomposition for  the Constrained MDP}
Let $\lambda$ be the Lagrange multiplier for constraint (\ref{eq:constraint}). The Lagrangian of \eqref{eq:problem} can be written as
\begin{eqnarray}
\hspace{-.2in}&&L(\pi,\lambda)= \lim_{T\to\infty}\frac{1}{T}\bigg[\expect{\sum_{n=1}^{N}\sum_{t=1}^{T}\beta_{n}X_{n}(t)} \label{eq:lagrangian} \\
\hspace{-.2in}&&\qquad\qquad\qquad\qquad\qquad-\lambda\expect{\sum_{t=1}^{T}\sum_{n=1}^{N}E_{n}(t)}\bigg]+\lambda B. \nonumber
\end{eqnarray} 
Here $\sum_{n=1}^{N}\sum_{t=1}^{T}\beta_{n}X_{n}(t)$  counts the   timely  deliveries of packets, and $\sum_{t=1}^{T}\sum_{n=1}^{N}E_{n}(t)$ comprises the resource consumed. Further notice that there is no capacity constraint for the server. Thus, by denoting $\mathcal{A}_{n}(T)$ the set of packet arrivals for user $n$ up to time $T$, the Lagrangian \eqref{eq:lagrangian} can be decomposed into the following packet-level form: 
\begin{equation} \label{eq:lagrangian-de}
L(\pi,\lambda)= \lim_{T\to\infty}\frac{1}{T}\expect{\sum_{n=1}^{N}\sum_{\xi\in \mathcal{A}_{n}(T)}[\beta_{n}\delta(\xi)-\lambda E(\xi)]}+\lambda B, 
\end{equation} 
where $\delta(\xi)$ is the indicator that packet $\xi$ reaches the destination before its deadline and $E(\xi)$ is the total resource consumed by packet $\xi$. 

From \eqref{eq:lagrangian-de}, the term related to packet $\xi$ of user $n$ is
\begin{equation}\label{eq:lagrangian-oneterm}
\expect{ \beta_{n}\delta(\xi)-\lambda E(\xi)}.
\end{equation} 
As a result, maximizing the Lagrangian \eqref{eq:lagrangian} can be accomplished by maximizing  \eqref{eq:lagrangian-oneterm} for each packet. 
In the following, we refer to problem \eqref{eq:lagrangian-oneterm} as the \emph{Single Packet Scheduling Problem} (SPS) and describe how this problem can be solved in the presence of prediction and predictive-service. 

We will carry out our analysis of the SPS problem with a fixed $\lambda$ value in Section \ref{subsec:single-packet}, based on which in Section \ref{subsec:optimal-throughput} we can determine the optimal $\lambda$ and achieve the optimal weighted timely-throughput. 

\subsection{The Single Packet Scheduling Problem} \label{subsec:single-packet}
In this subsection, we consider the optimal solution to the SPS problem under a fixed  multiplier $\lambda$.  
Recall that the state of a packet is described by a triple $(r,\tau,i)$. At each time-slot, the time-to-deadline $\tau$ is decremented by one. If a packet is still at the source when $\tau$ becomes $0$, it will be discarded from the system. On the other hand, if a packet is delivered successfully before the deadline, we collect a reward $\beta_{n}$. The cost charged for resource  expenditure in each transmission is $\lambda$ per unit. 
%
 
\subsubsection{Perfect prediction} 
We start with the perfect prediction case (note that zero prediction corresponds to  having $D_n=0$). In this case, the arrival of each packet from user $n$ is known $D_{n}$ timeslots in advance. Thus, we need to solve the SPS problem with an extended deadline of $\tau_n+D_n$. We can define $V_{n}(r,\tau,i)$ as the optimal value function for a packet of user $n$ at state $(r,\tau,i)$. 
The value function and the optimal scheduling decision at each state can be obtained with the following Bellman equations. 
\begin{eqnarray}
\hspace{-.3in}&&V_{n}(0,\tau,i)=\max_{e}\big\{-\lambda e+\zeta_{n}(i,e)\sum_{j}P_{n}^{i,j}V_{n}(1,\tau-1,j)\nonumber\\
\hspace{-.3in}&&\qquad\qquad\qquad+(1-\zeta_{n}(i,e))\sum_{j}P_{n}^{i,j}V_{n}(0,\tau-1,j)\big\},\nonumber\\
\hspace{-.3in}&&\qquad\qquad\qquad0<\tau\le \tau_n+D_{n}, \label{eq:dp} \\
\hspace{-.3in}&&V_{n}(0,0,i)=0,\forall n,i, \label{eq:dp1}\\
\hspace{-.3in}&&V_{n}(1,\tau,i)=\beta_{n},\forall n, i, 0\le\tau<\tau_n+D_n. \label{eq:dp2}
\end{eqnarray}


\subsubsection{Imperfect prediction} 
The imperfect prediction case is more complicated. 
We tackle this case by dividing the arrivals into two categories, namely,  the true-positive part and the false-negative part. 
The latter part contains the unpredicted true arrivals. Since these arrivals are not expected, they can only be served after they enter the system.  Thus,  scheduling decisions for these packets are the same as those in the zero prediction case. 

For predicted arrivals, the server can pre-serve them while they are still in the prediction window. 
However, there is a complication in this case. If a predicted arrival is actually a false-alarm,  we cannot collect any reward. As a result, the resource consumed to pre-serve the packet is wasted. 
Moreover, the correctness of a prediction can only be verified at the time when the predicted packet is supposed to arrive. 
Before that, the server will have to take chances and treat all predictions  equally. 

Based on the above reasoning, we will treat predicted packets and the mis-detections differently in the DP formulation. The optimal predictive-service can be done by the following augmented Bellman equations.  
\begin{eqnarray}
\hspace{-.3in}&&V_{n}(0,\tau,i)=\max_{e}\{-\lambda e  \label{eq:dp-aug}\\
\hspace{-.3in}&&\qquad\qquad\qquad +\zeta_{n}(i,e)\sum_{j}P_{n}^{i,j}V_{n}(1,\tau-1,j)\nonumber\\
\hspace{-.3in}&&\qquad\qquad\qquad+(1-\zeta_{n}(i,e))\sum_{j}P_{n}^{i,j}V_{n}(0,\tau-1,j)\},\nonumber\\
\hspace{-.3in}&&\qquad\qquad\qquad0<\tau\le \tau_n+D_{n} ,\tau\ne \tau_n+1, \nonumber\\
\hspace{-.3in}&&V_{n}(0,\tau_n+1,i)=\max_{e}\{-\lambda e \nonumber\\
\hspace{-.3in}&&\qquad\qquad\qquad+\zeta_{n}(i,e)\sum_{j}P_{n}^{i,j}V_{n}(1,\tau_n,j)\label{eq:dp-aug-1}\\
\hspace{-.3in}&&\qquad\qquad\qquad+(1-\zeta_{n}(i,e))p_{n}\sum_{j}P_{n}^{i,j}V_{n}(0,\tau_n,j)\}, \nonumber\\
\hspace{-.3in}&& V_{n}(0,0,i)=0,\forall n,i, \label{eq:dp-aug-2}\\
\hspace{-.3in}&&V_{n}(1,\tau,i)=\beta_{n},\forall n, i, 0\le\tau<\tau_n, \label{eq:dp-aug-3}\\
\hspace{-.3in}&&V_{n}(1,\tau,i)=p_{n}\beta_{n},\forall n,, i, \tau_n\le\tau<\tau_n+D_{n}.\label{eq:dp-aug-4}
\end{eqnarray}
Here  (\ref{eq:dp-aug}) and (\ref{eq:dp-aug-3}) are for unpredicted arrivals, and (\ref{eq:dp-aug-1}) and (\ref{eq:dp-aug-4}) are for predicted packets. 
Compared to  (\ref{eq:dp}), the main difference is that for predicted packets, one needs to take into account the fact that the system will collect a reward $\beta_n$ from pre-serving a packet only with probability $p_n$. 
Also note that the time $\tau_n+1$ is special, as in the next time slot we will be able to verify the correctness of a predicted arrival. Hence, it is separately treated in (\ref{eq:dp-aug-1}). 


\subsection{The Optimal Weighted Timely-Throughput} \label{subsec:optimal-throughput}
After solving the SPS problem for a fixed $\lambda$, the policy $\pi^{*}(\lambda)$ that maximizes the Lagrangian \eqref{eq:lagrangian}  can be derived by letting each packet take its own optimal scheduling decision. 
Next we describe how to optimize the overall problem. 


Denote $g(\lambda)$  the Lagrange dual function, i.e., 
\begin{equation}
g(\lambda)=\max_{\pi}L(\pi,\lambda).
\end{equation} 
Using Lemma 3 in \cite{singh2016throughput}, the optimal weighted timely-throughput $\phi^{*}$ equals the optimal value of the  dual problem, i.e., 
\begin{equation} 
\phi^{*}=\min_{\lambda\ge0}g(\lambda). \label{eq:zero-duality-gap}
\end{equation} 
This can be established by showing that the constrained MDP, with or without prediction, is equivalent to a linear program \cite{altman1999constrained},\cite{altman1996constrained}. Hence, the duality gap is zero. 


In the following, we look at the Lagrange dual function in the perfect and imperfect prediction cases. 

\subsubsection{Dual under zero or perfect prediction}  
Let $\bm{V}_{n}\triangleq (V_{n}(0,\tau_n+D_{n},1), V_{n}(0,\tau_n+D_{n},2),\dots,V_{n}(0,\tau_n+D_{n},K))$ in the case with perfect prediction (zero prediction corresponds to $D_n=0$). Then, using (\ref{eq:lagrangian-de}), the dual function can be expressed as (recall that $\bm{\eta}_{n}$ is the steady-state distribution of the channel state for user $n$): 
\begin{equation} \label{eq:dual-noprediction}
g(\lambda)=\sum_{n=1}^{N}a_{n}\bm{\eta}_{n}^{\intercal}\bm{V}_{n}+\lambda B. 
\end{equation} 

\subsubsection{Dual under imperfect prediction} 
In this case, we note that the rate of predicted arrivals may not equal the actual arrival rate. 
Denote the predicted arrival rate from user $n$ as $\tilde{a}_{n}$. We have: 
\begin{equation*}
a_{n}=\tilde{a}_{n}p_{n}+(a_{\max}-\tilde{a}_{n})q_{n}.
\end{equation*}
Here the second term is due to the fact that each packet is missed, i.e., not predicted, with probability $q_n$. Thus,  
\begin{equation}\label{eq:pre-arrival-rate}
\tilde{a}_{n}=\frac{a_{n}-a_{\max}q_{n}}{p_{n}-q_{n}}.
\end{equation}
Since $p_{n}>q_{n}$ and $0\le\tilde{a}_{n}\le a_{\max}$, we  get
\begin{equation}
q_{n}\le\frac{a_{n}}{a_{\max}}\le p_{n}.
\end{equation}
Let $\bm{V}_{n}\triangleq(V_{n}(0,\tau_n,1), V_{n}(0,\tau_n,2),\dots,V_{n}(0,\tau_n,K))$ and $\tilde{\bm{V}}_{n}\triangleq(V_{n}(0,\tau_n+D_{n},1), V_{n}(0,\tau_n+D_{n},2), \dots, V_{n}(0,\tau_n+D_{n},K))$. Similar to the perfect prediction case, the dual function can be expressed as:  
\begin{equation} \label{eq:dual-imperfectprediction}
g(\lambda)=\sum_{n=1}^{N}\left[\tilde{a}_{n}\bm{\eta}_{n}^{\intercal}\tilde{\bm{V}}_{n}+(a_{\max}-\tilde{a}_{n})q_{n}\bm{\eta}_{n}^{\intercal}\bm{V}_{n}\right]+\lambda B, 
\end{equation}
where $\tilde{a}_{n}$ is determined in \eqref{eq:pre-arrival-rate}. 

After obtaining the dual function for a fixed $\lambda$, we still need to find the optimal Lagrange multiplier $\lambda^{*}=\argmin_{\lambda\ge0}g(\lambda)$. One possible approach is to use the subgradient descent method, where we take an iterative procedure to converge to $\lambda^{*}$ as follows. 
In the $k$-th iteration, we solve the SPS problem to get the optimal policy $\pi^{*}(\lambda_{k})$ and the average resource expenditure $E_{av}(\pi^{*}(\lambda_{k}))$ based on the current multiplier $\lambda_{k}$. 
Then, the multiplier for the next iteration is given by ($\epsilon_k$ is a step size): 
\begin{equation*}
\lambda_{k+1}=\lambda_{k}+\epsilon_k[E_{av}(\pi^{*}(\lambda_{k}))-B],
\end{equation*}
It is known that with an appropriately chosen $\{\epsilon_k\}_{k=1}^{\infty}$ sequence, $\lambda_k\rightarrow\lambda^*$ \cite{bertsekas2003convex}. 

Despite the generality of (\ref{eq:dual-noprediction}) and (\ref{eq:dual-imperfectprediction}), directly solving them is complicated. Thus, in the following section, we first consider a slightly less general setting where user channels are static (can be different across users) and the resource expenditure option is binary.\footnote{Due to the complexity of the general stochastic scenario, similar static settings were also considered in the literature, e.g., \cite{rechargeable-ton-14} and \cite{neely-finite-queue-17}.} 
Results for the general case will be presented in Section \ref{sec:general}.

\section{The Static Scenario} \label{sec:simple-case} 
In this scenario, we  assume that the channel states are static, i.e., the success probability for transmitting a user $n$ packet is a constant $\zeta_{n}$. 
Moreover, we assume the resource level set is $\mathcal{E}=\{0, 1\}$, i.e., at each time-slot,  the scheduling decision for each packet is to transmit it  or not. 
In this case, the state of each packet can be described by $(r,\tau)$.   
%

\subsection{The Optimal Scheduling Policy}
\subsubsection{Perfect prediction} 
First we consider the perfect prediction case. 
We have the following theorem. Recall that the zero prediction case is a special case ($D_n=0$ for all $n$). 
\begin{theorem}\label{thm:policy-simple-nopre}
For each user $n$ packet, if $\zeta_{n}\beta_{n}>\lambda$, then the optimal policy is to transmit the packet at every time-slot, until it is either successfully delivered to the destination or becomes outdated. Moreover, the value function is given by: 
\begin{equation}
V_{n}(0,\tau)=\frac{1-(1-\zeta_{n})^{\tau}}{\zeta_{n}}(-\lambda+\zeta_{n}\beta_{n}), 0<\tau\le \tau_n+D_n. \label{eq:perfect-predict-value-fun}
\end{equation} 
Otherwise, if $\zeta_{n}\beta_{n}\le\lambda$,  the value function is $V_{n}(0,\tau)=0, 0<\tau\le \tau_n+D_n$. Specially, if $\zeta_{n}\beta_{n}<\lambda$, the optimal policy is to not transmit the packet at all.
\end{theorem}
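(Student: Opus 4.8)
The plan is to specialize the Bellman recursion \eqref{eq:dp}--\eqref{eq:dp2} to the static binary-resource setting and then solve it by backward induction on the time-to-deadline $\tau$. Since the channel is static there is no channel index $i$ or transition matrix to carry, and since $\mathcal{E}=\{0,1\}$ the maximization is over just two actions. Using $V_{n}(1,\tau-1)=\beta_{n}$ together with $\zeta_{n}(1)=\zeta_{n}$ and $\zeta_{n}(0)=0$, the recursion collapses to
\begin{equation*}
V_{n}(0,\tau)=\max\big\{\,V_{n}(0,\tau-1),\ (\zeta_{n}\beta_{n}-\lambda)+(1-\zeta_{n})V_{n}(0,\tau-1)\,\big\},
\end{equation*}
where the first term inside the max corresponds to $e=0$ (not transmitting, so the state passes to $(0,\tau-1)$) and the second to $e=1$ (transmitting and collecting $\beta_n$ with probability $\zeta_n$). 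Subtracting, the transmit action $e=1$ is optimal precisely when $\zeta_{n}\beta_{n}-\lambda\ge\zeta_{n}V_{n}(0,\tau-1)$, and strictly so when this inequality is strict.

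With this recursion in hand, I would split into the two cases of the statement according to the sign of $\zeta_{n}\beta_{n}-\lambda$ and induct on $\tau$ from the boundary $V_{n}(0,0)=0$. In the case $\zeta_{n}\beta_{n}\le\lambda$, assuming $V_{n}(0,\tau-1)=0$ gives $V_{n}(0,\tau)=\max\{0,\zeta_{n}\beta_{n}-\lambda\}=0$, so $V_{n}(0,\tau)\equiv0$ follows at once; moreover, when $\zeta_{n}\beta_{n}<\lambda$ the transmit term is strictly negative while the idle term is zero, so not transmitting is strictly optimal, as claimed. In the case $\zeta_{n}\beta_{n}>\lambda$, I would verify the closed form \eqref{eq:perfect-predict-value-fun} inductively: substituting the hypothesis for $V_{n}(0,\tau-1)$ into the transmit branch and collapsing the resulting geometric expression yields exactly $\tfrac{1-(1-\zeta_{n})^{\tau}}{\zeta_{n}}(\zeta_{n}\beta_{n}-\lambda)$.

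The one point that needs care --- and the only place the argument is not purely mechanical --- is confirming that the transmit action remains optimal at \emph{every} step of the induction, so that the closed form is genuinely attained rather than merely a candidate upper estimate. Concretely, I must check that the hypothesis $V_{n}(0,\tau-1)=\tfrac{1-(1-\zeta_{n})^{\tau-1}}{\zeta_{n}}(\zeta_{n}\beta_{n}-\lambda)$ preserves the optimality condition $\zeta_{n}\beta_{n}-\lambda\ge\zeta_{n}V_{n}(0,\tau-1)$. This reduces to $\zeta_{n}\beta_{n}-\lambda\ge\big(1-(1-\zeta_{n})^{\tau-1}\big)(\zeta_{n}\beta_{n}-\lambda)$, which holds because $\zeta_{n}\beta_{n}-\lambda>0$ and $0\le1-(1-\zeta_{n})^{\tau-1}\le1$; the inequality is strict whenever $0<\zeta_{n}<1$, which recovers the ``transmit at every time-slot until delivered or outdated'' policy. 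Since the recursion is uniform in $\tau$ over the whole range, the same computation carries through for all $0<\tau\le\tau_n+D_n$; this is why the perfect-prediction extended deadline enters only through the upper limit $\tau_n+D_n$ and not through the functional form of the solution.
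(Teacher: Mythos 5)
Your proposal is correct and follows essentially the same route as the paper's proof: specializing the Bellman equations \eqref{eq:dp}--\eqref{eq:dp2} to the static binary-resource case and inducting on $\tau$ from $V_n(0,0)=0$, with the inductive step verifying that the transmit branch of the max dominates precisely when $\zeta_n\beta_n-\lambda>0$. Your explicit check that the optimality condition $\zeta_n\beta_n-\lambda\ge\zeta_n V_n(0,\tau-1)$ is preserved under the induction is just a rephrasing of the paper's direct comparison of $\frac{1-(1-\zeta_n)^{t}}{\zeta_n}(-\lambda+\zeta_n\beta_n)$ against $\frac{1-(1-\zeta_n)^{t+1}}{\zeta_n}(-\lambda+\zeta_n\beta_n)$, and you treat the $\zeta_n\beta_n\le\lambda$ case slightly more explicitly than the paper (which only remarks that it ``goes the same'').
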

\begin{proof}
See Appendix \ref{apd:policy-simple-nopre}.
\end{proof}
\begin{remark} 
When $\lambda=\lambda^{*}$, based on the KKT conditions \cite{bertsekas2003convex}, it can be shown that  for packets with $\zeta_{n}\beta_{n}=\lambda^{*}$,  the optimal policy is to transmit the packet at every time-slot with probability $\bar{p}_{P}=\frac{B-\sum_{j:\zeta_{j}\beta_{j}>\lambda^{*}}a_{j}\bar{E}_{j}}{a_{n}\bar{E}_{n}}$, where $\bar{E}_{n}=\zeta_{n}+2(1-\zeta_{n})\zeta_{n}+\cdots+(\tau_n+D_{n})(1-\zeta_{n})^{\tau_n+D_{n}-1}$, and not to transmit the packet  otherwise. Notice that this has no influence on the value of the dual function $g(\lambda)$, as well as the optimal weighted timely-throughput obtained by $\phi^{*}=g(\lambda^{*})$. 
\end{remark}
Theorem \ref{thm:policy-simple-nopre} shows that if the expected reward is larger than the cost in one transmission, then the server should try its best to deliver packets for user $n$. Otherwise, packets from user $n$ should never be served if the expected reward is less than the cost (from the point of optimizing the weighted throughput under an average resource constraint). 
From Theorem \ref{thm:policy-simple-nopre} and (\ref{eq:zero-duality-gap}), we have the following corollary. 
\begin{coro}\label{coro:throughput-simple-nopre}
Let $g_{P}(\lambda)$ denote the the dual function of \eqref{eq:problem} and $\phi_{P}^{*}$ denote the optimal weighted timely-throughput in the perfect  prediction case. We have: 
\begin{equation}
\begin{aligned}
g_{P}(\lambda)=&\max_{\pi}L(\pi,\lambda)\\
=&\sum_{n:\zeta_{n}\beta_{n}>\lambda}a_{n}\frac{1-(1-\zeta_{n})^{\tau_n+D_n}}{\zeta_{n}}(-\lambda+\zeta_{n}\beta_{n})+\lambda B,
\end{aligned}
\end{equation}
and 
\begin{equation} \label{eq:throughput-simple-nopre}
\phi^{*}_{P}=\min_{\lambda}\sum_{n:\zeta_{n}\beta_{n}>\lambda}a_{n}\frac{1-(1-\zeta_{n})^{\tau_n+D_n}}{\zeta_{n}}(-\lambda+\zeta_{n}\beta_{n})+\lambda B.
\end{equation}
\end{coro}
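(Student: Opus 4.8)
The plan is to obtain the closed form for $g_{P}(\lambda)$ by substituting the value function from Theorem \ref{thm:policy-simple-nopre} into the general dual expression \eqref{eq:dual-noprediction}, and then to invoke the zero-duality-gap identity \eqref{eq:zero-duality-gap} to recover $\phi^{*}_{P}$. Since the corollary is explicitly flagged as following ``from Theorem \ref{thm:policy-simple-nopre} and \eqref{eq:zero-duality-gap}'', the work is essentially a direct substitution, and I would organize it accordingly.

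First I would specialize \eqref{eq:dual-noprediction} to the static-channel setting. Because the transmission success probability for user $n$ is a constant $\zeta_{n}$ rather than a state-dependent quantity, the channel index collapses and the steady-state distribution $\bm{\eta}_{n}$ plays no role; the per-user contribution reduces from $a_{n}\bm{\eta}_{n}^{\intercal}\bm{V}_{n}$ to $a_{n}V_{n}(0,\tau_n+D_n)$. This is exactly the packet-level decomposition at work: each user-$n$ packet entering the system (at rate $a_{n}$, now with effective deadline $\tau_n+D_n$ thanks to perfect prediction) contributes its own optimal single-packet value $V_{n}(0,\tau_n+D_n)$ to the Lagrangian, and these contributions add up, so that $g_{P}(\lambda)=\sum_{n}a_{n}V_{n}(0,\tau_n+D_n)+\lambda B$.

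Next I would plug in the explicit value function. Theorem \ref{thm:policy-simple-nopre} gives $V_{n}(0,\tau_n+D_n)=\frac{1-(1-\zeta_{n})^{\tau_n+D_n}}{\zeta_{n}}(-\lambda+\zeta_{n}\beta_{n})$ whenever $\zeta_{n}\beta_{n}>\lambda$, and $V_{n}(0,\tau_n+D_n)=0$ whenever $\zeta_{n}\beta_{n}\le\lambda$. Consequently every user with $\zeta_{n}\beta_{n}\le\lambda$ drops out of the summation, leaving precisely the sum over $\{n:\zeta_{n}\beta_{n}>\lambda\}$ claimed in the corollary; adding back the $\lambda B$ term yields the stated formula for $g_{P}(\lambda)$. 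The boundary users with $\zeta_{n}\beta_{n}=\lambda$ contribute a zero term regardless of classification, so the threshold may be taken strict without ambiguity, consistent with the randomized tie-breaking in the remark, which leaves $g(\lambda)$ unchanged.

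Finally, the formula for $\phi^{*}_{P}$ follows by minimizing $g_{P}(\lambda)$ over $\lambda\ge0$, as guaranteed by \eqref{eq:zero-duality-gap}. The one point requiring care is well-posedness of this minimization: $g_{P}(\lambda)$ is convex in $\lambda$, being a pointwise maximum over $\pi$ of the Lagrangian $L(\pi,\lambda)$, which is affine in $\lambda$ for each fixed policy; and for $\lambda$ large every user is excluded so that $g_{P}(\lambda)=\lambda B$ grows without bound, ensuring the minimum is attained. The main obstacle is therefore not a hard estimate but the bookkeeping of the decomposition — confirming that the static simplification of \eqref{eq:dual-noprediction} is legitimate and that Theorem \ref{thm:policy-simple-nopre} supplies the correct per-packet optimal value at the extended deadline $\tau_n+D_n$ — after which the corollary is immediate.
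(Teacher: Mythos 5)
Your proposal is correct and matches the paper's (implicit) argument exactly: the corollary is obtained by specializing the dual expression \eqref{eq:dual-noprediction} to the static setting, substituting the value function from Theorem \ref{thm:policy-simple-nopre}, and invoking the zero-duality-gap identity \eqref{eq:zero-duality-gap}. Your additional remarks on convexity and coercivity of $g_{P}(\lambda)$ are sound bookkeeping but do not change the route.
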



Corollary \ref{coro:throughput-simple-nopre}  enables us to characterize the fundamental improvement in weighted timely-throughput due to prediction, which is shown in the next theorem. In the theorem,  $g_{0}(\lambda)$  and $\phi^*_0$ denote the dual function and optimal weighted timely-throughput without prediction, respectively. 
\begin{theorem}\label{thm:improve-simple-perpre}
Suppose $\lambda_{0}^{*}=\argmin_{\lambda} g_{0}(\lambda)$ and $\lambda_{P}^{*}=\argmin_{\lambda} g_{P}(\lambda)$, then the weighted timely-throughput improvement satisfies:  
\begin{eqnarray}
\hspace{-.5in}&&\phi^{*}_{P}-\phi^{*}_{0}\le\sum_{n:\zeta_{n}\beta_{n}>\lambda_{0}^{*}}\frac{a_{n}}{\zeta_{n}}\left[(1-\zeta_{n})^{\tau_n}-(1-\zeta_{n})^{\tau_n+{D_{n}}}\right]\nonumber\\
\hspace{-.5in}&&\qquad\qquad\qquad\qquad\qquad\times(-\lambda_{0}^{*}+\zeta_{n}\beta_{n})
\end{eqnarray}
and
\begin{eqnarray}
\hspace{-.5in}&&\phi^{*}_{P}-\phi^{*}_{0}\ge\sum_{n:\zeta_{n}\beta_{n}>\lambda_{P}^{*}}\frac{a_{n}}{\zeta_{n}}\left[(1-\zeta_{n})^{\tau_n}-(1-\zeta_{n})^{\tau_n+{D_{n}}}\right]\nonumber\\
\hspace{-.5in}&&\qquad\qquad\qquad\qquad\qquad\times(-\lambda_{P}^{*}+\zeta_{n}\beta_{n}).
\end{eqnarray}
\end{theorem}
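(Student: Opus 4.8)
The plan is to exploit the fact that both optimal throughputs are obtained by minimizing their respective dual functions, i.e., $\phi^{*}_{P}=g_{P}(\lambda_{P}^{*})=\min_{\lambda}g_{P}(\lambda)$ and $\phi^{*}_{0}=g_{0}(\lambda_{0}^{*})=\min_{\lambda}g_{0}(\lambda)$, and then to compare the two dual functions evaluated at a \emph{common} multiplier. The crucial structural observation, which I would record first, is that Corollary~\ref{coro:throughput-simple-nopre} gives $g_{P}$ explicitly and, by taking $D_{n}=0$, also gives $g_{0}$; the two expressions differ only through the exponent $\tau_{n}+D_{n}$ versus $\tau_{n}$. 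In particular, when evaluated at the same $\lambda$, they share the identical index set $\{n:\zeta_{n}\beta_{n}>\lambda\}$ and the identical $\lambda B$ term, so all the common pieces cancel and the difference simplifies cleanly to
\begin{equation*}
g_{P}(\lambda)-g_{0}(\lambda)=\sum_{n:\zeta_{n}\beta_{n}>\lambda}\frac{a_{n}}{\zeta_{n}}\left[(1-\zeta_{n})^{\tau_n}-(1-\zeta_{n})^{\tau_n+D_{n}}\right](-\lambda+\zeta_{n}\beta_{n}).
\end{equation*}

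For the upper bound, I would chain $\phi^{*}_{P}=g_{P}(\lambda_{P}^{*})\le g_{P}(\lambda_{0}^{*})$ (by optimality of $\lambda_{P}^{*}$ for $g_{P}$) together with $\phi^{*}_{0}=g_{0}(\lambda_{0}^{*})$, so that $\phi^{*}_{P}-\phi^{*}_{0}\le g_{P}(\lambda_{0}^{*})-g_{0}(\lambda_{0}^{*})$. Substituting $\lambda=\lambda_{0}^{*}$ into the difference formula above yields exactly the claimed upper bound. For the lower bound, I would symmetrically use $\phi^{*}_{0}=g_{0}(\lambda_{0}^{*})\le g_{0}(\lambda_{P}^{*})$ (by optimality of $\lambda_{0}^{*}$ for $g_{0}$) together with $\phi^{*}_{P}=g_{P}(\lambda_{P}^{*})$, giving $\phi^{*}_{P}-\phi^{*}_{0}\ge g_{P}(\lambda_{P}^{*})-g_{0}(\lambda_{P}^{*})$, and then substitute $\lambda=\lambda_{P}^{*}$ into the same formula.

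The only point requiring care — which I would flag as the main subtlety rather than a genuine obstacle — is verifying that the index set is held fixed when both dual functions are evaluated at one and the same $\lambda$; this is precisely what lets the summands combine into the single bracketed term $(1-\zeta_{n})^{\tau_n}-(1-\zeta_{n})^{\tau_n+D_{n}}$. One must resist the temptation to compare $g_{P}(\lambda_{P}^{*})$ with $g_{0}(\lambda_{0}^{*})$ directly, since the two minimizers generally induce \emph{different} active sets; the sandwich inequalities above are exactly the device that forces a common $\lambda$ (and hence a common set) into each comparison. Once this is arranged, the entire argument reduces to two one-line optimality inequalities plus the algebraic cancellation, with no real analysis needed. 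As a consistency check I would note that each summand is nonnegative — since $0<\zeta_{n}<1$ makes the bracket positive and $\zeta_{n}\beta_{n}>\lambda$ makes the last factor positive — confirming that prediction can only increase the weighted timely-throughput.
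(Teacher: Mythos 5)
Your proposal is correct and takes essentially the same route as the paper: the paper likewise sandwiches the difference via the two optimality inequalities $\phi^{*}_{P}\le g_{P}(\lambda^{*}_{0})$ and $\phi^{*}_{0}\le g_{0}(\lambda^{*}_{P})$, obtaining $g_{P}(\lambda^{*}_{P})-g_{0}(\lambda^{*}_{P})\le \phi^{*}_{P}-\phi^{*}_{0}\le g_{P}(\lambda^{*}_{0})-g_{0}(\lambda^{*}_{0})$, and then reads off the bounds from the explicit dual expressions of Corollary~\ref{coro:throughput-simple-nopre} at a common $\lambda$. Your added remark about keeping the index set $\{n:\zeta_{n}\beta_{n}>\lambda\}$ fixed at a common multiplier is exactly the cancellation the paper uses implicitly, so there is no substantive difference.
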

\begin{proof}
From \eqref{eq:throughput-simple-nopre}, we have that:  
\begin{align*}
\phi^{*}_{P}=&\min_{\lambda}g_{P}(\lambda)\le g_{P}(\lambda^{*}_{0}), \\
\phi^{*}_{0}=&\min_{\lambda}g_{0}(\lambda)\le g_{0}(\lambda^{*}_{P}).
\end{align*}
Thus, $g_{P}(\lambda^{*}_{P})-g_{0}(\lambda^{*}_{P})\le \phi^{*}_{P}-\phi^{*}_{0}\le g_{P}(\lambda^{*}_{0})-g_{0}(\lambda^{*}_{0})$, which gives us Theorem \ref{thm:improve-simple-perpre}. 
\end{proof}

Theorem \ref{thm:improve-simple-perpre} suggests two things. (i) for each user $n$, prediction improves the throughput by an amount $\Theta(\rho^{\tau_n}(1-\rho^{D_{n}}))$ with $\rho=1-\zeta_{n}$. 
This implies that the impact of prediction is decreasing with the deadline $\tau_n$, which is expected. (ii) the gap to the optimal improvement $\rho^{\tau_n}$ decreases \emph{exponentially} as the prediction power $D_n$ increases. This demonstrates the power of prediction and highlights the potential of investing in improving system prediction. 


\subsubsection{Imperfect prediction}
In this case, we first note that there is not much the system can do with the unpredicted arrivals. 
Thus, the  optimal scheduling policy and value function for these packets are the same as those in Theorem \ref{thm:policy-simple-nopre}. 
For  the predicted arrivals,  on the other hand,  the system will be able to start their services  the moment they appear in the prediction window, following (\ref{eq:dp-aug})  to (\ref{eq:dp-aug-4}).  

The following theorem characterizes the optimal predictive-service policy and the corresponding value functions. Recall that $p_{n}$ is the true-positive probability  and $q_{n}$ is the false-negative probability. 
\begin{theorem}\label{thm:policy-simple-impre} 
Consider a predicted arrival for user  $n$.

(A) Suppose $\zeta_{n}\beta_{n}>\lambda$. Define  
\begin{eqnarray}
c_{n}\triangleq\frac{\lambda}{(-\lambda+\zeta_{n}\beta_{n})(1-\zeta_{n})^{\tau_n}+\lambda}. 
\end{eqnarray}
(i) If $p_{n}> c_{n}$, then the optimal pre-service policy is to transmit the packet at every time-slot once it enters the prediction window, until it is either successfully delivered to the destination or revealed to be a false-alarm. 
The value function is given by: 
\begin{eqnarray} 
\hspace{-.3in}&&V_{n}(0,\tau_n+w)=-(1-p_{n})\frac{1-(1-\zeta_{n})^{w}}{\zeta_{n}}\lambda\label{eq:value-fun-pq}\\
\hspace{-.3in}&&\qquad\qquad\qquad\qquad  +p_{n}\frac{1-(1-\zeta_{n})^{\tau_n+w}}{\zeta_{n}}(-\lambda+\zeta_{n}\beta_{n}), \nonumber
\end{eqnarray}
where $0<w\le D_{n}$. (ii) If $p_{n}\le c_{n}$, the value function is given by: 
\begin{equation}\label{eq:value-fun-pq2}
\begin{aligned}
V_{n}(0,\tau_n+w)=p_{n}\frac{1-(1-\zeta_{n})^{\tau_n}}{\zeta_{n}}(-\lambda+\zeta_{n}\beta_{n}),
\end{aligned}
\end{equation}
where $0<w\le D_{n}$. Specially, if $p_{n}< c_{n}$, the optimal policy is to not pre-serve the packet and to wait until it enters the system (if it is a true-positive), 

(B) If $\zeta_{n}\beta_{n}\le\lambda$, $V_{n}(0,\tau_n+w)=0, 0<w\le D_{n}$.  The optimal policy is to not transmit the packet at all if $\zeta_{n}\beta_{n}<\lambda$.
\end{theorem}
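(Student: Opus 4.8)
The plan is to specialize the augmented Bellman equations \eqref{eq:dp-aug}--\eqref{eq:dp-aug-4} to the static, binary-resource setting, where the channel index disappears and the maximization over $e$ reduces to a comparison between ``transmit'' ($e=1$) and ``wait'' ($e=0$). Writing $\rho_n\triangleq 1-\zeta_n$ and $M_n\triangleq \zeta_n\beta_n-\lambda$, I first observe that for $0<\tau\le\tau_n$ the recursion and boundary data coincide exactly with those of Theorem \ref{thm:policy-simple-nopre}: none of these states is the special time $\tau_n+1$, and $V_n(1,\tau-1)=\beta_n$ there by \eqref{eq:dp-aug-3}. Hence Theorem \ref{thm:policy-simple-nopre} applies verbatim and gives, in case (A), $V_n(0,\tau_n)=A_n$ with $A_n\triangleq\frac{1-\rho_n^{\tau_n}}{\zeta_n}M_n$, and $V_n(0,\tau_n)=0$ in case (B). It then remains to analyze the prediction-window states $\tau=\tau_n+w$, $1\le w\le D_n$, which I will handle by induction on $w$.

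For the base case $w=1$ I use the special equation \eqref{eq:dp-aug-1}. Since $V_n(1,\tau_n)=p_n\beta_n$, the transmit value is $-\lambda+\zeta_np_n\beta_n+(1-\zeta_n)p_nA_n$ while the wait value is $p_nA_n$. A short computation using the identity $\zeta_n(\beta_n-A_n)=\rho_n^{\tau_n}M_n+\lambda$ shows that the transmit-minus-wait gap equals $-\lambda+p_n\big(\rho_n^{\tau_n}M_n+\lambda\big)$, which is nonnegative precisely when $p_n\ge c_n$. This both identifies the threshold $c_n$ and settles $w=1$: transmit is optimal iff $p_n\ge c_n$, and substituting back reproduces \eqref{eq:value-fun-pq} at $w=1$ (respectively \eqref{eq:value-fun-pq2} when $p_n\le c_n$).

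For the inductive step ($w\ge2$) I use the regular equation \eqref{eq:dp-aug}, where $V_n(1,\tau_n+w-1)=p_n\beta_n$ for all $1\le w\le D_n$. Under the hypothesis that transmit is optimal up to $w-1$ (so $V_n(0,\tau_n+w-1)$ is given by \eqref{eq:value-fun-pq}), the transmit branch obeys the linear recursion $V_n(0,\tau_n+w)=(-\lambda+\zeta_np_n\beta_n)+\rho_nV_n(0,\tau_n+w-1)$; solving this geometric recursion and collapsing the telescoping sums reproduces the closed form \eqref{eq:value-fun-pq}. The crux is then to verify consistency of the greedy choice: I will show the transmit-minus-wait gap at stage $w$ factors as $\rho_n^{\,w-1}\big(p_nM_n\rho_n^{\tau_n}-(1-p_n)\lambda\big)$. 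The essential point is that the bracketed factor is independent of $w$, so its sign is governed by the same threshold $c_n$ at every stage. Thus if $p_n>c_n$ the gap is strictly positive at all $w$ and the induction closes with \eqref{eq:value-fun-pq}; if $p_n<c_n$ the gap is strictly negative at all $w$, the never-pre-serve policy is optimal, the value stays constant at $p_nA_n$, and we obtain \eqref{eq:value-fun-pq2}. Part (B) is then immediate: when $\zeta_n\beta_n\le\lambda$ we have $A_n=0$ and $-\lambda+\zeta_np_n\beta_n\le-\lambda+\zeta_n\beta_n\le0$, so waiting dominates at every window state and $V_n(0,\tau_n+w)=0$.

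I expect the main obstacle to be the inductive step for $w\ge2$, specifically proving that the threshold $c_n$ extracted from the single special slot $w=1$ continues to govern optimality throughout the entire window. A priori the transmit-versus-wait tradeoff could drift as $w$ grows, so the key is to carry the exact closed form \eqref{eq:value-fun-pq} through the induction and exhibit the clean factorization $\rho_n^{\,w-1}\big(p_nM_n\rho_n^{\tau_n}-(1-p_n)\lambda\big)$ of the gap, whose $w$-independent sign is exactly what makes the all-or-nothing structure (transmit at every slot, or never pre-serve) hold. The remaining algebra---verifying the telescoping that produces \eqref{eq:value-fun-pq} and the identity $\zeta_n(\beta_n-A_n)=\rho_n^{\tau_n}M_n+\lambda$---is routine.
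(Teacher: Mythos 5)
Your proposal is correct and takes essentially the same route as the paper's proof: specialize the augmented Bellman equations to the static binary setting, invoke Theorem \ref{thm:policy-simple-nopre} for states $\tau\le\tau_n$, and induct on the window index $w$ with the base case at the special slot $\tau_n+1$, where comparing the transmit and wait branches yields the threshold $c_n$. Your explicit gap factorization $\rho_n^{w-1}\big(p_nM_n\rho_n^{\tau_n}-(1-p_n)\lambda\big)$ simply makes precise the step the paper abbreviates as ``since $p_n>c_n$, one can verify,'' so it is a welcome but not substantively different refinement.
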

\begin{proof}
See Appendix \ref{apd:policy-simple-impre}.
\end{proof}

\begin{remark}
Similar with Theorem \ref{thm:policy-simple-nopre}, when $\lambda=\lambda^{*}$, if there exist $n_{1}, n_{2}$ such that $p_{n_{1}}= c_{n_{1}},\zeta_{n_{2}}\beta_{n_{2}}=\lambda^{*}$, then the optimal policy is to preserve $n_{1}$ packets with certain probability $\bar{p}_{I,1}$ and to transmit $n_{2}$ packets with certain probability $\bar{p}_{I,2}$, such that $E_{av}=B$.

Theorem \ref{thm:policy-simple-impre} shows that for a  predicted user $n$ arrival  with $\zeta_{n}\beta_{n}>\lambda$, if the server waits until it enters the system, then does its best to deliver it,  the expected reward is $p_{n}\frac{1-(1-\zeta_{n})^{\tau_n}}{\zeta_{n}}(-\lambda+\zeta_{n}\beta_{n})$. 
This is consistent with our intuition compared with \eqref{eq:perfect-predict-value-fun}, as that the probability that a  predicted arrival is real is $p_{n}$. Also note that although $q_{n}$ does not appear in $V_n(0, \tau)$, we will see in the following corollary that it  indirectly impacts the final timely-throughput by affecting the effective arrival rate as in (\ref{eq:pre-arrival-rate}). 
\end{remark}

Note that $c_{n}$ can intuitively be viewed as the weight put on resource consumption compared to reward. 
Hence, when $p_{n}> c_{n}$, the true-positive rate is large enough such that pre-transmitting the packet in one time slot, i.e., at time-slot $\tau_n+1$, will increase the value function. Under this circumstance, Theorem \ref{thm:policy-simple-impre} shows that the optimal scheduling is to transmit the packet as early as possible. 


For a user $n$ that satisfies $\zeta_{n}\beta_{n}>\lambda$, define
\begin{eqnarray}
v_{n}(\lambda)=
\begin{cases}
&p_{n}\frac{1-(1-\zeta_{n})^{\tau_n}}{\zeta_{n}}(-\lambda+\zeta_{n}\beta_{n}), \,\,p_{n}\le c_{n},\\
&-(1-p_{n})\frac{1-(1-\zeta_{n})^{D_{n}}}{\zeta_{n}}\lambda\\
&+p_{n}\frac{1-(1-\zeta_{n})^{\tau_n+D_{n}}}{\zeta_{n}}(-\lambda+\zeta_{n}\beta_{n}), \,\,p_{n}> c_{n}. 
\end{cases}
\end{eqnarray}
We have the following immediate corollary from Theorem \ref{thm:policy-simple-impre}. 
\begin{coro}\label{coro:throughput-simple-impre}
Let $g_{I}(\lambda)$ and $\phi_{I}^{*}$ denote the the dual function of \eqref{eq:problem} and  the optimal weighted timely-throughput in the imperfect prediction case. We have: 
\begin{eqnarray}
&&g_{I}(\lambda)=\lambda B+\sum_{n:\zeta_{n}\beta_{n}>\lambda}\big[\tilde{a}_{n}v_{n}(\lambda)\label{eq:gI-imperfect}\\ 
&&\qquad\qquad  +(a_{\max}-\tilde{a}_{n})q_{n}\frac{1-(1-\zeta_{n})^{\tau_n}}{\zeta_{n}}(-\lambda+\zeta_{n}\beta_{n})\big], \nonumber
\end{eqnarray}
and 
\begin{equation} \label{eq:throughput-simple-impre}
\phi^{*}_{I}=\min_{\lambda}g_{I}(\lambda),
\end{equation}
where $\tilde{a}_{n}$ is the rate of predicted arrivals given in \eqref{eq:pre-arrival-rate}.
\end{coro}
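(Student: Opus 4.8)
The plan is to obtain $g_{I}(\lambda)$ by substituting the closed-form value functions of Theorem \ref{thm:policy-simple-impre} (for predicted packets) and Theorem \ref{thm:policy-simple-nopre} (for unpredicted packets) directly into the general imperfect-prediction dual \eqref{eq:dual-imperfectprediction}, and then invoking the zero duality gap \eqref{eq:zero-duality-gap} to conclude that $\phi^{*}_{I}=\min_{\lambda\ge0}g_{I}(\lambda)$. Since the corollary is \emph{immediate}, the work is essentially bookkeeping: matching each arrival class to the correct packet state and deadline inside the value function.

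First I would specialize \eqref{eq:dual-imperfectprediction} to the static regime. With a single constant channel state the stationary distribution $\bm{\eta}_{n}$ is degenerate, so $\bm{\eta}_{n}^{\intercal}\tilde{\bm{V}}_{n}=V_{n}(0,\tau_n+D_{n})$ and $\bm{\eta}_{n}^{\intercal}\bm{V}_{n}=V_{n}(0,\tau_n)$. This reduces the dual to $g_{I}(\lambda)=\lambda B+\sum_{n}[\tilde{a}_{n}V_{n}(0,\tau_n+D_{n})+(a_{\max}-\tilde{a}_{n})q_{n}V_{n}(0,\tau_n)]$, where $\tilde{a}_{n}$ is the predicted-arrival rate of \eqref{eq:pre-arrival-rate} and the two rates $\tilde{a}_{n}$ and $(a_{\max}-\tilde{a}_{n})q_{n}$ partition the true arrival stream into predicted and false-negative parts.

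Next I would substitute the value functions. Predicted arrivals carry the extended deadline $\tau_n+D_{n}$, so I set $w=D_{n}$ in \eqref{eq:value-fun-pq}--\eqref{eq:value-fun-pq2}; under $\zeta_{n}\beta_{n}>\lambda$ this is exactly the two-branch expression defining $v_{n}(\lambda)$, i.e. $V_{n}(0,\tau_n+D_{n})=v_{n}(\lambda)$, with the branch chosen by whether $p_{n}\le c_{n}$ or $p_{n}>c_{n}$. The false-negative arrivals are served only after entering the system, so they behave exactly like zero-prediction packets; Theorem \ref{thm:policy-simple-nopre} evaluated at deadline $\tau_n$ gives $V_{n}(0,\tau_n)=\frac{1-(1-\zeta_{n})^{\tau_n}}{\zeta_{n}}(-\lambda+\zeta_{n}\beta_{n})$. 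For users with $\zeta_{n}\beta_{n}\le\lambda$, part (B) of Theorem \ref{thm:policy-simple-impre} and Theorem \ref{thm:policy-simple-nopre} both return value $0$, so those terms vanish and the sum collapses onto the index set $\{n:\zeta_{n}\beta_{n}>\lambda\}$, yielding \eqref{eq:gI-imperfect}. Applying \eqref{eq:zero-duality-gap} then gives $\phi^{*}_{I}=\min_{\lambda\ge0}g_{I}(\lambda)$, which is \eqref{eq:throughput-simple-impre}.

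The only point requiring care --- the \emph{main obstacle}, though it is minor --- is keeping the two arrival classes aligned with the correct packet state: predicted packets must be evaluated at the extended deadline $\tau_n+D_{n}$, so that the length-$D_{n}$ pre-service window is captured by $v_{n}$, while false-negative packets must be evaluated at the true deadline $\tau_n$ under the zero-prediction policy. Verifying that the effective rates are consistent with the decomposition in \eqref{eq:pre-arrival-rate} then closes the argument.
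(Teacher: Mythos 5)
Your proposal is correct and matches the paper's (implicit) derivation: the paper presents this corollary as immediate from Theorem \ref{thm:policy-simple-impre}, obtained exactly as you do by plugging the static-case value functions (at deadline $\tau_n+D_n$ for predicted arrivals, giving $v_n(\lambda)$, and at $\tau_n$ for false negatives, via Theorem \ref{thm:policy-simple-nopre}) into the dual \eqref{eq:dual-imperfectprediction}, dropping the zero terms for users with $\zeta_n\beta_n\le\lambda$, and invoking the zero duality gap \eqref{eq:zero-duality-gap}. Your bookkeeping of the two arrival classes and their rates $\tilde{a}_n$ and $(a_{\max}-\tilde{a}_n)q_n$ is exactly right.
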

Combining  Corollaries \ref{coro:throughput-simple-nopre} and  \ref{coro:throughput-simple-impre}, we conclude the following theorem regarding the improvement in timely-throughput with imperfect prediction. 
\begin{theorem}\label{thm:improve-simple-impre}
Suppose $\lambda_{I}^{*}=\argmin g_{I}(\lambda)$. The weighted timely-throughput improvement satisfies 
\begin{eqnarray}
\hspace{-.4in}&&\phi^{*}_{I}-\phi^{*}_{0}\le\sum_{n:\zeta_{n}\beta_{n}>\lambda_{0}^{*}}\bigg\{\tilde{a}_{n}v_{n}(\lambda_{0}^{*})\\
\hspace{-.4in}&&\quad  -\frac{(a_{n}p_{n}-a_{\max}p_{n}q_{n})[1-(1-\zeta_{n})^{\tau_n}]}{(p_{n}-q_{n})\zeta_{n}}(-\lambda_{0}^{*}+\zeta_{n}\beta_{n})\bigg\},\nonumber
\end{eqnarray}
and
\begin{eqnarray}
\hspace{-.4in}&&\phi^{*}_{I}-\phi^{*}_{0}\ge\sum_{n:\zeta_{n}\beta_{n}>\lambda_{I}^{*}}\bigg\{\tilde{a}_{n}v_{n}(\lambda_{I}^{*})\\
\hspace{-.4in}&&\quad -\frac{(a_{n}p_{n}-a_{\max}p_{n}q_{n})[1-(1-\zeta_{n})^{\tau_n}]}{(p_{n}-q_{n})\zeta_{n}}(-\lambda_{I}^{*}+\zeta_{n}\beta_{n})\bigg\},\nonumber
\end{eqnarray}
where $\tilde{a}_{n}$ is given in \eqref{eq:pre-arrival-rate}.
\end{theorem}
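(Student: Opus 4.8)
The plan is to mirror the minimax (sandwich) argument used in the proof of Theorem \ref{thm:improve-simple-perpre}, now comparing the imperfect-prediction dual $g_{I}$ against the zero-prediction dual $g_{0}$. First I would invoke the optimality of the minimizers $\lambda_{I}^{*}$ and $\lambda_{0}^{*}$ to write
\begin{align*}
\phi^{*}_{I} &= \min_{\lambda} g_{I}(\lambda) \le g_{I}(\lambda_{0}^{*}), \\
\phi^{*}_{0} &= \min_{\lambda} g_{0}(\lambda) \le g_{0}(\lambda_{I}^{*}),
\end{align*}
which immediately yields the two-sided bound
\begin{equation*}
g_{I}(\lambda_{I}^{*}) - g_{0}(\lambda_{I}^{*}) \le \phi^{*}_{I} - \phi^{*}_{0} \le g_{I}(\lambda_{0}^{*}) - g_{0}(\lambda_{0}^{*}).
\end{equation*}
It then suffices to evaluate the gap $g_{I}(\lambda) - g_{0}(\lambda)$ at a generic $\lambda$ and specialize to $\lambda=\lambda_{0}^{*}$ for the upper bound and $\lambda=\lambda_{I}^{*}$ for the lower bound.

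The second step is to subtract the two dual functions explicitly. Using the expression for $g_{I}(\lambda)$ in Corollary \ref{coro:throughput-simple-impre} and the $D_{n}=0$ specialization of $g_{P}(\lambda)$ in Corollary \ref{coro:throughput-simple-nopre} for $g_{0}(\lambda)$, the $\lambda B$ terms cancel, and since both sums run over the same index set $\{n:\zeta_{n}\beta_{n}>\lambda\}$ at a fixed $\lambda$, I would obtain
\begin{equation*}
g_{I}(\lambda) - g_{0}(\lambda) = \sum_{n:\zeta_{n}\beta_{n}>\lambda}\Big[\tilde{a}_{n} v_{n}(\lambda) + \big((a_{\max}-\tilde{a}_{n})q_{n} - a_{n}\big)\tfrac{1-(1-\zeta_{n})^{\tau_n}}{\zeta_{n}}(-\lambda+\zeta_{n}\beta_{n})\Big].
\end{equation*}

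The crucial step, which I expect to be the only delicate part, is to simplify the coefficient $(a_{\max}-\tilde{a}_{n})q_{n} - a_{n}$. Here I would invoke the arrival-rate identity $a_{n} = \tilde{a}_{n}p_{n} + (a_{\max}-\tilde{a}_{n})q_{n}$ established just before \eqref{eq:pre-arrival-rate}, which collapses this coefficient to exactly $-\tilde{a}_{n}p_{n}$. Substituting $\tilde{a}_{n} = (a_{n}-a_{\max}q_{n})/(p_{n}-q_{n})$ from \eqref{eq:pre-arrival-rate} then turns $\tilde{a}_{n}p_{n}$ into $(a_{n}p_{n} - a_{\max}p_{n}q_{n})/(p_{n}-q_{n})$, reproducing precisely the subtracted term appearing in the theorem statement.

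Finally, evaluating the resulting clean expression for $g_{I}(\lambda)-g_{0}(\lambda)$ at $\lambda=\lambda_{0}^{*}$ delivers the upper bound and at $\lambda=\lambda_{I}^{*}$ delivers the lower bound, completing the proof. The argument requires no new inequalities beyond the optimality of the two minimizers; all of the content lies in the algebraic cancellation driven by the false-negative correction term, which is exactly what makes the false-negative rate $q_{n}$ enter the final bound only indirectly, through the effective predicted rate $\tilde{a}_{n}$.
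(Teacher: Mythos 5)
Your proposal is correct and follows essentially the same route as the paper: the paper proves this theorem by the same sandwich argument as Theorem \ref{thm:improve-simple-perpre}, namely $g_{I}(\lambda_{I}^{*})-g_{0}(\lambda_{I}^{*})\le \phi^{*}_{I}-\phi^{*}_{0}\le g_{I}(\lambda_{0}^{*})-g_{0}(\lambda_{0}^{*})$, combined with the dual expressions from Corollaries \ref{coro:throughput-simple-nopre} and \ref{coro:throughput-simple-impre}. Your explicit simplification of the coefficient $(a_{\max}-\tilde{a}_{n})q_{n}-a_{n}$ to $-\tilde{a}_{n}p_{n}$ via the identity $a_{n}=\tilde{a}_{n}p_{n}+(a_{\max}-\tilde{a}_{n})q_{n}$ is exactly the algebra the paper leaves implicit, and it correctly yields the stated bounds.
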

\begin{proof}
Similar to the proof of Theorem \ref{thm:improve-simple-perpre}.
\end{proof}
\begin{remark}
Similar to Theorem \ref{thm:improve-simple-perpre}, we still have in the imperfect prediction case that , the improvement scales in the order of $\Theta(p_{n}\left[C_{1}\frac{(a_{n}-a_{\max}q_{n})}{p_{n}-q_{n}}\rho^{\tau_n}+C_{2}(1-\frac{1}{p_{n}})\right](1-\rho^{D_{n}}))$, which recovers the perfect prediction result when $p_{n}=1$ and $q_{n}=0$. This general result shows how different parameters affect the optimal timely-throughput, and will be useful for guiding prediction and control algorithm design for general computing systems. 
\end{remark}

\subsection{Influence of Prediction Accuracy} 
We now investigate how prediction accuracy impacts performance improvement. The following theorem summarizes our results.

\begin{theorem}\label{throughput-simple-p-q}
Let $\bm{q}=\{q_{1},q_{2},\dots,q_{N}\}$ denote the false-negative rate vector, and let $\bm{p}=\{p_{1},p_{2},\dots,p_{N}\}$ be the true-positive rate vector. Then, 

(i) The optimal weighted timely-throughput $\phi^*_I$ is a non-increasing function of $\bm{q}$. 

(ii) If $\bm{q}=\bm{0}$, then the optimal weighted timely-throughput $\phi^*_I$ is a non-decreasing function of $\bm{p}$. 
\end{theorem}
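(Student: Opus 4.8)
The plan is to exploit that $\phi^{*}_{I}=\min_{\lambda\ge 0}g_{I}(\lambda)$ is a pointwise minimum, together with the elementary fact that a pointwise minimum of a family of functions that are \emph{all} non-increasing (resp.\ non-decreasing) in a parameter is itself non-increasing (resp.\ non-decreasing) in that parameter. Hence it suffices, for each fixed $\lambda$, to establish the stated monotonicity of $g_{I}(\lambda)$ in $\bm q$ for (i) and in $\bm p$ (at $\bm q=\bm 0$) for (ii), and then transfer it through the minimum. Because $g_{I}(\lambda)$ in \eqref{eq:gI-imperfect} decomposes as a sum over users, and because neither the active set $\{n:\zeta_n\beta_n>\lambda\}$ nor the threshold $c_n$ of Theorem~\ref{thm:policy-simple-impre} depends on $q_n$, I can argue term-by-term on each user-$n$ summand. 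Writing $M_n\triangleq\frac{1-(1-\zeta_n)^{\tau_n}}{\zeta_n}(-\lambda+\zeta_n\beta_n)$, that summand is $T_n=\tilde a_n v_n+(a_{\max}-\tilde a_n)q_n M_n$.

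For part (i), I would first isolate the $q_n$-dependence. Using the relation $a_n=\tilde a_n p_n+(a_{\max}-\tilde a_n)q_n$ underlying \eqref{eq:pre-arrival-rate}, the false-negative term equals $a_n-\tilde a_n p_n$, so $T_n=\tilde a_n(v_n-p_nM_n)+a_nM_n$. Since $v_n$, $M_n$ and $a_n$ are independent of $q_n$ for fixed $p_n$, this gives $\partial T_n/\partial q_n=(\partial\tilde a_n/\partial q_n)(v_n-p_nM_n)$, and two sign facts close the case. From \eqref{eq:pre-arrival-rate}, $\partial\tilde a_n/\partial q_n=(a_n-a_{\max}p_n)/(p_n-q_n)^2\le 0$ because $a_n/a_{\max}\le p_n$; and $v_n-p_nM_n\ge 0$, vanishing when $p_n\le c_n$ and, when $p_n>c_n$, equal to $\frac{1-(1-\zeta_n)^{D_n}}{\zeta_n}\big[p_n(1-\zeta_n)^{\tau_n}(-\lambda+\zeta_n\beta_n)-(1-p_n)\lambda\big]$, whose bracket is precisely the quantity rendered positive by the defining inequality $p_n>c_n$. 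Thus $\partial T_n/\partial q_n\le 0$, which yields (i) after transferring through the minimum.

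For part (ii), I would set $\bm q=\bm 0$, so $\tilde a_n=a_n/p_n$ by \eqref{eq:pre-arrival-rate} and the false-negative term drops, leaving each summand equal to $\frac{a_n}{p_n}v_n$. I then study $\frac{a_n}{p_n}v_n$ as $p_n$ varies across the (now $p_n$-dependent) threshold $c_n$. When $p_n\le c_n$, \eqref{eq:value-fun-pq2} gives $v_n=p_nM_n$, so $\frac{a_n}{p_n}v_n=a_nM_n$ is constant in $p_n$. When $p_n>c_n$, \eqref{eq:value-fun-pq} gives $\frac{a_n}{p_n}v_n=a_n\big[K_1+K_2-K_1/p_n\big]$ with $K_1\triangleq\frac{1-(1-\zeta_n)^{D_n}}{\zeta_n}\lambda\ge 0$ and $K_2\triangleq\frac{1-(1-\zeta_n)^{\tau_n+D_n}}{\zeta_n}(-\lambda+\zeta_n\beta_n)\ge 0$, whose derivative $a_nK_1/p_n^2\ge 0$. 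Since the two branches agree at $p_n=c_n$ (there $v_n-p_nM_n=0$ by the computation in part (i)), $\frac{a_n}{p_n}v_n$ is continuous and non-decreasing on the admissible range $p_n\in[a_n/a_{\max},1]$, and transferring through the minimum gives (ii).

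The hard part will be the piecewise definition of $v_n$ at the threshold $c_n$. In (i) this is benign, since $c_n$ is independent of $q_n$ and each user stays in a fixed branch; the only real content is the sign of $v_n-p_nM_n$, which I read off directly from $p_n>c_n$. In (ii) the threshold itself depends on $p_n$, so a user may cross branches as $p_n$ grows, and the crux is to verify both that each branch is non-decreasing and that no downward jump occurs at the crossover --- the latter following from continuity of the value function at $p_n=c_n$. I would also note in passing that, throughout the variation of $p_n$ with $\bm q=\bm 0$, admissibility $0\le\tilde a_n=a_n/p_n\le a_{\max}$ is preserved, so every $g_{I}(\lambda)$ in play is well defined.
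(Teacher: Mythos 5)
Your proposal is correct and takes essentially the same route as the paper's proof: reduce both parts to showing monotonicity of $g_{I}(\lambda,\cdot)$ for each fixed $\lambda$ and then differentiate the per-user summands, arriving at exactly the same derivative expressions, namely $\frac{a_{n}-a_{\max}p_{n}}{(p_{n}-q_{n})^{2}}\left[v_{n}(\lambda)-p_{n}M_{n}\right]$ for part (i) and $a_{n}K_{1}/p_{n}^{2}\ge 0$ on the branch $p_{n}>c_{n}$ for part (ii). The only cosmetic differences are that you establish the key inequality $v_{n}(\lambda)\ge p_{n}M_{n}$ by direct algebra on the threshold condition $p_{n}>c_{n}$ whereas the paper invokes optimality of pre-serving over waiting until arrival, and you handle the branch crossover in (ii) via continuity at $p_{n}=c_{n}$ rather than the paper's increment argument; both are sound.
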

\begin{proof}
See Appendix C. 
\end{proof}
The results, though intuitive, are non-trivial to established and require detailed investigation of the structure of the value functions. The impact of general $(\bm{p}, \bm{q})$ pairs, on the other hand, is much more complicated to characterize, as we will see in Fig. \ref{fig:throughput-surface} in the simulation section.


\section{The General Scenario}\label{sec:general} 
We now return to the general case. We first show that the optimal policy in the perfect prediction case (including zero prediction) is monotone with respect to  the time-to-deadline. 
\begin{theorem} \label{theorem:monotone}
Suppose $\zeta_{n}(i,e)$ is a concave strictly increasing function of $e$. 
In the perfect prediction case, define $e_{n}^{*}(0,\tau,i)$ as the optimal scheduling decision for a user $n$ packet at state $(0,\tau,i)$. We have: 
\begin{equation}\label{eq:policy-monotonicity-general}
e_{n}^{*}(0,\tau+1,i)\le e_{n}^{*}(0,\tau,i), \forall n,i,
\end{equation}
where $0<\tau<\tau_n+D_n$ (where $D_n=0$ denotes the zero prediction case). 
\end{theorem}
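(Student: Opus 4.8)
The plan is to reduce the Bellman recursion \eqref{eq:dp}–\eqref{eq:dp2} to a one-dimensional maximization whose payoff depends on $\tau$ only through a scalar ``gain'' term, and then to invoke a monotone comparative statics argument. First I would substitute $V_n(1,\tau-1,j)=\beta_n$ into \eqref{eq:dp}; since $\sum_j P_n^{i,j}=1$, the recursion collapses to
\begin{equation*}
V_n(0,\tau,i)=\bar V_n(0,\tau-1,i)+\max_{e\in\mathcal E}\,h\big(e,G_n(\tau,i)\big),
\end{equation*}
where $\bar V_n(0,\tau-1,i)\triangleq\sum_j P_n^{i,j}V_n(0,\tau-1,j)$, $G_n(\tau,i)\triangleq\beta_n-\bar V_n(0,\tau-1,i)$, and $h(e,G)\triangleq-\lambda e+\zeta_n(i,e)\,G$. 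Thus $e_n^*(0,\tau,i)$ maximizes $h(\cdot,G_n(\tau,i))$, and everything hinges on how $G_n(\tau,i)$ varies with $\tau$.

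The key lemma I would establish, by induction on $\tau$, is that $0\le V_n(0,\tau,i)\le\beta_n$ and that $V_n(0,\tau,i)$ is non-decreasing in $\tau$ for every fixed $i$. The base case $V_n(0,0,i)=0$ is immediate, and $e=0$ (with $\zeta_n(i,0)=0$) is always feasible and yields value $\bar V_n(0,\tau-1,i)\ge 0$, giving the lower bound. For the upper bound, if $V_n(0,\tau-1,j)\le\beta_n$ for all $j$, then the convex combination $\zeta_n(i,e)\beta_n+(1-\zeta_n(i,e))\bar V_n(0,\tau-1,i)\le\beta_n$ while $-\lambda e\le0$, so $V_n(0,\tau,i)\le\beta_n$. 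For monotonicity I would use a same-action coupling: insert the maximizer $e'$ of the inner problem defining $V_n(0,\tau-1,i)$ into the recursion for $V_n(0,\tau,i)$; since $1-\zeta_n(i,e')\ge0$ and, by the induction hypothesis, $\bar V_n(0,\tau-1,i)\ge\bar V_n(0,\tau-2,i)$ (averaging against $P_n^{i,j}$ preserves monotonicity), this feasible choice already attains a value at least $V_n(0,\tau-1,i)$, whence $V_n(0,\tau,i)\ge V_n(0,\tau-1,i)$.

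With the lemma in hand, $G_n(\tau,i)=\beta_n-\bar V_n(0,\tau-1,i)\in[0,\beta_n]$ is non-increasing in $\tau$, and because $G_n(\tau,i)\ge0$ and $\zeta_n(i,\cdot)$ is concave, $h(\cdot,G_n(\tau,i))$ is concave in $e$. I would then verify increasing differences: for $e'>e$ and $G'>G$,
\begin{equation*}
\big[h(e',G')-h(e,G')\big]-\big[h(e',G)-h(e,G)\big]=\big[\zeta_n(i,e')-\zeta_n(i,e)\big](G'-G)\ge0,
\end{equation*}
since $\zeta_n(i,\cdot)$ is increasing. By Topkis's monotonicity theorem the maximizing action is non-decreasing in the scalar $G$; as $G_n(\tau,i)$ is non-increasing in $\tau$, the optimal action is non-increasing in $\tau$, i.e.\ $e_n^*(0,\tau+1,i)\le e_n^*(0,\tau,i)$, which is \eqref{eq:policy-monotonicity-general}.

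I expect the monotone comparative statics step to be the delicate one. It relies crucially on $G_n(\tau,i)\ge0$, so that increasing $\zeta_n$ is beneficial and the cross-difference carries the correct sign; this is precisely why the bound $V_n(0,\tau,i)\le\beta_n$ must be propagated through the induction alongside monotonicity, rather than proved separately. A secondary subtlety is non-uniqueness of the maximizer when $\zeta_n(i,\cdot)$ is only weakly concave: there Topkis yields monotonicity of the extremal selections in the strong set order, so \eqref{eq:policy-monotonicity-general} should be read for the largest (equivalently smallest) optimal action, or one strengthens the hypothesis to strict concavity to obtain a unique $e_n^*(0,\tau,i)$.
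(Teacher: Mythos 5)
Your proof is correct, and its core is the same as the paper's: your key lemma (that $V_n(0,\tau,i)$ is non-decreasing in $\tau$ and bounded by $\beta_n$, proved by induction with a same-action coupling) is exactly the paper's Properties (a) and (b), and your increasing-differences computation is, after substituting $G_n(\tau,i)=\beta_n-\sum_j P_n^{i,j}V_n(0,\tau-1,j)$, literally the inequality the paper verifies in its discrete case. The packaging differs in a way that buys something. The paper splits the comparative-statics step into two cases: continuous $\mathcal{E}$, handled by setting the derivative of the Bellman objective to zero (which uses concavity of $\zeta_n(i,\cdot)$ and needs the bound $V_n<\beta_n$ to keep the denominator $\beta_n-\sum_j P_n^{i,j}V_n(0,\tau-1,j)$ positive), and discrete $\mathcal{E}$, handled by a hand-rolled single-crossing argument. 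Your Topkis formulation covers both cases at once and makes clear that concavity plays no role in the monotonicity of the policy---it only matters for uniqueness of the maximizer, as you note. You are also right to flag tie-breaking: the paper's conclusion, like yours, is really monotonicity of extremal selections of the argmax, a point the paper glosses over. One small correction to your closing commentary: $G_n(\tau,i)\ge0$ is not what makes the cross-difference $\left[\zeta_n(i,e')-\zeta_n(i,e)\right](G'-G)$ nonnegative; only monotonicity of $\zeta_n$ in $e$ and monotonicity of $G_n(\tau,i)$ in $\tau$ enter there. The upper bound $V_n\le\beta_n$ is what the paper's continuous-case derivative formula requires, but your Topkis argument would go through without it.
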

\begin{proof}
See Appendix \ref{apd:monotone}. 
\end{proof}
Theorem \ref{theorem:monotone} shows that the optimal scheduling policy is a ``lazy'' policy, i.e., the server will not try hard to transmit the packet unless it is getting close to the deadline. From the proof of Theorem \ref{theorem:monotone}, we see that this is because the value function under the same channel state is monotonically non-decreasing with the time-to-deadline. Thus, the server will try to spend less resource at the beginning to see if the packet can luckily get through, and spend more resource when the deadline is getting close, so as to pursue the reward of successful delivery.
This result nicely complements existing efficient scheduling results in the literature \cite{prabhakar2001energy}, \cite{modiano2009minimum}, \cite{modiano2009calculus}, and can be viewed as an extension to the predictive online scheduling setting. 

It is tempting to conclude that similar property also holds in the general imperfect prediction case. However,  \emph{this monotonicity actually does not hold under imperfect prediction.} 
This is because the value function is no longer monotone due to prediction error. 
This fact will be  illustrated by simulation in Section \ref{sec:simulation}, where we see that the resource level can actual decrease with a smaller time-to-deadline (See User $2$'s strategy in Table \ref{tab:imprediction}).

\subsection{Throughput Improvement}
In this subsection, we investigate the throughput improvement due to  prediction in the general scenario.
\subsubsection{Perfect prediction} 
First we consider the perfect prediction case. Define $i_{n}^{\max}\triangleq \argmax_{i}\zeta_{n}(i,e),\forall e$ and $i_{n}^{\min}\triangleq\argmin_{i}\zeta_{n}(i,e),\forall e$. 
Both $i_{n}^{\max}$ and $i_{n}^{\min}$ are well defined since there is a complete order on $\mathcal{S}$ based on $\zeta_{n}(i, e)$ (see Section \ref{sec:service-model}). 
We then have the following theorem (the zero prediction case is a special case, i.e., $D_n=0$ for all $n$). 
\begin{theorem}\label{thm:value-general-perpre}
For $1\le n\le N,0<\tau\le \tau_n+D_{n}$, define 
\begin{equation} \label{eq:value-general-perpre-lower}
\begin{aligned}
V_{n}^{l}(\tau)\triangleq\max_{e>0}\frac{1-[1-\zeta_{n}(i_{n}^{\min},e)]^{\tau}}{\zeta_{n}(i_{n}^{\min},e)}[-\lambda e+\zeta_{n}(i_{n}^{\min},e)\beta_{n}],
\end{aligned}
\end{equation}
where $V_{n}^{l}(0)=0$, and
\begin{eqnarray} 
\hspace{-.2in}&&V_{n}^{u}(\tau)\triangleq\sum_{z=1}^{\tau}\max_{e}\{-\lambda e\label{eq:value-general-perpre-upper}\\
\hspace{-.2in}&& \qquad\qquad+\zeta_{n}(i_{n}^{\max},e)(\beta_{n}-\max\{0,V_{n}^{l}(z-1)\})\}, \nonumber 
\end{eqnarray}
where $V_{n}^{u}(0)=0$. 
For each user $n$ packet, given a fixed $\lambda$, the value function satisfies: 
\begin{equation} \label{eq:value-general-perpre}
\max\{0,V_{n}^{l}(\tau)\}\le V_{n}(0,\tau,i)\le\min\{\beta_{n},V_{n}^{u}(\tau)\}
\end{equation} 
where $0<\tau\le \tau_n+D_n$.
\end{theorem}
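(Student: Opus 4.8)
The plan is to exploit the fact that a delivered packet earns reward $\beta_{n}$ regardless of channel state, so $V_{n}(1,\tau-1,j)=\beta_{n}$ for every $j$ in the range of interest and the recursion \eqref{eq:dp} collapses to
\begin{equation*}
V_{n}(0,\tau,i)=\max_{e}\Big\{-\lambda e+\zeta_{n}(i,e)\big(\beta_{n}-W_{n}(\tau-1,i)\big)\Big\}+W_{n}(\tau-1,i),
\end{equation*}
where $W_{n}(\tau-1,i)\triangleq\sum_{j}P_{n}^{i,j}V_{n}(0,\tau-1,j)$ is the expected continuation value. All bounds in \eqref{eq:value-general-perpre} will follow from this one-step identity together with a preliminary fact (a): $0\le V_{n}(0,\tau,i)\le\beta_{n}$. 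I would prove (a) by induction on $\tau$: the base case $V_{n}(0,0,i)=0$ holds, and if $W_{n}(\tau-1,i)\in[0,\beta_{n}]$ then the value, being a maximum over $e$, lies between the $e=0$ choice (which returns $W_{n}(\tau-1,i)\ge0$) and $\beta_{n}$ (since $\zeta_{n}\le1$ and $\beta_{n}-W_{n}\ge0$).

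For the lower bound $V_{n}(0,\tau,i)\ge\max\{0,V_{n}^{l}(\tau)\}$, I would exhibit two feasible policies and invoke optimality. The bound $V_{n}\ge0$ comes from the ``never transmit'' policy ($e\equiv0$), which incurs no cost and no reward. The bound $V_{n}\ge V_{n}^{l}(\tau)$ comes from the stationary fixed-energy policy that transmits with the maximizing level $e^{\ast}$ from \eqref{eq:value-general-perpre-lower} in every slot until success or expiry. When the channel is frozen at the worst state $i_{n}^{\min}$, its value is exactly the summand of \eqref{eq:value-general-perpre-lower}, because the first-success time is geometric with parameter $\zeta_{n}(i_{n}^{\min},e^{\ast})$, giving success probability $1-(1-\zeta_{n}(i_{n}^{\min},e^{\ast}))^{\tau}$ and expected transmission count $\tfrac{1-(1-\zeta_{n}(i_{n}^{\min},e^{\ast}))^{\tau}}{\zeta_{n}(i_{n}^{\min},e^{\ast})}$. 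The key point is that running the same policy under the \emph{actual} Markov channel can only do better: since $i_{n}^{\min}$ minimizes $\zeta_{n}(\cdot,e)$ over all states, a coupling shows the first-success time under the true channel is stochastically no larger than under the frozen worst channel, which simultaneously raises the expected reward and lowers the expected transmission count.

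With the lower bound in hand, I would prove $V_{n}(0,\tau,i)\le V_{n}^{u}(\tau)$ by induction on $\tau$, reading \eqref{eq:value-general-perpre-upper} as the telescoped recursion $V_{n}^{u}(\tau)=U_{n}(\tau)+V_{n}^{u}(\tau-1)$ with $U_{n}(\tau)\triangleq\max_{e}\{-\lambda e+\zeta_{n}(i_{n}^{\max},e)(\beta_{n}-\max\{0,V_{n}^{l}(\tau-1)\})\}$. Starting from the collapsed recursion, two monotone replacements bound the one-step term. First, since $\beta_{n}-W_{n}(\tau-1,i)\ge0$ by fact (a), enlarging the success probability to that of the best state, $\zeta_{n}(i,e)\le\zeta_{n}(i_{n}^{\max},e)$, only increases each term of the maximization. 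Second, the lower bound gives $W_{n}(\tau-1,i)\ge\max\{0,V_{n}^{l}(\tau-1)\}$, so replacing it in the nonnegatively weighted coefficient $\beta_{n}-W_{n}(\tau-1,i)$ raises the one-step term to exactly $U_{n}(\tau)$. Hence $V_{n}(0,\tau,i)\le U_{n}(\tau)+W_{n}(\tau-1,i)$, and averaging the inductive hypothesis $V_{n}(0,\tau-1,j)\le V_{n}^{u}(\tau-1)$ against the row $(P_{n}^{i,j})_{j}$ yields $W_{n}(\tau-1,i)\le V_{n}^{u}(\tau-1)$, closing the step. Combined with $V_{n}\le\beta_{n}$ from fact (a), this gives $V_{n}(0,\tau,i)\le\min\{\beta_{n},V_{n}^{u}(\tau)\}$.

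The main obstacle is the lower bound: one must verify that the fixed-energy policy evaluated under the genuine Markov channel dominates its frozen-worst-channel counterpart, which requires a careful coupling (or a slot-by-slot stochastic-dominance induction) to control the two coupled quantities—expected reward and expected transmission count—simultaneously. The upper bound is then largely bookkeeping, but one must respect the dual role of $W_{n}(\tau-1,i)$, which appears both as an additive continuation term and, with a minus sign, inside the channel-weighted coefficient; fact (a) and the lower bound are precisely what make both appearances point in the correct direction.
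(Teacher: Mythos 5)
Your proposal is correct. Its second half—fact (a) and the upper-bound induction that first enlarges $\zeta_n(i,e)$ to $\zeta_n(i_n^{\max},e)$ (legitimate because $\beta_n-W_n\ge 0$) and then inserts the lower bound $\max\{0,V_n^l(\tau-1)\}$ into the coefficient while the induction hypothesis controls the additive continuation term—is exactly the paper's own argument (steps (d)--(e) in its appendix). Where you genuinely differ is the lower bound. The paper never constructs a policy: it stays inside the Bellman recursion and proves $V_n(0,\tau,i)\ge V_n^l(\tau)$ by induction, degrading the channel to $i_n^{\min}$ term by term (a step that, unlike yours, also uses $V_n\le\beta_n$), substituting the induction hypothesis under the nonnegative weight $1-\zeta_n(i_n^{\min},e)$, and finally restricting the inner maximization to the same $e$ as the outer one, so that the geometric identity $-\lambda e+\zeta\beta_n+(1-\zeta)\cdot\frac{1-(1-\zeta)^{t}}{\zeta}(-\lambda e+\zeta\beta_n)=\frac{1-(1-\zeta)^{t+1}}{\zeta}(-\lambda e+\zeta\beta_n)$ closes the induction with no coupling anywhere. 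You instead evaluate the explicit stationary fixed-energy policy and dominate the frozen worst channel by the true one. That route is sound: coupling the success events slot by slot through common uniform variables makes the first-success time $T$ pathwise no larger under the true channel (since $\zeta_n(S(t),e^{\ast})\ge\zeta_n(i_n^{\min},e^{\ast})$ for every state), and both the reward ($\beta_n$ times the indicator of $\{T\le\tau\}$) and the cost ($\lambda e^{\ast}\min\{T,\tau\}$) are monotone functionals of $T$, so the policy value can only increase, and optimality of $V_n$ finishes it. Your route buys an operational interpretation of $V_n^l$ as the value of a concrete policy on a frozen worst channel, and it does not need fact (a) for the lower bound; the paper's route buys a shorter, purely algebraic proof with no stochastic-dominance machinery. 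The only debt in your write-up is the coupling argument itself, which you correctly flag as the step requiring care but leave unproved.
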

\begin{proof}
See Appendix \ref{apd:value-perpre}.
\end{proof}
We then immediately have  the following corollary, which shows that the functions $V_{n}^{l}$ and $V_{n}^{u}$ enable us to bound the optimal timely-throughput with prediction. 
\begin{coro}\label{coro:throughput-general-perpre}
Define
\begin{equation}
\begin{aligned}
g_{P}^{u}(\lambda)\triangleq&\lambda B+\sum_{n=1}^{N}a_{n}\min\{\beta_{n},V_{n}^{u}(\tau_n+D_{n})\}
\end{aligned}
\end{equation}
and 
\begin{equation}
\begin{aligned}
g_{P}^{l}(\lambda)\triangleq&\lambda B+\sum_{n=1}^{N}a_{n}\max\{0,V_{n}^{l}(\tau_n+D_{n})\},
\end{aligned}
\end{equation}
where $V_{n}^{l}(\tau)$ and $V_{n}^{u}(\tau)$ are defined in \eqref{eq:value-general-perpre-lower} and \eqref{eq:value-general-perpre-upper}. 
Let $g_{P}(\lambda)$ and $\phi_{P}^{*}$ be the dual function and the optimal timely-throughput with perfect prediction in the general case, respectively. We have: 
\begin{equation}
g_{P}^{l}(\lambda)\le g_{P}(\lambda)\le g_{P}^{u}(\lambda), \forall \lambda\ge0
\end{equation}
and 
\begin{equation} \label{eq:throughput-general-perpre}
\min_{\lambda}g_{P}^{l}(\lambda)\le\phi^{*}_{P}\le \min_{\lambda}g_{P}^{u}(\lambda).
\end{equation}
\end{coro}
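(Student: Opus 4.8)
The plan is to combine the state-wise value bounds from Theorem \ref{thm:value-general-perpre} with the dual representation \eqref{eq:dual-noprediction} and the zero-duality-gap identity \eqref{eq:zero-duality-gap}. First I would recall that in the perfect-prediction case the dual function has the form $g_{P}(\lambda)=\sum_{n=1}^{N}a_{n}\bm{\eta}_{n}^{\intercal}\bm{V}_{n}+\lambda B$, where $\bm{V}_{n}$ collects the optimal values $V_{n}(0,\tau_n+D_{n},i)$ across all channel states $i$. The crucial observation is that the bounds $\max\{0,V_{n}^{l}(\tau_n+D_{n})\}$ and $\min\{\beta_{n},V_{n}^{u}(\tau_n+D_{n})\}$ supplied by Theorem \ref{thm:value-general-perpre}, evaluated at $\tau=\tau_n+D_{n}$, are \emph{independent of the channel index $i$}, since $V_{n}^{l}$ and $V_{n}^{u}$ in \eqref{eq:value-general-perpre-lower} and \eqref{eq:value-general-perpre-upper} depend only on the extremal states $i_{n}^{\min}$ and $i_{n}^{\max}$.

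Next I would exploit this $i$-independence together with the fact that $\bm{\eta}_{n}^{\intercal}\bm{V}_{n}=\sum_{i}\eta_{n}^{i}V_{n}(0,\tau_n+D_{n},i)$ is a convex combination, because $\sum_{i}\eta_{n}^{i}=1$. Applying the pointwise bound $\max\{0,V_{n}^{l}(\tau_n+D_{n})\}\le V_{n}(0,\tau_n+D_{n},i)\le\min\{\beta_{n},V_{n}^{u}(\tau_n+D_{n})\}$ to each summand, and using that a convex combination of numbers lying in an interval stays in that interval, I obtain
\[
\max\{0,V_{n}^{l}(\tau_n+D_{n})\}\le\bm{\eta}_{n}^{\intercal}\bm{V}_{n}\le\min\{\beta_{n},V_{n}^{u}(\tau_n+D_{n})\}.
\]
Multiplying by $a_{n}\ge0$, summing over $n$, and adding $\lambda B$ then yields $g_{P}^{l}(\lambda)\le g_{P}(\lambda)\le g_{P}^{u}(\lambda)$ for every $\lambda\ge0$, which is the first claim.

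Finally, I would lift this pointwise-in-$\lambda$ sandwich to the minimizers. Invoking the zero-duality-gap identity $\phi_{P}^{*}=\min_{\lambda\ge0}g_{P}(\lambda)$ from \eqref{eq:zero-duality-gap}, together with the elementary monotonicity of the minimum operator (if $f\le h$ pointwise, then $\min f\le h(\argmin h)=\min h$), the lower bound $g_{P}^{l}\le g_{P}$ gives $\min_{\lambda}g_{P}^{l}(\lambda)\le\min_{\lambda}g_{P}(\lambda)=\phi_{P}^{*}$, while the upper bound $g_{P}\le g_{P}^{u}$ gives $\phi_{P}^{*}=\min_{\lambda}g_{P}(\lambda)\le\min_{\lambda}g_{P}^{u}(\lambda)$. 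Together these establish \eqref{eq:throughput-general-perpre}.

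I expect the only subtle point to be the second step: one must verify carefully that the bounds of Theorem \ref{thm:value-general-perpre} are genuinely uniform over the channel index $i$, since it is precisely this uniformity that lets the stationary-distribution averaging $\bm{\eta}_{n}^{\intercal}\bm{V}_{n}$ preserve the inequalities. Once that is confirmed, the remainder reduces to the convex-combination argument and the order-preservation of the minimum, so no substantial analytical obstacle remains.
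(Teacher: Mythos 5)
Your proof is correct and matches the paper's intent: the paper states this corollary as an immediate consequence of Theorem \ref{thm:value-general-perpre}, the dual representation \eqref{eq:dual-noprediction}, and the zero-duality-gap identity \eqref{eq:zero-duality-gap}, which is exactly the chain you spell out. The steps you flag as potentially subtle (the $i$-uniformity of the bounds and the averaging under $\bm{\eta}_{n}$) do hold, since Theorem \ref{thm:value-general-perpre} bounds $V_{n}(0,\tau,i)$ for every channel state $i$ by quantities depending only on $i_{n}^{\min}$ and $i_{n}^{\max}$, and $\bm{\eta}_{n}$ is a probability vector.
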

We similarly define $g_{0}^{u}(\lambda)$ and $g_{0}^{l}(\lambda)$  for the zero prediction case. Then, the following theorem characterizes the improvement in weighted timely-throughput from prediction. 
\begin{theorem}\label{thm:improve-general-perpre} 
Let $\lambda_{0}^{*}=\argmin_{\lambda} g_{0}^{l}(\lambda)$ and $\lambda_{P}^{*}=\argmin_{\lambda} g_{P}^{l}(\lambda)$. The weighted timely-throughput improvement satisfies:  
\begin{equation} 
g_{P}^{l}(\lambda_{P}^{*})-g_{0}^{u}(\lambda_{P}^{*})\le \phi^{*}_{P}-\phi^{*}_{0}\le g_{P}^{u}(\lambda_{0}^{*})-g_{0}^{l}(\lambda_{0}^{*}). 
\end{equation}
\end{theorem}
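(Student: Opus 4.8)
The plan is to follow the same sandwich strategy used in the proof of Theorem \ref{thm:improve-simple-perpre}, but to replace the exact dual functions — which are no longer available in closed form in the general case — by the envelope functions $g^{l}$ and $g^{u}$ supplied by Corollary \ref{coro:throughput-general-perpre}. The two ingredients I would invoke are: (a) the zero-duality-gap identity \eqref{eq:zero-duality-gap}, which gives $\phi^{*}_{P}=\min_{\lambda}g_{P}(\lambda)$ and $\phi^{*}_{0}=\min_{\lambda}g_{0}(\lambda)$; and (b) the pointwise envelope inequalities $g_{P}^{l}(\lambda)\le g_{P}(\lambda)\le g_{P}^{u}(\lambda)$ and $g_{0}^{l}(\lambda)\le g_{0}(\lambda)\le g_{0}^{u}(\lambda)$ established in Corollary \ref{coro:throughput-general-perpre}.

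For the upper bound, I first evaluate the perfect-prediction dual at $\lambda_{0}^{*}$ and chain the minimality of $\phi^{*}_{P}$ with the upper envelope: $\phi^{*}_{P}=\min_{\lambda}g_{P}(\lambda)\le g_{P}(\lambda_{0}^{*})\le g_{P}^{u}(\lambda_{0}^{*})$. Next I lower-bound $\phi^{*}_{0}$ using the lower envelope together with the fact that $\lambda_{0}^{*}$ is by definition the minimizer of $g_{0}^{l}$: $\phi^{*}_{0}=\min_{\lambda}g_{0}(\lambda)\ge\min_{\lambda}g_{0}^{l}(\lambda)=g_{0}^{l}(\lambda_{0}^{*})$. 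Subtracting the second from the first yields $\phi^{*}_{P}-\phi^{*}_{0}\le g_{P}^{u}(\lambda_{0}^{*})-g_{0}^{l}(\lambda_{0}^{*})$.

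For the lower bound I run the symmetric argument at $\lambda_{P}^{*}$. Since $\lambda_{P}^{*}$ minimizes $g_{P}^{l}$, the lower envelope gives $\phi^{*}_{P}=\min_{\lambda}g_{P}(\lambda)\ge\min_{\lambda}g_{P}^{l}(\lambda)=g_{P}^{l}(\lambda_{P}^{*})$, while evaluating the zero-prediction dual at $\lambda_{P}^{*}$ and applying the upper envelope gives $\phi^{*}_{0}=\min_{\lambda}g_{0}(\lambda)\le g_{0}(\lambda_{P}^{*})\le g_{0}^{u}(\lambda_{P}^{*})$. Subtracting yields $\phi^{*}_{P}-\phi^{*}_{0}\ge g_{P}^{l}(\lambda_{P}^{*})-g_{0}^{u}(\lambda_{P}^{*})$, which completes the two-sided bound.

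The argument is structurally routine; the only point demanding care is the bookkeeping of which multiplier is paired with which direction. Crucially, the minimizers $\lambda_{0}^{*}$ and $\lambda_{P}^{*}$ are defined with respect to the lower-bounding surrogates $g_{0}^{l}$ and $g_{P}^{l}$ rather than the true duals, so each chain must close through the identity $g^{l}(\lambda^{*})=\min_{\lambda}g^{l}(\lambda)$ and the monotone step $\min_{\lambda}g^{l}\le\min_{\lambda}g$; pairing the ``wrong'' multiplier with a direction would break this step. I therefore expect the main (mild) obstacle to be simply verifying that the lower envelope is minimized at the chosen $\lambda^{*}$, so that $\phi^{*}\ge\min_{\lambda}g^{l}=g^{l}(\lambda^{*})$ is legitimate, and that no attainment issue prevents the minima from being realized — both of which follow from the finiteness of the constrained MDP established in Section \ref{section:mdp}.
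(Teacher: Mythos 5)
Your proof is correct and follows essentially the same route as the paper: the paper proves this result by the same sandwich argument as Theorem \ref{thm:improve-simple-perpre}, with the exact duals replaced by the envelope bounds of Corollary \ref{coro:throughput-general-perpre}, which is precisely what you do. Your care in pairing each multiplier with the correct direction — using $\phi^{*}\ge\min_{\lambda}g^{l}(\lambda)=g^{l}(\lambda^{*})$ on the side where $\lambda^{*}$ minimizes the lower envelope, and $\phi^{*}\le g(\lambda^{*})\le g^{u}(\lambda^{*})$ on the other — is exactly the bookkeeping the adaptation requires.
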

\begin{proof}
Similar to the proof of Theorem \ref{thm:improve-simple-perpre}.
\end{proof}

Theorem \ref{thm:improve-general-perpre} is the counterpart of Theorem \ref{thm:improve-simple-perpre} in the general case, with the main difference that, due to the general Markov dynamics, the bounds can be loose compared to those in Theorem \ref{thm:improve-simple-perpre}. Nonetheless, they are tight when applied to the static setting. 

\subsubsection{Imperfect prediction} 
We now turn to the imperfect prediction case. Recall that $p_{n}$ is the true-positive probability  and $q_{n}$ is the false-negative probability. 
\begin{theorem}\label{thm:value-general-impre} 
For $1\le n\le N$, $0<\tau<\tau_n$, let $\tilde{V}_{n}^{l}(\tau)=V_{n}^{l}(\tau)$ in \eqref{eq:value-general-perpre-lower}, and $\tilde{V}_{n}^{u}(\tau)=V_{n}^{u}(\tau)$ in  \eqref{eq:value-general-perpre-upper}. 
Then, for $\tau_n\le\tau\le \tau_n+D_{n}$, define 
\begin{eqnarray} 
\hspace{-.5in}&&\tilde{V}_{n}^{l}(\tau)=\max_{e>0}\bigg\{-(1-p_{n})\frac{1-[1-\zeta_{n}(i_{n}^{\min},e)]^{\tau-\tau_n}}{\zeta_{n}(i_{n}^{\min},e)}\lambda e \nonumber \\
\hspace{-.5in}&&\qquad+p_{n}\frac{1-[1-\zeta_{n}(i_{n}^{\min},e)]^{\tau}}{\zeta_{n}(i_{n}^{\min},e)}[-\lambda e+\zeta_{n}(i_{n}^{\min},e)\beta_{n}]\bigg\}, \label{eq:value-general-impre-lower}
\end{eqnarray}
and
\begin{equation}\label{eq:value-general-impre-upper}
\begin{aligned}
\tilde{V}_{n}^{u}(\tau)
=&\sum_{z=\tau_n+1}^{\tau}\max_{e}\{-\lambda e+\zeta_{n}(i_{n}^{\max},e)\\
&\times(p_{n}\beta_{n}-\max\{0,\tilde{V}_{n}^{l}(\tau_n),\tilde{V}_{n}^{l}(z-1)\})\}\\
&+p_{n}\sum_{z=1}^{\tau_n}\max_{e}\{-\lambda e+\zeta_{n}(i_{n}^{\max},e)\\
&\times(\beta_{n}-\max\{0,\tilde{V}_{n}^{l}(z-1)\})\}. 
\end{aligned}
\end{equation}
Notice that $\tilde{V}_{n}^{l}(\tau_n)=p_{n}V_{n}^{l}(\tau_n)$ and $\tilde{V}_{n}^{u}(\tau_n)=p_{n}V_{n}^{u}(\tau_n)$.
Consider a predicted user  $n$ arrival, given a fixed $\lambda$, the value function satisfies: 
\begin{eqnarray} 
&&\hspace{-.2in}V_{n}(0,\tau+w,i)\ge \max\{0,\tilde{V}_{n}^{l}(\tau_n),\tilde{V}_{n}^{l}(\tau_n+w)\} \label{eq:Vn-lower-bdd-general} \\
&&\hspace{-.2in}V_{n}(0,\tau+w,i)\le\min\{p_{n}\beta_{n},\tilde{V}_{n}^{u}(\tau_n+{w})\},\label{eq:Vn-upper-bdd-general}
\end{eqnarray} 
where $0<w\le D_n$.
\end{theorem}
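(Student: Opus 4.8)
The plan is to follow the two-sided strategy used to prove Theorem~\ref{thm:value-general-perpre}, establishing the lower bound (\ref{eq:Vn-lower-bdd-general}) and the upper bound (\ref{eq:Vn-upper-bdd-general}) separately and, in each case, using the total order on $\mathcal{S}$ to replace the Markov channel by the fixed worst state $i_n^{\min}$ (lower bound) or best state $i_n^{\max}$ (upper bound). The genuinely new feature relative to the perfect-prediction argument is the bookkeeping of the true-positive rate $p_n$: while $\tau>\tau_n$ the packet is still a prediction, so a delivery earns only the discounted reward $p_n\beta_n$ from (\ref{eq:dp-aug-4}); the slot $\tau=\tau_n+1$ obeys the special recursion (\ref{eq:dp-aug-1}); and once $\tau$ reaches $\tau_n$ the packet is genuine with probability $p_n$, reducing the remainder to the perfect-prediction problem of deadline $\tau_n$. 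These transitions are exactly what the identities $\tilde{V}_n^l(\tau_n)=p_nV_n^l(\tau_n)$ and $\tilde{V}_n^u(\tau_n)=p_nV_n^u(\tau_n)$ encode, and they will serve as the base cases linking the window analysis to Theorem~\ref{thm:value-general-perpre}.

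For the lower bound I would exhibit three feasible (suboptimal) policies and argue that $V_n(0,\tau_n+w,i)$ dominates each. The ``never transmit'' policy gives $0$. The ``wait'' policy spends nothing in the window and begins service only if the packet actually materializes (probability $p_n$), so starting from any post-window channel state its value is at least $p_n\max\{0,V_n^l(\tau_n)\}=\max\{0,\tilde{V}_n^l(\tau_n)\}$ by Theorem~\ref{thm:value-general-perpre}. The ``eager'' policy fixes a single level $e>0$ and the worst channel $i_n^{\min}$ and transmits in every slot from entry into the window; computing its expected net reward --- the $(1-p_n)$-weighted wasted cost over the $w=\tau-\tau_n$ window slots plus the $p_n$-weighted delivery value over the full horizon, exactly as in the derivation of (\ref{eq:value-fun-pq}) --- reproduces the expression defining $\tilde{V}_n^l(\tau_n+w)$ in (\ref{eq:value-general-impre-lower}). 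Since $\zeta_n(i,e)\ge\zeta_n(i_n^{\min},e)$ pointwise, a coupling shows that pretending the channel is $i_n^{\min}$ only lowers the value, so each policy value is a valid lower bound and taking the maximum yields (\ref{eq:Vn-lower-bdd-general}).

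For the upper bound, $V_n(0,\tau_n+w,i)\le p_n\beta_n$ is immediate, as a predicted packet is real with probability $p_n$ and yields at most $\beta_n$ when real while all costs are non-negative. The sharper bound $V_n(0,\tau_n+w,i)\le\tilde{V}_n^u(\tau_n+w)$ I would prove by induction on $\tau$ from the augmented Bellman equations. For a window slot $\tau>\tau_n+1$, (\ref{eq:dp-aug}) together with $V_n(1,\tau-1,j)=p_n\beta_n$ can be rewritten as
\begin{equation*}
V_n(0,\tau,i)=\max_e\Big\{-\lambda e+\zeta_n(i,e)\big(p_n\beta_n-\textstyle\sum_j P_n^{i,j}V_n(0,\tau-1,j)\big)\Big\}+\textstyle\sum_j P_n^{i,j}V_n(0,\tau-1,j).
\end{equation*}
Inside the parenthesis, which is non-negative because $V_n\le p_n\beta_n$, I replace the continuation by its already-established lower bound $\max\{0,\tilde{V}_n^l(\tau_n),\tilde{V}_n^l(\tau-1)\}$ and $\zeta_n(i,e)$ by $\zeta_n(i_n^{\max},e)$, while bounding the trailing continuation term above by $\tilde{V}_n^u(\tau-1)$. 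This produces a telescoping recursion whose unrolling gives the window sum $\sum_{z=\tau_n+1}^{\tau}$ of (\ref{eq:value-general-impre-upper}). The boundary slot $\tau=\tau_n+1$ is treated through (\ref{eq:dp-aug-1}): factoring out the $p_n$ on the continuation converts the remaining deadline-$\tau_n$ horizon into the perfect-prediction bound scaled by $p_n$, i.e.\ $p_nV_n^u(\tau_n)=\tilde{V}_n^u(\tau_n)=p_n\sum_{z=1}^{\tau_n}\max_e\{\cdots\}$, matching the second sum in (\ref{eq:value-general-impre-upper}).

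I expect the upper-bound induction, and specifically the boundary slot $\tau=\tau_n+1$, to be the main obstacle. The delicate points are checking that the coefficients multiplying $\zeta_n(i,e)$ (namely $p_n\beta_n-\sum_j P_n^{i,j}V_n(0,\tau-1,j)$ in the window and its $\beta_n$-analogue across the boundary) are non-negative, so that substituting $i_n^{\max}$ is legitimate at every step, and threading the factor $p_n$ correctly through (\ref{eq:dp-aug-1}) so that the window contribution and the $p_n$-scaled real-system contribution combine into the precise two-sum form of (\ref{eq:value-general-impre-upper}) rather than a looser estimate.
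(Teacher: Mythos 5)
Your proposal is correct, but it splits into two halves of different character relative to the paper. Your upper-bound argument is essentially the paper's own proof: the paper also proceeds by induction on the window slot, rewrites the Bellman recursions \eqref{eq:dp-aug} and \eqref{eq:dp-aug-1} so that $\zeta_n(i,e)$ multiplies a non-negative coefficient (non-negative precisely because $0\le V_n<p_n\beta_n$ in the window, which the paper establishes first by induction), substitutes $i_n^{\max}$, inserts the already-proved lower bound $\max\{0,\tilde{V}_n^l(\tau_n),\tilde{V}_n^l(z-1)\}$ inside the parenthesis and the upper bound $\tilde{V}_n^u(\tau-1)$ on the trailing continuation, and handles the boundary slot via $\tilde{V}_n^l(\tau_n)=p_nV_n^l(\tau_n)$, $\tilde{V}_n^u(\tau_n)=p_nV_n^u(\tau_n)$ --- exactly the steps you describe, including the two delicate points you flag. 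Your lower-bound argument, however, is a genuinely different route. The paper proves \eqref{eq:Vn-lower-bdd-general} by the same kind of Bellman induction, replacing the channel by $i_n^{\min}$ at each step (with a separate, parallel induction for the $\tilde{V}_n^l(\tau_n)$ term); it only \emph{remarks} afterwards that the three terms in the max correspond to not transmitting, waiting, and proactive transmission. You turn that interpretation into the proof itself: exhibit the never-transmit, wait, and eager fixed-level policies, evaluate each in closed form under the pessimistic constant channel $i_n^{\min}$, and invoke the fact that the optimal value dominates any feasible policy's value. This is arguably more transparent and explains where the formula \eqref{eq:value-general-impre-lower} comes from, at the price of two facts you must make rigorous and the paper's induction avoids: (i) the standard identification of the DP value function with the supremum of policy values for this finite-horizon MDP, and (ii) the coupling/stochastic-dominance step showing that, for a fixed transmission level, running the eager policy on the actual Markov channel (with $\zeta_n(i,e)\ge\zeta_n(i_n^{\min},e)$ and channel evolution independent of actions) yields a value no smaller than under the constant worst-state channel, which requires noting that the pathwise payoff is monotone in the first-success time both on the real-packet and false-alarm events.
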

\begin{proof}
See Appendix \ref{apd:value-impre}.
\end{proof}
Interestingly, we will see in the proof that the three terms in $\max\{0,\tilde{V}_{n}^{l}(\tau_n),\tilde{V}_{n}^{l}(\tau_n+w)\}$ in (\ref{eq:Vn-lower-bdd-general}) correspond to not transmitting, transmitting after packet arrival, and proactive transmission.

From Theorem \ref{thm:value-general-impre}, we  have  the following corollary that bounds the optimal timely-throughput. 
\begin{coro}\label{coro:throughput-general-impre}
Define
\begin{eqnarray} 
&g_{I}^{u}(\lambda)\triangleq\lambda B+\sum_{n=1}^{N}\big\{\tilde{a}_{n}\min[p_{n}\beta_{n},\tilde{V}_{n}^{u}(\tau_n+D_{n})]\\
&+(a_{\max}-\tilde{a}_{n})q_{n}\min[\beta_{n},V_{n}^{u}(\tau_n)]\big\},\nonumber 
\end{eqnarray}
and 
\begin{eqnarray}
\hspace{-.15in}&g_{I}^{l}(\lambda)\triangleq\lambda B+\sum_{n=1}^{N}\big\{\tilde{a}_{n}\max[0, \tilde{V}_{n}^{l}(\tau_n), \tilde{V}_{n}^{l}(\tau_n+D_{n})]\\
\hspace{-.15in}&+(a_{\max}-\tilde{a}_{n})q_{n}\max[0,V_{n}^{l}(\tau_n)]\big\}.\nonumber
\end{eqnarray}
Let $g_{I}(\lambda)$ and $\phi_{I}^{*}$ be the dual function and the optimal timely-throughput in the imperfect prediction case, respectively. We have: 
\begin{equation}
g_{I}^{l}(\lambda)\le g_{I}(\lambda)\le g_{I}^{u}(\lambda), \forall \lambda\ge0
\end{equation}
and 
\begin{equation} \label{eq:throughput-general-impre}
\min_{\lambda}g_{I}^{l}(\lambda)\le\phi^{*}_{I}\le \min_{\lambda}g_{I}^{u}(\lambda).
\end{equation}
\end{coro}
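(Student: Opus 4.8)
The plan is to prove this corollary by a direct substitution argument: the exact dual function \eqref{eq:dual-imperfectprediction} already writes $g_{I}(\lambda)$ as a non-negatively weighted sum of the stationary averages $\bm{\eta}_{n}^{\intercal}\tilde{\bm{V}}_{n}$ (for the predicted, true-positive packets, evaluated at the extended deadline $\tau_n+D_n$) and $\bm{\eta}_{n}^{\intercal}\bm{V}_{n}$ (for the unpredicted, false-negative packets, evaluated at $\tau_n$). Since Theorems \ref{thm:value-general-impre} and \ref{thm:value-general-perpre} supply componentwise bounds on these value functions, bounding each average and summing should immediately yield the pointwise sandwich on $g_{I}$.

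First I would bound the two stationary averages. The vector $\bm{\eta}_{n}$ is a probability distribution, so $\bm{\eta}_{n}^{\intercal}\tilde{\bm{V}}_{n}=\sum_{i}\eta_{n}^{i}V_{n}(0,\tau_n+D_n,i)$ is a convex combination of the components $V_{n}(0,\tau_n+D_n,i)$. Because the bounds in Theorem \ref{thm:value-general-impre} (taking $w=D_n$) are uniform in the channel index $i$, the convex combination lies in the same interval, giving $\max\{0,\tilde{V}_{n}^{l}(\tau_n),\tilde{V}_{n}^{l}(\tau_n+D_n)\}\le\bm{\eta}_{n}^{\intercal}\tilde{\bm{V}}_{n}\le\min\{p_{n}\beta_{n},\tilde{V}_{n}^{u}(\tau_n+D_n)\}$. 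The unpredicted packets obey the zero-prediction dynamics, so applying Theorem \ref{thm:value-general-perpre} at deadline $\tau_n$ gives $\max\{0,V_{n}^{l}(\tau_n)\}\le\bm{\eta}_{n}^{\intercal}\bm{V}_{n}\le\min\{\beta_{n},V_{n}^{u}(\tau_n)\}$ by the same averaging argument.

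Next I would substitute these into \eqref{eq:dual-imperfectprediction}. The coefficients $\tilde{a}_{n}$ and $(a_{\max}-\tilde{a}_{n})q_{n}$ are both non-negative, since $0\le\tilde{a}_{n}\le a_{\max}$ (established when deriving \eqref{eq:pre-arrival-rate}) and $q_{n}\ge0$. Multiplying the two-sided bounds by these non-negative weights and summing over $n$ preserves the inequalities and reproduces exactly the definitions of $g_{I}^{l}(\lambda)$ and $g_{I}^{u}(\lambda)$, yielding $g_{I}^{l}(\lambda)\le g_{I}(\lambda)\le g_{I}^{u}(\lambda)$ for every $\lambda\ge0$. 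Finally, using the zero-duality-gap identity $\phi_{I}^{*}=\min_{\lambda\ge0}g_{I}(\lambda)$ from \eqref{eq:zero-duality-gap} together with the elementary fact that a pointwise inequality between functions is inherited by their infima over $\lambda$, I obtain $\min_{\lambda}g_{I}^{l}(\lambda)\le\phi_{I}^{*}\le\min_{\lambda}g_{I}^{u}(\lambda)$.

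The argument is largely mechanical once the two structural theorems are in hand; the only point requiring care is the correct attribution of bounds to the two traffic streams. The predicted (true-positive) packets are governed by the augmented Bellman recursions \eqref{eq:dp-aug}--\eqref{eq:dp-aug-4} and must be bounded at the full horizon $\tau_n+D_n$ by $\tilde{V}_{n}^{l},\tilde{V}_{n}^{u}$, whereas the false-negative packets enter only upon arrival, follow the ordinary zero-prediction recursion, and must be bounded at $\tau_n$ by $V_{n}^{l},V_{n}^{u}$. Keeping these two streams separate---and matching each to the theorem whose Bellman structure it actually satisfies---is the main thing to get right; the preservation of inequalities under the non-negative arrival-rate weights and under the $\min$ operation is then routine.
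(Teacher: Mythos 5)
Your proposal is correct and matches the paper's (implicit) argument: the paper states this corollary as an immediate consequence of Theorem \ref{thm:value-general-impre} together with the dual decomposition \eqref{eq:dual-imperfectprediction}, and your proof simply fills in those steps---averaging the channel-uniform value-function bounds against $\bm{\eta}_{n}$, weighting by the non-negative rates $\tilde{a}_{n}$ and $(a_{\max}-\tilde{a}_{n})q_{n}$, and passing to the minimum over $\lambda$ via the zero-duality-gap identity \eqref{eq:zero-duality-gap}. Your attribution of the bounds at horizon $\tau_n+D_n$ (via $\tilde{V}_{n}^{l},\tilde{V}_{n}^{u}$) to the predicted stream and at horizon $\tau_n$ (via $V_{n}^{l},V_{n}^{u}$, since false negatives follow the zero-prediction recursion) to the unpredicted stream is exactly what the paper intends.
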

From the above, we can now bound the improvement in weighted timely-throughput with imperfect prediction. 
\begin{theorem}\label{thm:improve-general-impre}
Suppose $\lambda_{I}^{*}=\argmin_{\lambda} g_{I}^{l}(\lambda)$, then the weighted timely-throughput improvement satisfies:  
\begin{equation}
g_{I}^{l}(\lambda_{I}^{*})-g_{0}^{u}(\lambda_{I}^{*})\le \phi^{*}_{I}-\phi^{*}_{0}\le g_{I}^{u}(\lambda_{0}^{*})-g_{0}^{l}(\lambda_{0}^{*}).
\end{equation}
\end{theorem}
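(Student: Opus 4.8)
The plan is to mirror the argument of Theorem \ref{thm:improve-simple-perpre} and its general-case analogue Theorem \ref{thm:improve-general-perpre}, using the sandwich bounds of Corollary \ref{coro:throughput-general-impre} as the workhorse. The zero-duality-gap identity \eqref{eq:zero-duality-gap} gives $\phi^{*}_{I}=\min_{\lambda}g_{I}(\lambda)$ and $\phi^{*}_{0}=\min_{\lambda}g_{0}(\lambda)$, while Corollary \ref{coro:throughput-general-impre}, together with its counterpart $g_{0}^{l}(\lambda)\le g_{0}(\lambda)\le g_{0}^{u}(\lambda)$ obtained by specializing the surrogate functions to $D_{n}=0$, traps both quantities between explicit upper and lower surrogate duals. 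The only remaining work is to chain these inequalities in two directions while evaluating at a single fixed multiplier in each direction, so that each side becomes a difference of computable quantities.

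For the upper bound I would write $\phi^{*}_{I}=\min_{\lambda}g_{I}(\lambda)\le g_{I}(\lambda_{0}^{*})\le g_{I}^{u}(\lambda_{0}^{*})$, using $g_{I}\le g_{I}^{u}$, and then $\phi^{*}_{0}=\min_{\lambda}g_{0}(\lambda)\ge\min_{\lambda}g_{0}^{l}(\lambda)=g_{0}^{l}(\lambda_{0}^{*})$, where $\lambda_{0}^{*}=\argmin_{\lambda}g_{0}^{l}(\lambda)$ and $g_{0}\ge g_{0}^{l}$. Subtracting gives $\phi^{*}_{I}-\phi^{*}_{0}\le g_{I}^{u}(\lambda_{0}^{*})-g_{0}^{l}(\lambda_{0}^{*})$. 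For the lower bound I would reverse the roles: $\phi^{*}_{I}=\min_{\lambda}g_{I}(\lambda)\ge\min_{\lambda}g_{I}^{l}(\lambda)=g_{I}^{l}(\lambda_{I}^{*})$ with $\lambda_{I}^{*}=\argmin_{\lambda}g_{I}^{l}(\lambda)$, and $\phi^{*}_{0}=\min_{\lambda}g_{0}(\lambda)\le g_{0}(\lambda_{I}^{*})\le g_{0}^{u}(\lambda_{I}^{*})$, so that $\phi^{*}_{I}-\phi^{*}_{0}\ge g_{I}^{l}(\lambda_{I}^{*})-g_{0}^{u}(\lambda_{I}^{*})$. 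Combining the two estimates yields the claim.

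The argument is routine once Corollary \ref{coro:throughput-general-impre} is in hand; the genuine content lives upstream in Theorem \ref{thm:value-general-impre}, which supplies the per-packet value-function sandwich. The point that most needs care --- and the easiest place to slip --- is the bookkeeping of which surrogate ($g^{u}$ versus $g^{l}$) pairs with which minimizing multiplier ($\lambda_{0}^{*}$ versus $\lambda_{I}^{*}$) in each direction: one must evaluate both terms of a difference at a \emph{common} $\lambda$ before subtracting, since the minimizers of $g_{I}$ and $g_{0}$ generally differ and the two separate minimizations cannot be subtracted termwise. A secondary subtlety is confirming that the weak-duality-style inequalities $\phi^{*}_{0}\le g_{0}(\lambda_{I}^{*})$ and $\phi^{*}_{I}\le g_{I}(\lambda_{0}^{*})$ hold at the specific non-minimizing multipliers used; these follow immediately from $\phi^{*}=\min_{\lambda}g(\lambda)\le g(\lambda)$ for every fixed $\lambda$.
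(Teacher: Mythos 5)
Your proposal is correct and is essentially the paper's own argument: the paper proves this theorem by the same weak-duality chaining used for Theorem \ref{thm:improve-simple-perpre}, here combined with the surrogate sandwich $g^{l}\le g\le g^{u}$ from Corollary \ref{coro:throughput-general-impre} (and its zero-prediction counterpart), evaluating each difference at the common multipliers $\lambda_{0}^{*}=\argmin_{\lambda}g_{0}^{l}(\lambda)$ and $\lambda_{I}^{*}=\argmin_{\lambda}g_{I}^{l}(\lambda)$ exactly as you do. Your bookkeeping of which surrogate pairs with which multiplier matches the paper's intended proof.
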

\begin{proof}
Similar to the proof of Theorem \ref{thm:improve-simple-perpre}.
\end{proof}

Although it is hard to exactly characterize the timely-throughput improvement and how it depends on $p_n$  and $q_n$ in this general case (due to the general system dynamics),  Theorems \ref{thm:improve-general-perpre} and  \ref{thm:improve-general-impre}  provide useful upper and lower bounds. 
We will also see in the simulation section that the performance in the general case is  similar to that   in the static case as characterized in Theorems \ref{thm:improve-simple-perpre} and \ref{thm:improve-simple-impre}. 






\subsection{Capacity Constraint}
In our model, we have assumed that there is no transmission capacity constraint to facilitate analysis. Here we discuss what happens if a hard constraint is imposed in \eqref{eq:problem}, i.e., 
\begin{eqnarray}
\sum_{n=1}^{N}c_{n}(t)\le C, \forall t,
\end{eqnarray}
where $c_{n}(t)$ is the number of user $n$ packets transmitted at time $t$.

In this case, the problem is significantly different as the scheduling decisions of packets are dependent.  
We can construct a truncated policy $\hat{\pi}^{*}$ based on the optimal policy $\pi^*$ derived above. 
\begin{enumerate}[(a)] 
\item Solve problem \eqref{eq:problem} to get the optimal policy $\pi^{*}$. 
\item At slot $t$, define $\Omega_{\pi^{*}}(t)$ as the set of packets that are scheduled to be transmitted under $\pi^{*}$. 
If $|\Omega_{\pi^{*}}(t)|\le C$, then the scheduling decisions of $\hat{\pi}^{*}$ are the same as $\pi^{*}$. Otherwise if $|\Omega_{\pi^{*}}(t)|> C$, randomly select a subset of packets $\Omega_{\hat{\pi}^{*}}(t)\subset\Omega_{\pi^{*}}(t)$ with $|\Omega_{\hat{\pi}^{*}}(t)|=C$ to transmit at the same levels as decided by $\pi^{*}$, and discard packets in $\Omega_{\pi^{*}}(t)-\Omega_{\hat{\pi}^{*}}(t)$.
\end{enumerate}
The truncated policy $\hat{\pi}^{*}$ can be shown to be asymptotically optimal with a similar argument as that in \cite{singh2016throughput}. The proof is similar  and is omitted here.

\section{Simulation Results}  \label{sec:simulation}
We present  simulation results of the optimal scheduling policy in this section. Our simulation is conducted for the system in Fig. \ref{fig:system} with $N=4$ users. 
%
%
The arrival rates of packets are given by $(a_{1},\dots,a_{4})=(0.7,0.6,0.4,0.3)$ with $a_{\max}=1$ and the  deadlines are given by $(\tau_{1},\dots,\tau_{4})=(2,3,4,5)$. 
We set the reward weight vector to be $\bm{\beta}=(3,1,2,4)$. Each channel has  $K=4$ states $(s_{1}, \dots, s_{4})=(1,2,3,4)$, each representing a noise level. We assume the channel state transition matrix is the same for all users and is given by: 
\begin{equation*}
P=\left[
\begin{matrix}
0.4 & 0.3 & 0.2 & 0.1 \\
0.25 & 0.3 & 0.25 & 0.2 \\
0.2 & 0.25 & 0.3 & 0.25 \\
0.1 & 0.2 & 0.3 & 0.4
\end{matrix}
\right] 
\end{equation*}
The set of resource levels $\mathcal{E}$ is a discrete set of the form $\{0.0001z, z=0,1, ..., 6\times10^4\}$. The  average resource expenditure budget is $B=6$. The probability that a transmission for user $n$ is successful under state $s_{i}$ and resource level $e$ is given by: 
\begin{equation}
\zeta_{n}(i,e)=\frac{2}{1+e^{-2\frac{e}{d_{n}^{3}s_{i}}}}-1,
\end{equation}
where $\bm{d}=(d_{1}, \dots, d_{4})=(1.1,1.2,1.3,1.4)$ and $d_n$ denotes the distance between user $n$ and the server. This setting models a wireless downlink system. 
The prediction window sizes are the same for all users, i.e., $D_{n}=2, n=1,\dots,4.$ For the imperfect prediction case, the true-positive rates and false-negative rates are $\bm{p}=(0.8,0.8,0.8,0.8)$ and $\bm{q}=(0.2,0.1,0.1,0.2)$. 

\subsection{Optimal Policy} \label{sec:simulation-policy}
We start with the optimal actions for different users under different states. 
%
%
Table \ref{tab:noprediction}  and Table \ref{tab:perprediction} show the optimal scheduling decisions for User $2$ ($\tau_2=3$) and User $4$ ($\tau_4=5$), i.e., $e_{n}^{*}(0,\tau,i)$, in the zero prediction and perfect prediction cases. 
From the results, we can verify the monotonicity of the optimal policy, as shown in Theorem \ref{theorem:monotone}. 
We also see that the resource offered by the server for User $4$ is larger than that under the same state for User $2$. This is intuitive since $\beta_{4}>\beta_{2}$. 
\begin{table}[ht]
\centering
\subtable[User 2]{
\begin{tabular}{|c|c|c|c|} 
\hline
\diagbox{S(t)}{$\tau$}  &3 & 2 & 1 \\
\hline
$s_{1}$ & 1.3915 &1.4674 & 1.6328 \\
\hline
$s_{2}$ & 0.0 & 0.2762 &1.0554 \\
\hline
$s_{3}$ & 0.0 & 0.0 & 0.0 \\
\hline
$s_{4}$ & 0.0 & 0.0 & 0.0\\
\hline
\end{tabular}
}
\subtable[User 4]{        
\begin{tabular}{|c|c|c|c|c|c|} 
\hline
\diagbox{S(t)}{$\tau$}  &5 &4 &3 & 2 & 1 \\
\hline
$s_{1}$ & 2.6024 & 2.7488 & 2.9284 & 3.2269 & 4.1129  \\
\hline
$s_{2}$ & 2.3917 & 2.9635 & 3.5677 & 4.3658 & 6.0 \\
\hline
$s_{3}$ & 0.0 & 0.0 & 1.6809 & 3.9651 & 6.0  \\
\hline
$s_{4}$ & 0.0 & 0.0 & 0.0 & 1.9522 & 6.0\\
\hline
\end{tabular}
}
\caption{Optimal Scheduling Decisions $e_{n}^{*}(0,\tau,i)$ (Zero Prediction)}
\vspace{-.2in}
\label{tab:noprediction}
\end{table}

\begin{table}[ht]
\centering
\subtable[User 2]{
\begin{tabular}{|c|c|c|c|c|c|} 
\hline
\diagbox{S(t)}{$\tau$}  & 3+2 & 3+1 &3 & 2 & 1 \\
\hline
$s_{1}$ & 1.3651 & 1.4248 & 1.4906 & 1.5797 & 1.7865    \\
\hline
$s_{2}$ & 0.0 & 0.0 & 0.6726 & 1.1571 & 1.6759  \\
\hline
$s_{3}$ & 0.0 & 0.0 & 0.0 & 0.0 & 0.0  \\
\hline
$s_{4}$ & 0.0 & 0.0 & 0.0 & 0.0 & 0.0 \\
\hline
\end{tabular}
}
\subtable[User 4]{        
\begin{tabular}{|c|c|c|c|c|c|c|c|} 
\hline
\diagbox{S(t)}{$\tau$}  & 5+2 & 5+1 & 5 & 4 & 3 & 2 & 1 \\
\hline
$s_{1}$& 2.3912 & 2.5051 & 2.6355 & 2.7873 & 2.9807 & 3.3168 & 4.3156 \\
\hline
$s_{2}$ & 1.2882 & 1.948 & 2.5292 & 3.1024 & 3.7309 & 4.6111 & 6.0   \\
\hline
$s_{3}$ & 0.0 & 0.0 & 0.0 & 0.0 & 2.315 & 4.5133 & 6.0  \\
\hline
$s_{4}$ & 0.0 & 0.0 & 0.0 & 0.0 & 0.0 & 3.4852 & 6.0 \\
\hline
\end{tabular}
}
\caption{Optimal Scheduling Decisions $e_{n}^{*}(0,\tau,i)$ (Perfect Prediction)}
\label{tab:perprediction}
\vspace{-.2in}
\end{table}

Table \ref{tab:imprediction} shows the optimal policy in the imperfect prediction case. From the scheduling decisions for User $2$ under channel state $s_{1}$, we  see that \emph{the monotonicity property actually does not hold when there is prediction error}, i.e.,  resource expenditure for state $s_1$ and $\tau=3+1$ is smaller than that for state $s_1$ and $\tau=3+2$. 
Another interesting observation is that the resource allocated in the imperfect prediction case is less than that for the same state in the perfect prediction case. This is because resource may be wasted when prediction is imperfect. Thus, the server is more conservative in resource allocation.  
\begin{table}[ht]
\centering
\subtable[User 2]{
\begin{tabular}{|c|c|c|c|c|c|} 
\hline
\diagbox{S(t)}{$\tau$}  & 3+2 & 3+1 &3 & 2 & 1 \\
\hline
$s_{1}$ & \textbf{0.8382} & \textbf{0.8269} & 1.2468 & 1.313 & 1.4435   \\
\hline
$s_{2}$ & 0.0 & 0.0 & 0.0 & 0.0 & 0.0 \\
\hline
$s_{3}$ & 0.0 & 0.0 & 0.0 & 0.0 & 0.0  \\
\hline
$s_{4}$ & 0.0 & 0.0 & 0.0 & 0.0 & 0.0 \\
\hline
\end{tabular}
}
\subtable[User 4]{        
\begin{tabular}{|c|c|c|c|c|c|c|c|} 
\hline
\diagbox{S(t)}{$\tau$}  & 5+2 & 5+1 & 5 & 4 & 3 & 2 & 1 \\
\hline
$s_{1}$& 1.9575 & 1.9758 & 2.5562 & 2.6954 & 2.8624 & 3.1235 & 3.8744 \\
\hline
$s_{2}$ & 0.0 & 0.0 & 2.1895 & 2.7631& 3.3489 & 4.0761 & 5.4601  \\
\hline
$s_{3}$ & 0.0 & 0.0 & 0.0 & 0.0 & 0.0 & 3.2527 & 5.7612  \\
\hline
$s_{4}$ & 0.0 & 0.0 & 0.0 & 0.0 & 0.0 & 0.0 & 4.5436\\
\hline
\end{tabular}
}
\caption{Optimal Scheduling Decisions $e_{n}^{*}(0,\tau,i)$ (Imperfect Prediction)}
\label{tab:imprediction}
\vspace{-.2in}
\end{table} 

\subsection{Influence of Parameters}
Next we investigate the influence of system parameters. 
%
%
Fig. \ref{fig:throughput-tau} shows how the weighted timely-throughput changes 
when all users have the same deadline going from  $2$ to $7$ ($D_n=2$). 
We see that  the timely-throughputs in both cases, with or without prediction,  increase with the deadline, and the throughput improvement decreases as $\tau_n$ becomes larger. This is expected, as a large deadline gives more flexibility to scheduling. Thus, the marginal benefit of prediction decreases. 
Moreover, we see that the throughput with prediction is always higher than that without prediction, demonstrating the effectiveness of prediction. 
\begin{figure}[ht]
\centering
\vspace{-.1in}
\includegraphics[width=3.4in, height=2.8in]{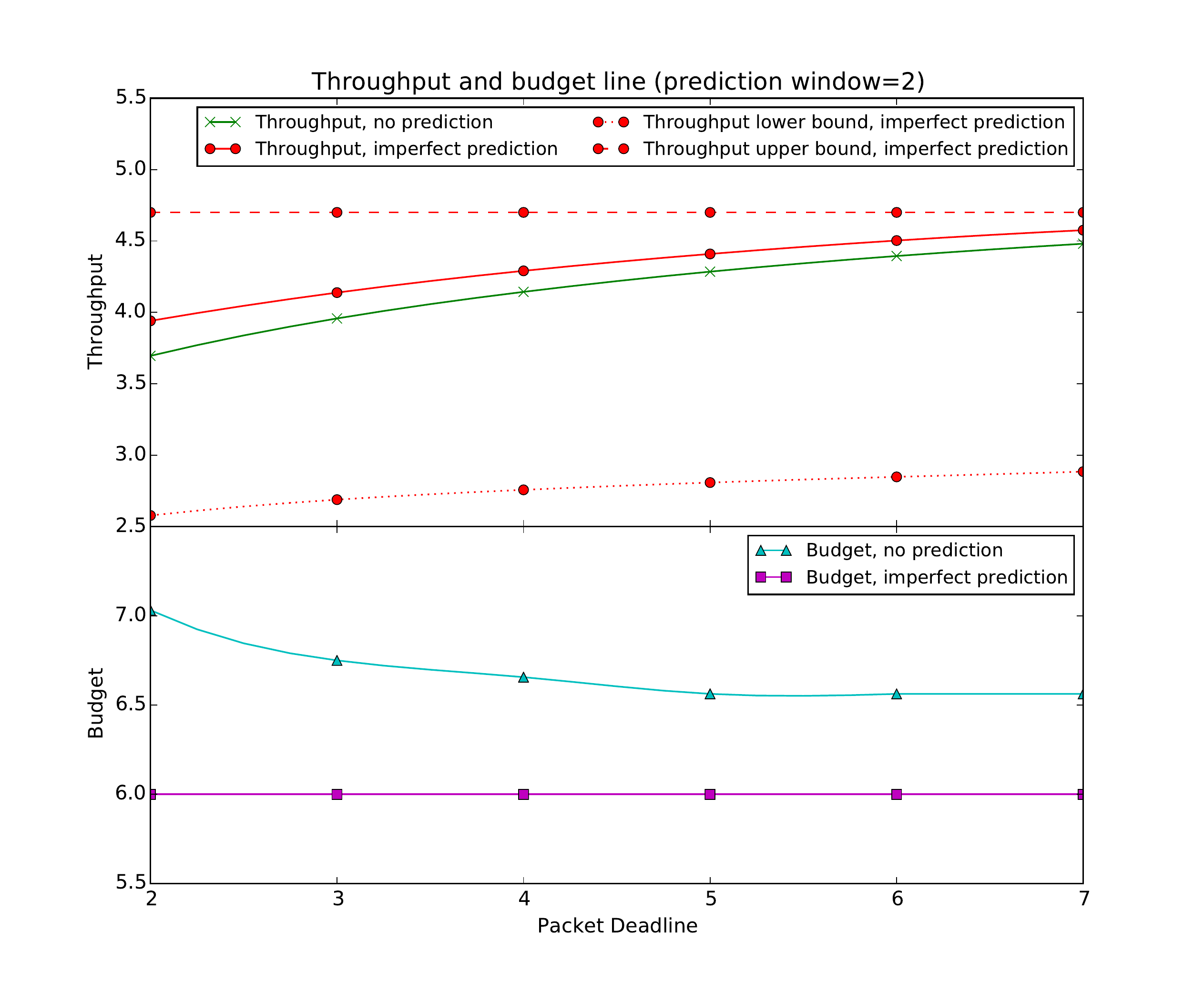}
\vspace{-.1in}
\caption{The top plot shows that timely-throughput increases in packet deadline. In the bottom plot, we show the resource budget needed in the zero prediction case to achieve the same throughput as in the imperfect prediction case. With prediction, the overall resource consumption is set as $6$. However, without prediction, we need a significantly higher resource consumption to achieve the same throughput.} 
\label{fig:throughput-tau}
\vspace{-.1in}
\end{figure}

Fig. \ref{fig:throughput-window} shows how the optimal weighted throughput changes with prediction power. From the results, we see that although prediction is imperfect, if used properly, it can still significantly improve timely-throughput. The throughput bounds in Fig. \ref{fig:throughput-tau} and Fig. \ref{fig:throughput-window} appear loose due to the complicated Markov dynamics. Fig. \ref{fig:bound}, on the other hand, shows that the bounds are tight when there is a single channel state.
\begin{figure}[ht]
\centering
\vspace{-.1in}
\includegraphics[width=3.4in, height=2.8in]{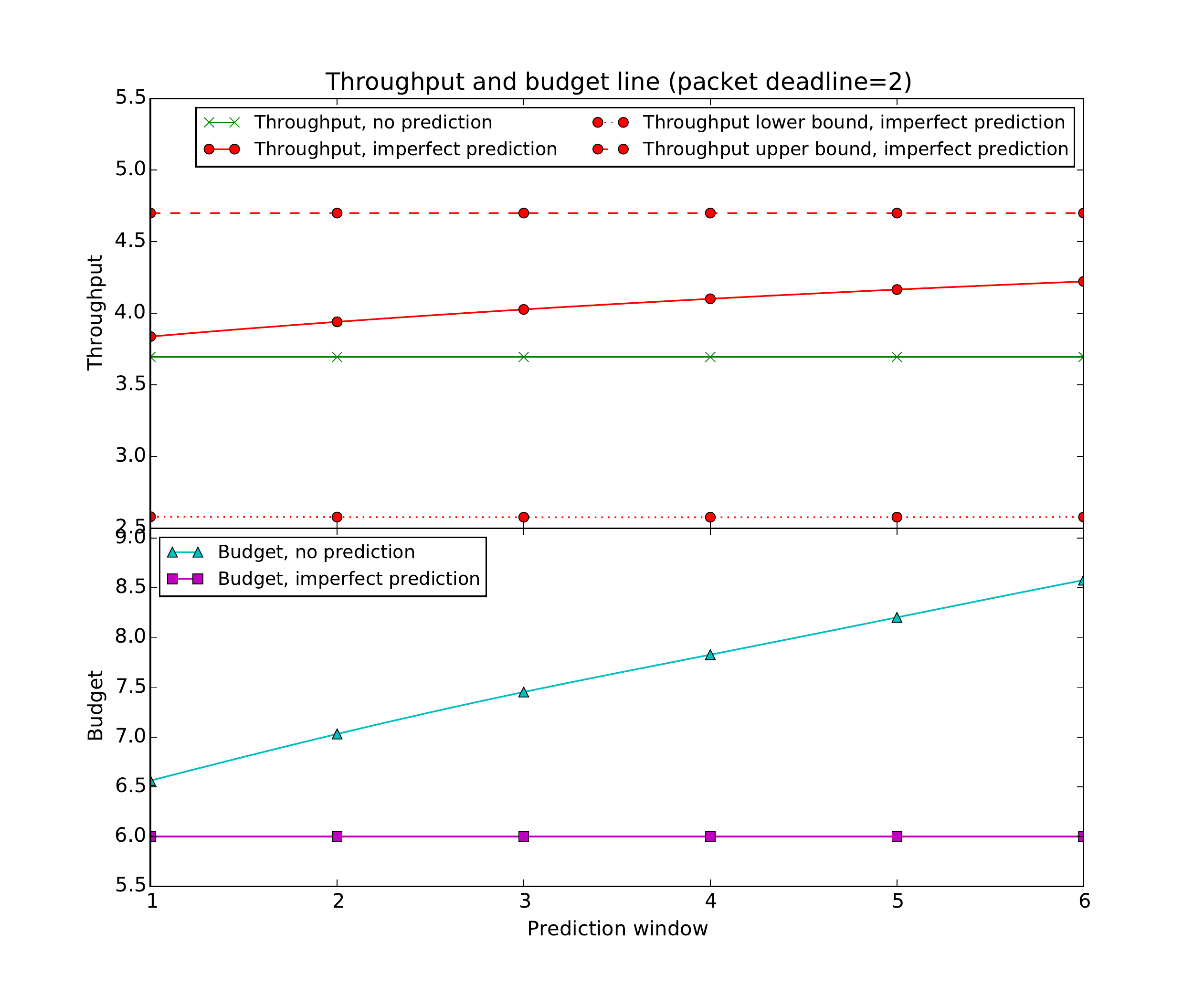}
\vspace{-.1in}
\caption{The top plot shows how timely-throughput changes with the prediction window size. The bottom plot  shows the resource budget needed for achieving the same throughputs as with prediction. Without prediction, one needs a significantly higher resource consumption in order to achieve the same performance.} 
\label{fig:throughput-window} 
\vspace{-.05in}
\end{figure}

\begin{figure}[ht]
\centering
\vspace{-.05in}
\includegraphics[width=1.7in, height=1.4in]{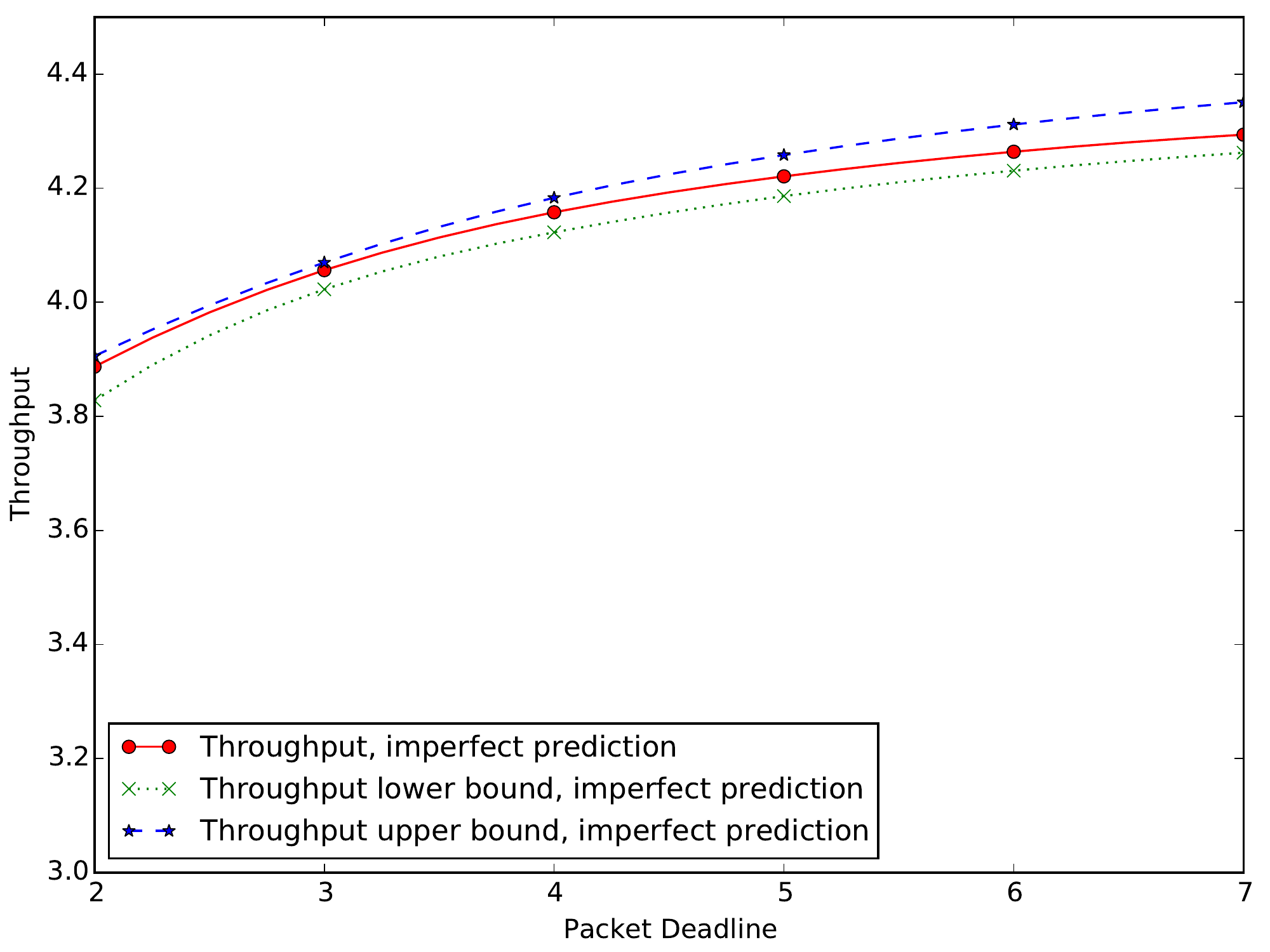}
\includegraphics[width=1.7in, height=1.4in]{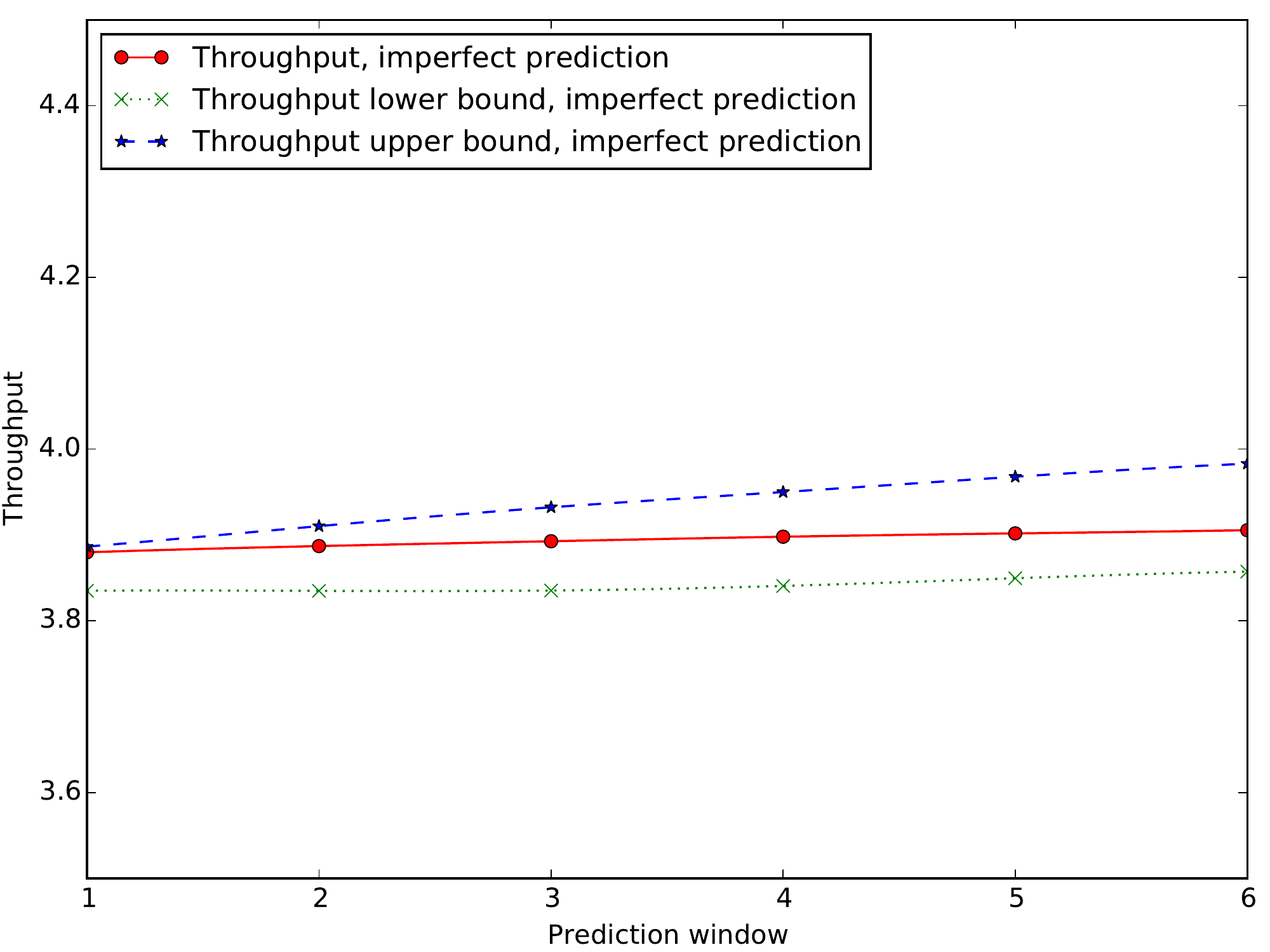}
\vspace{-.1in}
\caption{Timely-throughput and the corresponding upper and lower bounds. The set of channel states is $\mathcal{S}=\{s_{2}\}$. We see that the bounds are tight when there is only a single channel state.} 
\label{fig:bound}
\vspace{-.1in}
\end{figure}

%
Fig. \ref{fig:throughput-surface} shows the relationship between the optimal timely-throughput and  prediction accuracy, which is actually complicated.
We also show in Fig. \ref{fig:throughput-pq} results with fixed $\bm{q}$ and $\bm{p}$, respectively, to show how timely throughput changes with the other. 
In the left plot, the false-negative rate vector is set to $\bm{q}=\bm{0}$, and all users have the same true-positive rate increasing from $0.75$ to $0.95$. In the right plot, the true-positive rate vector is set as $\bm{p}=(0.8,0.8,0.8,0.8)$ and all users have the same false-negative rate changing from $0.05$ to $0.25$. 
Other settings are kept unchanged as in Section \ref{sec:simulation-policy}. 
From the results, we  see that the timely-throughput is increasing in the true-positive rate and decreasing in the false-negative rate for this stochastic case (this is proven for the static scenario in Theorem \ref{throughput-simple-p-q}).   
\begin{figure}[ht]
\centering
\includegraphics[width=3in]{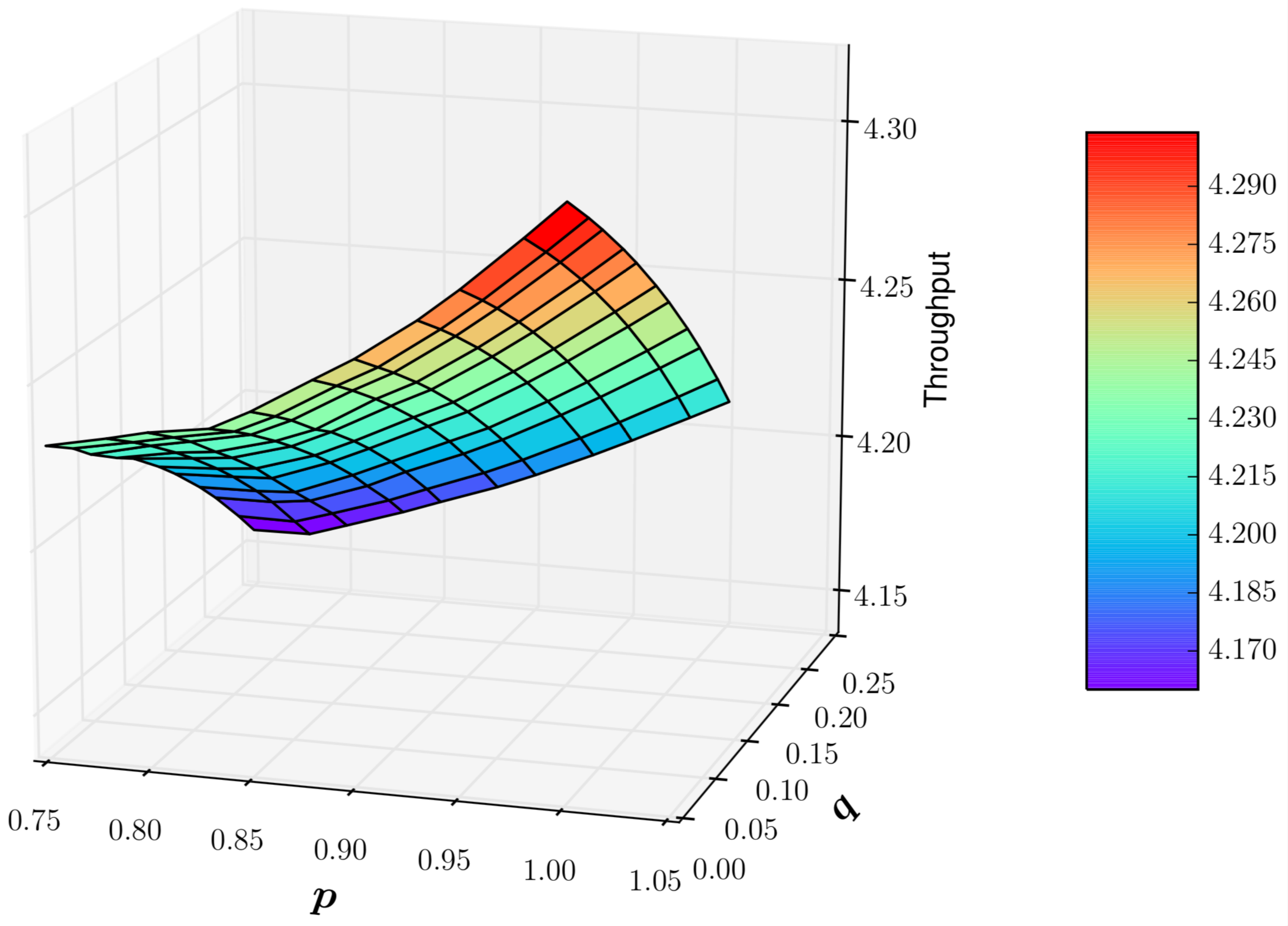}
\vspace{-.1in}
\caption{Timely-throughput as a function of $p$ and $q$. The relationship between timely-throughput and general $(p,q)$ pair is complicated.}
\label{fig:throughput-surface}
\end{figure}

\begin{figure}[ht]
\centering
\vspace{-.1in}
\includegraphics[width=1.7in, height=1.5in]{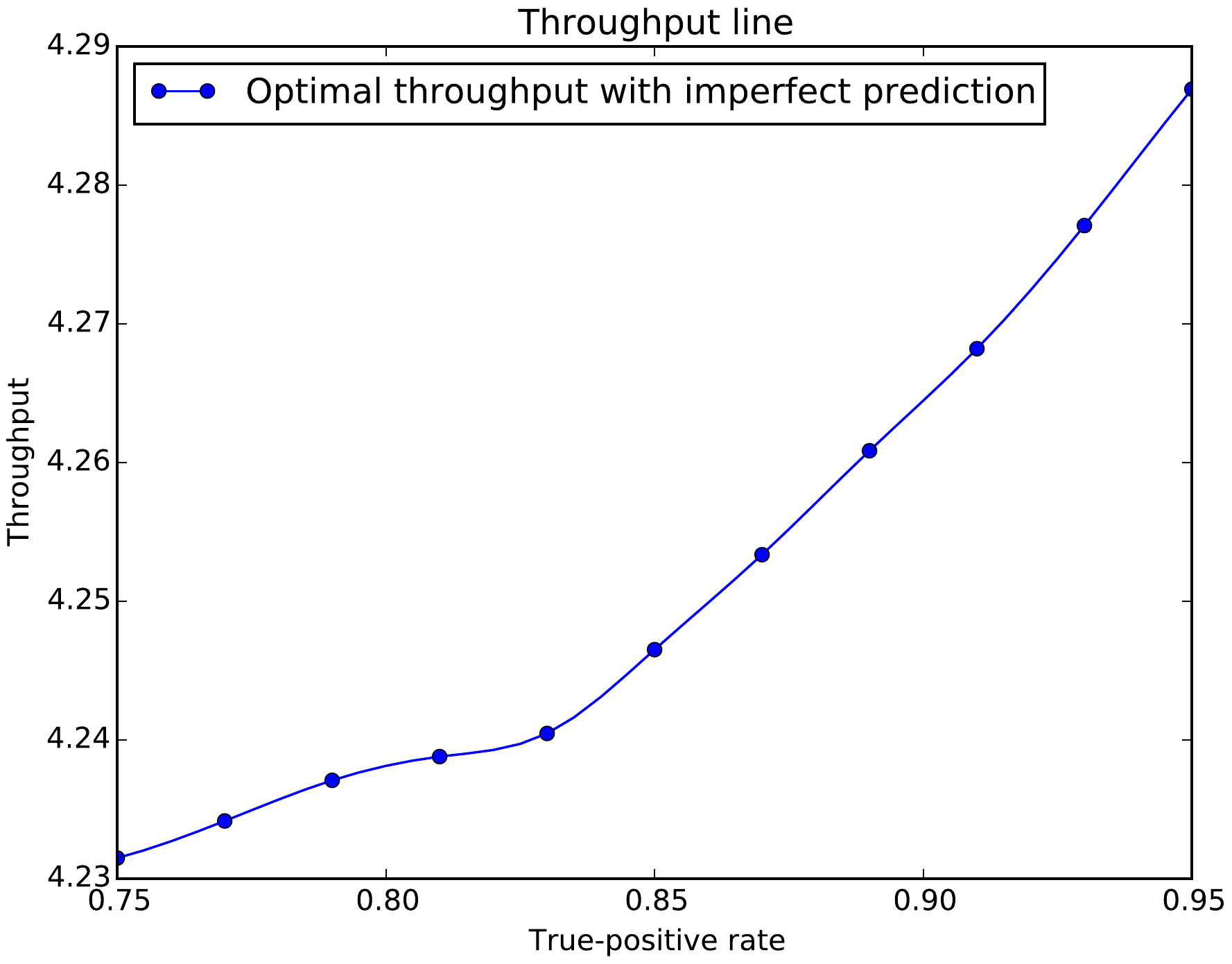}
\includegraphics[width=1.7in, height=1.5in]{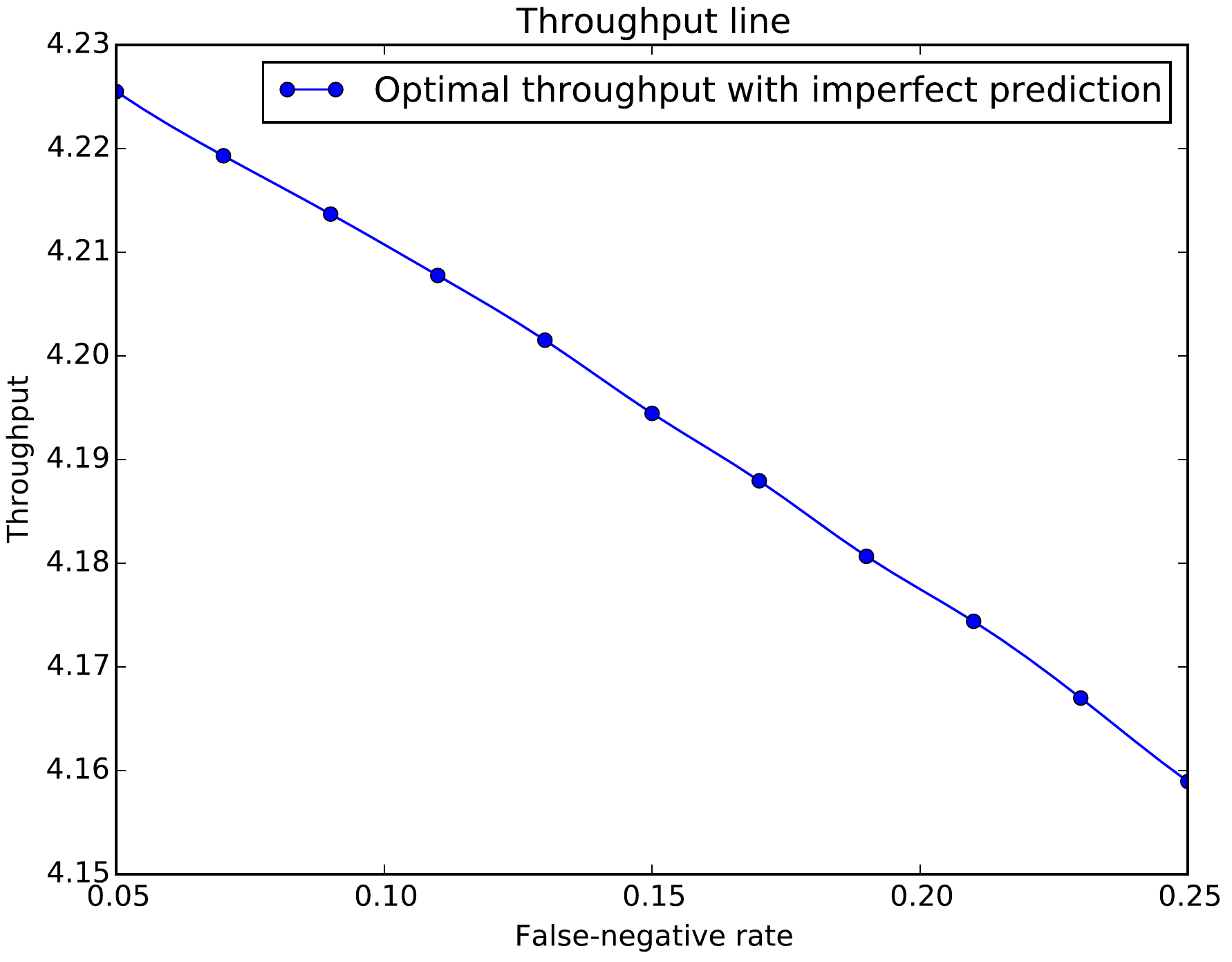}
\vspace{-.1in}
\caption{Timely-throughput change with $p$ and $q$. Left plot: $\bm{q}=\bm{0}$ and $p$ increases from $0.75$ to $0.95$. Right plot: $\bm{p}=(0.8,0.8,0.8,0.8)$ and $q$ increases from $0.05$ to $0.25$.} 
\label{fig:throughput-pq}
\vspace{-.1in}
\end{figure}


\section{Conclusion}\label{sec:conclusion} 
In this paper, we investigate the fundamental benefit of predictive scheduling in improving timely-throughput in a general stochastic single-server multi-user system. 
%
Based on an error rate-based prediction model, we first derive a Markov decision process (MDP) solution for optimizing the timely-throughput objective, subject to an average resource consumption constraint. 
We then explicitly characterize the optimal scheduling policy and quantify the timely-throughput improvement due to predictive-service. Extensive simulations are conducted to validate our theoretical results.   
Our results provide novel insights into how prediction and system parameters impact  performance and provide useful guidelines for designing future predictive control algorithms.

\bibliographystyle{IEEEtran}
\bibliography{harddeadline-ref.bib}

\appendices
\section{Proof of Theorem \ref{thm:policy-simple-nopre}} \label{apd:policy-simple-nopre}
Here we prove Theorem \ref{thm:policy-simple-nopre}. Recall that the state of a packet is described by $(r, \tau)$ in this case, where $r\in\{0, 1\}$ indicates whether it is delivered and $0\leq\tau\leq \tau_n$ denotes the remaining time until deadline. 
\begin{proof} (Theorem \ref{thm:policy-simple-nopre})
In the static scenario, the Bellman equations in \eqref{eq:dp} to \eqref{eq:dp2} reduce to
\begin{equation}\label{eq:dp-simple}
\begin{aligned}
V_{n}(0,\tau)=&\max\big\{V_{n}(0,\tau-1),-\lambda +\zeta_{n}V_{n}(1,\tau-1)\\
&+(1-\zeta_{n})V_{n}(0,\tau-1)\big\},0<\tau\le \tau_n+D_{n}, \\
V_{n}(0,0)=&0,\forall n, \\
V_{n}(1,\tau)=&\beta_{n},\forall n, 0\le\tau<\tau_n+D_{n}.
\end{aligned}
\end{equation}
%
We now prove the first part of Theorem \ref{thm:policy-simple-nopre} via induction, i.e., if $\zeta_{n}\beta_{n}>\lambda$, then the optimal policy is to transmit the packet at every time-slot, until it is either successfully delivered to the destination or becomes outdated, and the value function is $V_{n}(0,\tau)=\frac{1-(1-\zeta_{n})^{\tau}}{\zeta_{n}}(-\lambda+\zeta_{n}\beta_{n}), 0<\tau\le \tau_n+D_{n}.$ 

First, we show that this statement holds for $\tau=1$. To see this, note that solving the one-step equation of \eqref{eq:dp-simple} gives:  
\begin{eqnarray}
V_{n}(0,1)=\max\{0,-\lambda+\zeta_{n}\beta_{n}\}=-\lambda+\zeta_{n}\beta_{n}. 
\end{eqnarray}
Thus, if  $-\lambda+\zeta_{n}\beta_{n}>0$, the optimal scheduling decision at state $(0,1)$ is to transmit the packet. 

Now suppose the statement holds for $\tau=1,\dots,t$, we will show that it holds for $\tau=t+1$. From \eqref{eq:dp-simple}, we have: 
\begin{equation*}
\begin{aligned}
V_{n}(0,t+1)=\max\big\{&V_{n}(0,t),-\lambda +\zeta_{n}\beta_{n}\\
&\qquad\qquad +(1-\zeta_{n})V_{n}(0,t)\big\}\\
=\max\big\{&\frac{1-(1-\zeta_{n})^{t}}{\zeta_{n}}(-\lambda+\zeta_{n}\beta_{n}),\\
&\qquad\frac{1-(1-\zeta_{n})^{t+1}}{\zeta_{n}}(-\lambda+\zeta_{n}\beta_{n})\}. 
\end{aligned}
\end{equation*}
Thus, if $-\lambda+\zeta_{n}\beta_{n}>0$,  $V_{n}(0,t+1)=\frac{1-(1-\zeta_{n})^{t+1}}{\zeta_{n}}(-\lambda+\zeta_{n}\beta_{n})$, and the optimal scheduling decision at state $(0,t+1)$ is to transmit the packet. 
The proof of the second part goes the same. This completes the proof of Theorem \ref{thm:policy-simple-nopre}. 
\end{proof}

\section{Proof of Theorem \ref{thm:policy-simple-impre}} \label{apd:policy-simple-impre}
Here we prove Theorem \ref{thm:policy-simple-impre}.
\begin{proof} (Theorem \ref{thm:policy-simple-impre})
First note that the Bellman equations in  \eqref{eq:dp-aug}  to \eqref{eq:dp-aug-4} reduce to the following. 
\begin{eqnarray}
\hspace{-.1in}V_{n}(0,\tau)&=&\max\big\{V_{n}(0,\tau-1),-\lambda +\zeta_{n}V_{n}(1,\tau-1)\nonumber\\
\hspace{-.1in}&&\qquad\quad +(1-\zeta_{n})V_{n}(0,\tau-1)\big\},\nonumber\\
\hspace{-.1in}&&\qquad\quad0<\tau\le \tau_n+D_{n}, \tau\ne \tau_n+1\nonumber\\
\hspace{-.1in}V_{n}(0,\tau_n+1)&=&\max\big\{p_{n}V_{n}(0,\tau_n),-\lambda +\zeta_{n}V_{n}(1,\tau_n)\nonumber\\
\hspace{-.1in}&&\qquad\quad+(1-\zeta_{n})p_{n}V_{n}(0,\tau_n)\big\}, \label{eq:dp-aug-simple}\\
\hspace{-.1in}V_{n}(0,0)&=&\,0,\forall n, \nonumber\\
\hspace{-.1in}V_{n}(1,\tau)&=&\,\beta_{n},\forall n, 0\le\tau<\tau_n,\nonumber\\
\hspace{-.1in}V_{n}(1,\tau)&=&\,p_{n}\beta_{n},\forall n, \tau_n\le\tau<\tau_n+D_{n}.\nonumber
\end{eqnarray}

We start from Part (A), i.e., $\zeta_{n}\beta_{n}>\lambda$. Consider $p_{n}> c_{n} =\frac{\lambda}{(-\lambda+\zeta_{n}\beta_{n})(1-\zeta_{n})^{\tau_n}+\lambda}$. 
From Theorem \ref{thm:policy-simple-nopre}, we know that $V_{n}(0,\tau_n)=\frac{1-(1-\zeta_{n})^{\tau_n}}{\zeta_{n}}(-\lambda+\zeta_{n}\beta_{n})$. We want to prove the statement by induction on $w$. 

First, the statement holds for $w=1$. To see this, note that  
\begin{eqnarray*}
\hspace{-.2in}&&V_{n}(0,\tau_n+1)=\max\big\{p_{n}V_{n}(0,\tau_n),\\
\hspace{-.2in}&&\qquad\qquad\qquad\quad  -\lambda +\zeta_{n}V_{n}(1,\tau_n)+(1-\zeta_{n})p_{n}V_{n}(0,\tau_n)\big\}. 
\end{eqnarray*}
Since $p_{n}> c_{n}$, one can verify that $-\lambda +\zeta_{n}V_{n}(1,\tau_n)+(1-\zeta_{n})p_{n}V_{n}(0,\tau_n)>p_{n}V_{n}(0,\tau_n)$. Therefore, 
\begin{eqnarray*}
&&V_{n}(0,\tau_n+1)\\
&=&-\lambda +\zeta_{n}V_{n}(1,\tau_n)+(1-\zeta_{n})p_{n}V_{n}(0,\tau_n) \\
&=&-(1-p_{n})\lambda+p_{n}\frac{1-(1-\zeta_{n})^{\tau_n+1}}{\zeta_{n}}(-\lambda+\zeta_{n}\beta_{n}), 
\end{eqnarray*}
and the optimal scheduling decision at state $(0,\tau_n+1)$ is to transmit (pre-serve) the  predicted packet. 

Then, suppose the statement holds for $w=1,\dots,t$, we show that it also holds for $w=t+1$. 
From \eqref{eq:dp-aug-simple}, we have: 
\begin{eqnarray*}
\hspace{-.55in}&&V_{n}(0,\tau_n+t+1) \\
\hspace{-.55in}\qquad &&=\max\big\{ V_{n}(0,\tau_n+t),-\lambda +\zeta_{n}p\beta_{n}+(1-\zeta_{n})V_{n}(0,\tau_n+t)\big\},\\
\hspace{-.55in}\qquad &&=\max\big\{ -(1-p_{n})\frac{1-(1-\zeta_{n})^{t}}{\zeta_{n}}\lambda\\
\hspace{-.55in}\qquad &&\qquad\qquad +p_{n}\frac{1-(1-\zeta_{n})^{\tau_n+t}}{\zeta_{n}}(-\lambda+\zeta_{n}\beta_{n}),\\
\hspace{-.55in}\qquad &&\qquad\qquad -(1-p_{n})\frac{1-(1-\zeta_{n})^{t+1}}{\zeta_{n}}\lambda\\
\hspace{-.55in}\qquad &&\qquad\qquad +p_{n}\frac{1-(1-\zeta_{n})^{\tau_n+t+1}}{\zeta_{n}}(-\lambda+\zeta_{n}\beta_{n})\big\},
\end{eqnarray*}
Since $p_{n}> c_{n}$, we have  $V_{n}(0,\tau_n+t+1)=-(1-p_{n})\frac{1-(1-\zeta_{n})^{t+1}}{\zeta_{n}}\lambda+p_{n}\frac{1-(1-\zeta_{n})^{\tau_n+t+1}}{\zeta_{n}}(-\lambda+\zeta_{n}\beta_{n})$. The optimal scheduling decision at state $(0,\tau_n+t+1)$ is to transmit the predicted packet. 
This completes  the proof of (\ref{eq:value-fun-pq}). 
The second case (ii) and Part (B) can similarly be shown as above. 
\end{proof}

\section{Proof of Theorem \ref{throughput-simple-p-q}} 
We prove Theorem \ref{throughput-simple-p-q} here. 
\begin{proof} (Theorem \ref{throughput-simple-p-q})   
(Part (i)) For clarity, we write $g_{I}$ and $\phi^{*}_{I}$ as explicit functions of $\bm{q}$, i.e., $g_{I}(\lambda,\bm{q})$ and $\phi^{*}_{I}(\bm{q})$. 
Since $\phi^{*}_{I}(\bm{q})=\min_{\lambda}g_{I}(\lambda,\bm{q})$, in order to prove the result, we will show that for any fixed $\lambda$, $g_{I}(\lambda,\bm{q})$ is a non-increasing function of $\bm{q}$.  

We know that  if $\zeta_{n}\beta_{n}\le\lambda$, then $\frac{\partial g_{I}}{\partial q_{n}}=0$. Otherwise, taking derivative of (\ref{eq:gI-imperfect}) with respect to $q_n$, we get (note that $\tilde{a}_n$ is also a function of $q_n$): 
\begin{align*}
\frac{\partial g_{I}}{\partial q_{n}}=&\frac{a_{n}-a_{\max}p_{n}}{(p_{n}-q_{n})^{2}}v_{n}(\lambda)\\
&+\frac{(a_{\max}p_{n}-a_{n})p_{n}}{(p_{n}-q_{n})^{2}}\cdot\frac{1-(1-\zeta_{n})^{\tau_n}}{\zeta_{n}}(-\lambda+\zeta_{n}\beta_{n}) \\
=&\frac{a_{n}-a_{\max}p_{n}}{(p_{n}-q_{n})^{2}}\left[v_{n}(\lambda)-p_{n}\frac{1-(1-\zeta_{n})^{\tau_n}}{\zeta_{n}}(-\lambda+\zeta_{n}\beta_{n})\right] 
\end{align*}

We now claim $v_{n}(\lambda)\ge p_{n}\frac{1-(1-\zeta_{n})^{\tau_n}}{\zeta_{n}}(-\lambda+\zeta_{n}\beta_{n})$. To see this, note that if $p_{n}\le c_{n}$, then $v_{n}(\lambda)=p_{n}\frac{1-(1-\zeta_{n})^{\tau_n}}{\zeta_{n}}(-\lambda+\zeta_{n}\beta_{n})$. Otherwise, if $p_{n}> c_{n}$, the optimal pre-service policy is to transmit the packet at every time-slot in the prediction window. 
But $p_{n}\frac{1-(1-\zeta_{n})^{\tau_n}}{\zeta_{n}}(-\lambda+\zeta_{n}\beta_{n})$ is the value function if the packet is not pre-served. Thus, $v_{n}(\lambda)\ge p_{n}\frac{1-(1-\zeta_{n})^{\tau_n}}{\zeta_{n}}(-\lambda+\zeta_{n}\beta_{n})$. 
Combining this with the fact that $\frac{a_{n}}{a_{\max}}\le p_{n}$, we have $\frac{\partial g_{I}}{\partial q_{n}}\le0$, which finishes the proof for part (i). 


(Part (ii))  When $\bm{q}=\bm{0}$, we have: 
\begin{align*}
g_{I}(\lambda,\bm{p})=&\,\lambda B +\sum_{n:\zeta_{n}\beta_{n}>\lambda}\frac{a_{n}}{p_{n}}v_{n}(\lambda) \\
=&\,\lambda B+\sum_{p_{n}\le c_{n}}a_{n}\frac{1-(1-\zeta_{n})^{\tau_n}}{\zeta_{n}}(-\lambda+\zeta_{n}\beta_{n}) \\
&\,\,\quad+\sum_{p_{n}> c_{n}}\frac{a_{n}}{p_{n}}\bigg[-(1-p_{n})\frac{1-(1-\zeta_{n})^{D_{n}}}{\zeta_{n}}\lambda\\
&\,\,\quad+p_{n}\frac{1-(1-\zeta_{n})^{\tau_n+D_{n}}}{\zeta_{n}}(-\lambda+\zeta_{n}\beta_{n})\bigg],
\end{align*}
where $n$ satisfies $\zeta_{n}\beta_{n}>\lambda$.
Similar to  the proof of Part (i), we are going to show $\frac{\partial g_{I}}{\partial p_{n}}\ge0$. 

 Apparently we have $\frac{\partial g_{I}}{\partial p_{n}}=0$ if $\zeta_{n}\beta_{n}\le\lambda$. 
Now suppose $\zeta_{n}\beta_{n}>\lambda$. If $p_{n}> c_{n}$, then
\begin{equation*}
\frac{\partial g_{I}}{\partial p_{n}}=\frac{a_{n}[1-(1-\zeta_{n})^{D_{n}}]}{p_{n}^{2}\zeta_{n}}\lambda\ge0. 
\end{equation*}
Thus, $g_{I}(\lambda,\bm{p})$ is non-decreasing in $p_{n}$ on $(c_{n},1]$. 

If $p_{n}\le c_{n}$, let $\Delta \bm{p_{n}}=(0,\dots,\Delta p_{n},0,\dots)$ with $p_{n}\ge0$. If $p_{n}+\Delta p_{n}\le c_{n}$ then $g_{I}(\lambda,\bm{p}+\Delta \bm{p_{n}})=g_{I}(\lambda,\bm{p})$. 
Otherwise, if $p_{n}+\Delta p_{n}> c_{n}$ then $g_{I}(\lambda,\bm{p}+\Delta \bm{p_{n}})\ge g_{I}(\lambda,\bm{p})$. Hence, $g_{I}(\lambda,\bm{p})$ is non-decreasing in $p_{n},\forall n$. 
%
\end{proof}



\section{Proof of Theorem \ref{theorem:monotone}} \label{apd:monotone}
\begin{proof} (Theorem \ref{theorem:monotone}) 
First we show that the value function has the following properties:
\begin{enumerate}[(a)]
\item $V_{n}(0,\tau,i)<\beta_{n},\forall n,i,0\le\tau<\tau_n+D_{n}$,
\item $V_{n}(0,\tau,i)\le V_{n}(0,\tau+1,i),\forall n,i,0\le\tau<\tau_n+D_{n}$. 
\end{enumerate}
Property (a) can be easily verified by induction on $\tau$. 
Now, we prove (b)  by induction on $\tau$. 

First, if $\tau=0$, we see that (b) holds, since $V_{n}(0,1,i)=\max_{e\in\mathcal{E}} [-\lambda e+\zeta_{n}(i,e)\beta_{n}]\ge0,\forall i$.

Now suppose (b) holds for $\tau=0,\dots,t$, we show that it holds for $\tau=t+1$. To this end, we have: 
\begin{align*}
V_{n}(0,t+1,i)=&\max_{e\in\mathcal{E}} \bigg[-\lambda e+\zeta_{n}(i,e)\beta_{n}\\
&+(1-\zeta_{n}(i,e))\sum_{j}P_{n}^{i,j}V_{n}(0,t,j)\bigg],\\
V_{n}(0,t,i)=&\max_{e\in\mathcal{E}} \bigg[-\lambda e+\zeta_{n}(i,e)\beta_{n}\\
&+(1-\zeta_{n}(i,e))\sum_{j}P_{n}^{i,j}V_{n}(0,t-1,j)\bigg].
\end{align*} 
Since $V_{n}(0,t,j)\ge V_{n}(0,t-1,j),\forall j$, for a fixed $e$, 
\begin{eqnarray*}
-\lambda e+\zeta_{n}(i,e)\beta_{n}+(1-\zeta_{n}(i,e))\sum_{j}P_{n}^{i,j}V_{n}(0,t,j)\\
\ge-\lambda e+\zeta_{n}(i,e)\beta_{n}+(1-\zeta_{n}(i,e))\sum_{j}P_{n}^{i,j}V_{n}(0,t-1,j). 
\end{eqnarray*}
Thus,  $V_{n}(0,t+1,i)\ge V_{n}(0,t,i),\forall i$.

If $\mathcal{E}$ is continuous, we have from \eqref{eq:dp} that 
\begin{eqnarray*}
\hspace{-.3in}&&e_{n}^{*}(0, \tau, i)=\argmax_{e\in\mathcal{E}} \bigg[-\lambda e+\zeta_{n}(i,e)\beta_{n}\\
\hspace{-.3in}&&\qquad\qquad\qquad\qquad+(1-\zeta_{n}(i,e))\sum_{j}P_{n}^{i,j}V_{n}(0,\tau-1,j)\bigg]. 
\end{eqnarray*}
By setting the derivative of the right hand side with respect to $e$ to $0$, we  obtain: 
\begin{equation}
\frac{d\zeta_{n}(i,e)}{de}=\frac{\lambda}{\beta_{n}-\sum_{j}P_{n}^{i,j}V_{n}(0,\tau-1,j)}.
\end{equation}
Since $\zeta_{n}(i,e)$ is a concave function of $e$, from Property (a) and (b) we know \eqref{eq:policy-monotonicity-general} holds.

On the other hand, if $\mathcal{E}$ is discrete, denote $h_{n,\tau,i}(e)=-\lambda e+\zeta_{n}(i,e)\beta_{n}+(1-\zeta_{n}(i,e))\sum_{j}P_{n}^{i,j}V_{n}(0,\tau-1,j)$. 
To prove \eqref{eq:policy-monotonicity-general}, we will show that for any consumption levels $e_{l}\le e_{u}$, if $h_{n,\tau,i}(e_{l})\ge h_{n,\tau,i}(e_{u})$, then $h_{n,\tau+1,i}(e_{l})\ge h_{n,\tau+1,i}(e_{u})$. 
From this result, we can then conclude that any $e\geq e_{n}^{*}(0, \tau, i)$ will result in $h_{n,\tau+1,i}(e)\leq h_{n,\tau+1,i}(e_{n}^{*}(0, \tau, i))$, which proves the result. 

Since $h_{n,\tau,i}(e_{l})\ge h_{n,\tau,i}(e_{u})$, we have that $\lambda(e_{u}-e_{l})-\beta_{n}(\zeta_n(i, e_u)-\zeta_n(i, e_l))+(\zeta_n(i, e_u)-\zeta_n(i, e_l))\sum_{j}P_{n}^{i,j}V_{n}(0,\tau-1,j)\ge 0$. 
%
%
Therefore, 
\begin{align*}
&h_{n,\tau+1,i}(e_{l})- h_{n,\tau+1,i}(e_{u})\\
&=\lambda(e_{u}-e_{l})-\beta_{n}(\zeta_n(i, e_u)-\zeta_n(i, e_l))\\
&\quad+(\zeta_n(i, e_u)-\zeta_n(i, e_l))\sum_{j}P_{n}^{i,j}V_{n}(0,\tau,j) \\
&\ge\lambda(e_{u}-e_{l})-\beta_{n}(\zeta_n(i, e_u)-\zeta_n(i, e_l))\\
&\quad+(\zeta_n(i, e_u)-\zeta_n(i, e_l))\sum_{j}P_{n}^{i,j}V_{n}(0,\tau-1,j) \ge0.
\end{align*}
Here the last step follows since $V_{n}(0,\tau-1,j)\leq V_{n}(0,\tau,j)$ and $\zeta_n(i, e)$ is increasing in $e$. 
This completes the proof.
\end{proof}

\section{Proof of Theorem \ref{thm:value-general-perpre}} \label{apd:value-perpre}
\begin{proof} (Theorem \ref{thm:value-general-perpre})
We can easily verify that
$$0\le V_{n}(0,\tau,i)<\beta_{n},\forall n,i,0\le\tau\le \tau_n+D_{n}.$$
by induction on $\tau$.
%
%
First, we prove $V_{n}(0,\tau,i)\ge V_{n}^{l}(\tau)$ with induction. Recall that 
$$V_{n}^{l}(\tau)=\max_{e>0}\frac{1-[1-\zeta_{n}(i_{n}^{\min},e)]^{\tau}}{\zeta_{n}(i_{n}^{\min},e)}[-\lambda e+\zeta_{n}(i_{n}^{\min},e)\beta_{n}].$$
For $\tau=1$, we know $ V_{n}(0,1,i)=\max_{e}\{-\lambda e+\zeta_{n}(i,e)\beta_{n}\}\ge\max_{e>0}\{-\lambda e+\zeta_{n}(i_{n}^{\min},e)\beta_{n}\}$. 

Suppose for $\tau=1,\dots,t$, $V_{n}(0,\tau,i)\ge V_{n}^{l}(\tau)$ holds. We now  show that it also holds for $t+1$. 
We have from \eqref{eq:dp} that 
\begin{align*}
&V_{n}(0,t+1,i)\\
=&\max_{e\in\mathcal{E}} \bigg\{-\lambda e+\zeta_{n}(i,e)\beta_{n}\\
&\qquad +(1-\zeta_{n}(i,e))\sum_{j}P_{n}^{i,j}V_{n}(0,t,j)\bigg\} \\
\stackrel{(a)}{\ge}&  \max_{e>0} \bigg\{-\lambda e+\zeta_{n}(i_{n}^{\min},e)\beta_{n}\\
&\qquad+(1-\zeta_{n}(i_{n}^{\min},e))\sum_{j}P_{n}^{i,j}V_{n}(0,t,j)\bigg\} \\
\stackrel{(b)}{\ge}&\max_{e>0} \bigg\{-\lambda e+\zeta_{n}(i_{n}^{\min},e)\beta_{n}+(1-\zeta_{n}(i_{n}^{\min},e))\\
&\quad\times\max_{e>0}\frac{1-[1-\zeta_{n}(i_{n}^{\min},e)]^{t}}{\zeta_{n}(i_{n}^{\min},e)}[-\lambda e+\zeta_{n}(i_{n}^{\min},e)\beta_{n}]\bigg\} \\
\stackrel{(c)}{\ge}&\max_{e>0} \bigg\{-\lambda e+\zeta_{n}(i_{n}^{\min},e)\beta_{n}+(1-\zeta_{n}(i_{n}^{\min},e))\\
&\quad\times\frac{1-[1-\zeta_{n}(i_{n}^{\min},e)]^{t}}{\zeta_{n}(i_{n}^{\min},e)}[-\lambda e+\zeta_{n}(i_{n}^{\min},e)\beta_{n}]\bigg\} \\
=&\max_{e>0}\frac{1-[1-\zeta_{n}(i_{n}^{\min},e)]^{t+1}}{\zeta_{n}(i_{n}^{\min},e)}[-\lambda e+\zeta_{n}(i_{n}^{\min},e)\beta_{n}]. 
\end{align*}
Here (a) is due to the fact that $\zeta_{n}(i_{n}^{\min},e)\leq \zeta_{n}(i,e)$ for all $e$ and $V_{n}(0,t,j)\leq\beta_n$. (b) is due to induction and (c) is due to the $\max$ operator. 
This completes the proof of the left inequality. 

Next we prove $V_{n}(0,\tau,i)\le V_{n}^{u}(\tau)$ also by induction. Recall that  
$V_{n}^{u}(\tau)=\sum_{z=1}^{\tau}\max_{e}\{-\lambda e+\zeta_{n}(i_{n}^{\max},e)(\beta_{n}-\max\{0,V_{n}^{l}(z-1)\})\}$.
First, it is clear that  $ V_{n}(0,1,i)=\max_{e}\{-\lambda e+\zeta_{n}(i,e)\beta_{n}\}\le\max_{e}\{-\lambda e+\zeta_{n}(i_{n}^{\max},e)\beta_{n}\}$.

Suppose $V_{n}(0,\tau,i)\le V_{n}^{u}(\tau)$ holds for $\tau=1,\dots,t$, we show that it holds for $\tau=t+1$. From \eqref{eq:dp}, we have
\begin{align*}
&V_{n}(0,t+1,i)\\ 
=&\max_{e\in\mathcal{E}} \bigg\{-\lambda e+\zeta_{n}(i,e)\beta_{n}\\
&\qquad +(1-\zeta_{n}(i,e))\sum_{j}P_{n}^{i,j}V_{n}(0,t,j)\bigg\} \\
\stackrel{(d)}{\le}& \max_{e\in\mathcal{E}} \bigg\{-\lambda e+\zeta_{n}(i_{n}^{\max},e)\beta_{n}\\
&\qquad+(1-\zeta_{n}(i_{n}^{\max},e))\sum_{j}P_{n}^{i,j}V_{n}(0,t,j)\bigg\} \\
\stackrel{(e)}{\le}&\max_{e\in\mathcal{E}} \bigg\{-\lambda e+\zeta_{n}(i_{n}^{\max},e)\beta_{n}\\
&\qquad+V_{n}^{u}(t)-\zeta_{n}(i_{n}^{\max},e)\max[0,V_{n}^{l}(t)]\bigg\} \\
=&\max_{e\in\mathcal{E}} \bigg\{-\lambda e+\zeta_{n}(i_{n}^{\max},e)(\beta_{n}-\max\{0,V_{n}^{l}(t)\})\bigg\}\\
&\qquad+V_{n}^{u}(t)\\
=&\,V_{n}^{u}(t+1). 
\end{align*}
Here (d) holds similarly because $\zeta_{n}(i_{n}^{\max},e)\geq \zeta_{n}(i,e)$ for all $e$ and $V_{n}(0,t,j)\leq\beta_n$, and (e) follows from induction and the definition of $V_{n}^{u}(t)$. 
This finishes the proof. 
\end{proof}


\section{Proof of Theorem \ref{thm:value-general-impre}} \label{apd:value-impre}
\begin{proof} (Theorem \ref{thm:value-general-impre})
First, we can easily verify that $0\le V_{n}(0,\tau_n+w,i)<p_{n}\beta_{n}$ for all $n,i$ and $0<w\le D_{n}$ by induction on $w$. 

Now we start by  proving $V_{n}(0,\tau+w,i)\ge\tilde{V}_{n}^{l}(\tau_n+w)$ for $0<w \le D_n$. 
Recall that for $0<\tau<\tau_n$, $\tilde{V}_{n}^{l}(\tau)=V_{n}^{l}(\tau)$, and for $\tau_n\le\tau\le \tau_n+D_{n}$,  $\tilde{V}_{n}^{l}(\tau)$ is defined in (\ref{eq:value-general-impre-lower}). 
%
%
When $w=1$, 
\begin{align*}
&V_{n}(0,\tau_n+1,i)\\
=&\max_{e\in\mathcal{E}} \bigg[-\lambda e+\zeta_{n}(i,e)p_{n}\beta_{n}\\
&\qquad +p_{n}(1-\zeta_{n}(i,e))\sum_{j}P_{n}^{i,j}V_{n}(0,\tau_n,j)\bigg]\\
\stackrel{(a)}{\ge}&\max_{e>0} \bigg[-\lambda e+\zeta_{n}(i_{n}^{\min},e)p_{n}\beta_{n}\\
&\qquad+p_{n}(1-\zeta_{n}(i_{n}^{\min},e))\sum_{j}P_{n}^{i,j}V_{n}(0,\tau_n,j)\bigg] \\
\stackrel{(b)}{\ge}&\max_{e>0} \bigg[-\lambda e+\zeta_{n}(i_{n}^{\min},e)p_{n}\beta_{n}\\
&\qquad+p_{n}(1-\zeta_{n}(i_{n}^{\min},e))V_{n}^{l}(\tau_n)\bigg] \\
\ge&\max_{e>0} \bigg[-(1-p_{n})\lambda e+p_{n}(-\lambda e+\zeta_{n}(i_{n}^{\min},e)\beta_{n})\\
&\qquad+p_{n}(1-\zeta_{n}(i_{n}^{\min},e))V_{n}^{l}(\tau_n)\bigg] \\
\stackrel{(c)}{\ge}&\,\tilde{V}_{n}^{l}(\tau_n+1).
\end{align*}
Here the first equality is due to $V_{n}(1,\tau_n,j)=  p_{n}\beta_{n}$ in  (\ref{eq:dp-aug-4}). Inequality (a) follows from the fact that $\zeta_{n}(i_{n}^{\min},e)\leq \zeta_{n}(i,e)$ for all $e$ and $V_{n}(0,\tau_n,j)\leq\beta_n$. (b) follows from induction and (c) follows from the definition of $\tilde{V}_{n}^{l}(\tau_n+1)$. 

Then, suppose for $w=1,\dots,t$, $V_{n}(0,\tau+w,i)\ge\tilde{V}_{n}^{l}(\tau_n+w)$, we show that it also holds for $w=t+1$.
\begin{align*}
&V_{n}(0,\tau_n+t+1,i)\\
=&\max_{e\in\mathcal{E}} \bigg[-\lambda e+\zeta_{n}(i,e)p_{n}\beta_{n}\\
&\qquad+(1-\zeta_{n}(i,e))\sum_{j}P_{n}^{i,j}V_{n}(0,\tau_n+t,j)\bigg]\\
\ge&\max_{e>0} \bigg[-\lambda e+\zeta_{n}(i_{n}^{\min},e)p_{n}\beta_{n}\\
&\qquad+(1-\zeta_{n}(i_{n}^{\min},e))\sum_{j}P_{n}^{i,j}V_{n}(0,\tau_n+t,j)\bigg] \\
\ge&\max_{e>0} \bigg[-\lambda e+\zeta_{n}(i_{n}^{\min},e)p_{n}\beta_{n}\\
&\qquad+(1-\zeta_{n}(i_{n}^{\min},e))\tilde{V}_{n}^{l}(\tau_n+t)\bigg] \\
\ge&\max_{e>0} \bigg[-(1-p_{n})\lambda e+p_{n}(-\lambda e+\zeta_{n}(i_{n}^{\min},e)\beta_{n})\\
&\qquad+(1-\zeta_{n}(i_{n}^{\min},e))\tilde{V}_{n}^{l}(\tau_n+t)\bigg] \\
\ge&\,\tilde{V}_{n}^{l}(\tau_n+t+1).
\end{align*}
Here the inequalities are derived similarly as in  the $w=1$ case. With a similar method we can prove $V_{n}(0,\tau+w,i)\ge\tilde{V}_{n}^{l}(\tau_n)$ for $0<w \le D_n$. Thus, the lower bound is proved.

Next, we prove the upper bound. 
Recall that for $0<\tau<\tau_n$, $\tilde{V}_{n}^{u}(\tau)=V_{n}^{u}(\tau)$, and for $\tau_n\le\tau\le \tau_n+D_{n}$,  $\tilde{V}_{n}^{u}(\tau)$ is defined in (\ref{eq:value-general-impre-upper}). 
When $w=1$, we have: 
\begin{align*}
&V_{n}(0,\tau_n+1,i)\\
=&\max_{e\in\mathcal{E}} \bigg[-\lambda e+\zeta_{n}(i,e)p_{n}\beta_{n}\\
&\qquad+p_{n}(1-\zeta_{n}(i,e))\sum_{j}P_{n}^{i,j}V_{n}(0,\tau_n,j)\bigg] \\
\stackrel{(d)}{\le}&\max_{e\in\mathcal{E}} \bigg[-\lambda e+\zeta_{n}(i_{n}^{\max},e)p_{n}\beta_{n}\\
&\qquad+p_{n}(1-\zeta_{n}(i_{n}^{\max},e))\sum_{j}P_{n}^{i,j}V_{n}(0,\tau_n,j)\bigg] \\
\stackrel{(e)}{\le}&\max_{e\in\mathcal{E}} \bigg[-\lambda e+\zeta_{n}(i_{n}^{\max},e)p_{n}\beta_{n}\\
&\qquad+p_{n}V_{n}^{u}(\tau_n)-p_{n}\zeta_{n}(i_{n}^{\max},e)\max[0,V_{n}^{l}(\tau_n)]\bigg]\\
\stackrel{(f)}=&\max_{e\in\mathcal{E}} \bigg[-\lambda e+\zeta_{n}(i_{n}^{\max},e)(p_{n}\beta_{n}-\max\{0,\tilde{V}_{n}^{l}(\tau_n)\})\bigg]\\
&\qquad+\tilde{V}_{n}^{u}(\tau_n)\\
=&\,\tilde{V}_{n}^{u}(\tau_n+1).
\end{align*}
Here (d) follows from  $\zeta_{n}(i_{n}^{\max},e)\geq \zeta_{n}(i,e)$ for all $e$ and $V_{n}(0,\tau_n,j)\leq\beta_n$. (e) follows from induction and the definition of $V_{n}^{u}(\tau_n)$ and (f) is due to $\tilde{V}_{n}^{l}(\tau_n)=p_{n}V_{n}^{l}(\tau_n)$ and $\tilde{V}_{n}^{u}(\tau_n)=p_{n}V_{n}^{u}(\tau_n)$.

Suppose $V_{n}(0,\tau+w,i)\le\tilde{V}_{n}^{u}(\tau_n+w)$   for $w=1, \dots, t$, we show that it also holds for $\tau=t+1$. We have from \eqref{eq:dp-aug} that:  
\begin{align*}
&V_{n}(0,\tau_n+t+1,i)\\
=&\max_{e\in\mathcal{E}} \bigg[-\lambda e+\zeta_{n}(i,e)p_{n}\beta_{n}\\
&\qquad+(1-\zeta_{n}(i,e))\sum_{j}P_{n}^{i,j}V_{n}(0,\tau_n+t,j)\bigg] \\
\le&\max_{e\in\mathcal{E}} \bigg[-\lambda e+\zeta_{n}(i_{n}^{\max},e)p_{n}\beta_{n}\\
&\qquad+(1-\zeta_{n}(i_{n}^{\max},e))\sum_{j}P_{n}^{i,j}V_{n}(0,\tau_n+t,j)\bigg] \\
\le&\max_{e\in\mathcal{E}} \bigg[-\lambda e+\zeta_{n}(i_{n}^{\max},e)p_{n}\beta_{n}+\tilde{V}_{n}^{u}(\tau_n+t)\\
&\qquad-\zeta_{n}(i_{n}^{\max},e)\max\{0,\tilde{V}_{n}^{l}(\tau_n), \tilde{V}_{n}^{l}(\tau_n+t)\}\bigg] \\
=&\max_{e\in\mathcal{E}} \bigg[-\lambda e+\zeta_{n}(i_{n}^{\max},e)(p_{n}\beta_{n}\\
&\qquad-\max\{0,\tilde{V}_{n}^{l}(\tau_n), \tilde{V}_{n}^{l}(\tau_n+t)\})\bigg]+\tilde{V}_{n}^{u}(\tau_n+t)\\
=&\,\tilde{V}_{n}^{u}(\tau_n+t+1).
\end{align*}
Here the inequalities similarly follow as above. This completes the proof of the upper bound. 
\end{proof}



%
%
%

\end{document}